\newcommand{\ket}[1]{|#1\rangle}
\newcommand{\bra}[1]{\langle#1|}
\newcommand{\braket}[2]{\langle#1|#2\rangle}
\newtheorem{theorem}{Theorem}[section]
\newtheorem{lemma}{Lemma}[section]
\newtheorem{defn}{Definition}[section]
\newtheorem{cor}{Corollary}[section]
\begin{document}

\title{Separability of diagonal symmetric states: a quadratic conic optimization problem}
\date{\today}
\author{Jordi Tura}
\affiliation{ICFO - Institut de Ciencies Fotoniques, The Barcelona Institute of Science and Technology, 08860 Castelldefels (Barcelona), Spain}
\affiliation{Max-Planck-Institut f\"ur Quantenoptik, Hans-Kopfermann-Stra{\ss}e 1, 85748 Garching, Germany}
\email{jordi.tura@mpq.mpg.de}
\author{Albert Aloy}
\affiliation{ICFO - Institut de Ciencies Fotoniques, The Barcelona Institute of Science and Technology, 08860 Castelldefels (Barcelona), Spain}
\author{Rub\'en Quesada}
\affiliation{Departament de F\'isica, Universitat Aut\`onoma de Barcelona, E-08193 Bellaterra, Spain}
\author{Maciej Lewenstein}
\affiliation{ICFO - Institut de Ciencies Fotoniques, The Barcelona Institute of Science and Technology, 08860 Castelldefels (Barcelona), Spain}
\affiliation{ICREA, Pg. Lluís Companys 23, E-08010 Barcelona, Spain}
\author{Anna Sanpera}
\affiliation{Departament de F\'isica, Universitat Aut\`onoma de Barcelona, E-08193 Bellaterra, Spain}
\affiliation{ICREA, Pg. Lluís Companys 23, E-08010 Barcelona, Spain}

\maketitle

\begin{abstract}
We study the separability problem in mixtures of Dicke states \textit{i.e.}, the separability of the so-called Diagonal Symmetric (DS) states. First, we show that separability in the case of DS in ${\mathbbm C}^d\otimes {\mathbbm C}^d$ (symmetric qudits) can be reformulated as a quadratic conic optimization problem. This connection allows us to exchange concepts and ideas between quantum information and this field of mathematics. For instance, copositive matrices can be understood as indecomposable entanglement witnesses for DS states. As a consequence, we show that positivity of the partial transposition (PPT) is sufficient and necessary for separability of DS states for $d \leq 4$. Furthermore, for $d \geq 5$, we provide analytic examples of PPT-entangled states. Second, we develop new sufficient separability conditions beyond the PPT criterion for bipartite DS states. Finally, we focus on $N$-partite DS qubits, where PPT is known to be necessary and sufficient for separability. In this case, we present a family of almost DS states that are PPT with respect to each partition but nevertheless entangled.
\end{abstract}

\section{Introduction}

Entanglement \cite{HorodeckiRMP} is one of the most striking features of quantum physics, departing entirely from any classical analogy. Furthermore, entanglement is a key resource for quantum information processing tasks, such as quantum cryptography \cite{Ekert91} or metrology \cite{gross2010nonlinear}. Importantly, entanglement is a necessary resource to enable the existence of Bell correlations \cite{bell1964einstein, WernerState}, which are the resource device-independent quantum information processing is built upon \cite{DIQIPAcin}. Despite its both fundamental and applied interest, the so-called separability problem (\textit{i.e.}, deciding whether a quantum state is entangled or not, given its description) remains open except for very specific cases. Although this problem has been shown to be, in the general case, NP-hard \cite{Gurvits03}, it remains unclear whether this is also the case for physical systems of interest, where symmetries appear in a natural way.

To tackle the separability problem, simple tests have been put forward, which give a partial characterization of entanglement. The most celebrated entanglement detection criterion is the so-called positivity under partial transposition (PPT) criterion \cite{Peres}. It states that every state that is not entangled must satisfy the PPT criterion. Therefore, states that break the PPT criterion are entangled. Unfortunately, the converse is true only in very low-dimensional systems \cite{HHH96}, such as two qubit \cite{stormer1963} or qubit-qutrit systems \cite{WORONOWICZ1976165}. Examples of entangled states satisfying the PPT criterion have been found for strictly larger-dimensional systems \cite{Horodecki3x3-2x4}.

Symmetries are ubiquitous in Nature and they play a fundamental role in finding an efficient description of physical systems. The so-called symmetric states constitute an important class of quantum systems to describe systems of indistinguishable particles \cite{Eckert02}. Symmetric states can be mapped to spin systems that are invariant under the exchange of particles and, moreover, they are spanned solely by the largest-spin subspace in the Schur-Weyl duality representation \cite{goodman2009symmetry}. The Dicke states \cite{Dicke54} provide a convenient basis to represent symmetric states. Moreover, Dicke states are also experimentally available \cite{Dicke3K, Dicke6Photon, lu2007experimental} and they also appear naturally as ground states of physically relevant Hamiltonians, such as the isotropic Lipkin-Meshkov-Glick model \cite{LMG-JILA}. 
Much theoretical study has been devoted to the characterization of entanglement in qubit symmetric states: $3$-qubit symmetric states are separable if, and only if, they satisfy the PPT criterion \cite{Eckert02}, but this is no longer the case already for $N\geq 4$ \cite{TuraPRA12, AugusiakPRA12}. Despite diverse separability criteria exist for symmetric states (see e.g. \cite{PhDTura}), the separability problem remains still open.

Mixtures of Dicke states are symmetric states that are diagonal in the Dicke basis. These constitute an important class of quantum states which naturally arise e.g. in dissipative systems such as photonic or plasmonic one-dimensional waveguides \cite{GonzalezTudela-PorrasPRL}. Mixtures of Dicke states form a small subclass of the symmetric states. They are the so-called Diagonal Symmetric (DS) states. In this context, the separability problem has also gained interest. For instance, the best separable approximation (BSA, \cite{LewensteinSanperaSep}) has been found analytically for DS states for $N$-qubits \cite{QuesadaBSA}. In \cite{WolfePRL14}, it was conjectured that $N$-qubit DS states are separable if, and only if, they satisfy the PPT criterion with respect to every bipartition.
The conjecture was proven by N. Yu in \cite{KoreanGuy} where, moreover, he observed that PPT is a sufficient and necessary condition for bipartite DS states of qudits with dimension $3$ and $4$, but becomes NP-hard for larger dimensions. Within the $N$-qubit DS set it has been shown in \cite{QuesadaRanaSanpera} that there is a family of states that violate the weak Peres conjecture \cite{Peres1999}:
those states are PPT-bound entangled with respect to one partition, but they violate a family of permutationally invariant two-body Bell inequalities \cite{SciencePaper, AnnPhys, ThetaBodies}.

In experiments, PPT-entangled states have also been recently observed. In the multipartite case, the Smolin state has been prepared with four photons, using the polarization degree of freedom for the qubit encoding \cite{BoundEntanglement4photons, BoundEntanglement4photonsBis}. Very recently, although bound entanglement is the hardest to detect \cite{BaeBoundEnt}, the Leiden-Vienna collaboration has reported the observation of bound entanglement in the bipartite case with two twisted photons, combining ideas of complementarity \cite{ComplementarityNJP} and Mutually Unbiased Bases (MUBs) \cite{SpenglerMUB}.

Here, we independently recover the results of N. Yu \cite{KoreanGuy} by reformulating the problem in terms of optimization in the cone of completely-positive\footnote{
Throughout this paper, the term \textit{completely-positive} corresponds to the definition given in Def. \ref{def:CP} and it is not to be confounded with the concept of a completely positive map that arises typically in a quantum information context.} matrices. First, we revisit the problem of determining separability of two DS qudits in arbitrary dimensions. We show that it can be reformulated in terms of a quadratic conic optimization problem \cite{berman2015open}. In particular, we show that separability in DS states is equivalent to the membership problem in the set of completely-positive matrices. The equivalence between these two problems allows us to import/export ideas between entanglement theory and non-convex quadratic optimization\footnote{
Quadratic conic optimization problems appear naturally in many situations (see \cite{berman2015open} and references therein). These include economic modelling \cite{gray1980nonnegative}, block designs \cite{HornMatrix}, maximin efficiency-robust tests \cite{bose1995maximin}, even Markovian models of DNA evolution \cite{berman2003completely}. Recently, they have found their application in data mining and clustering \cite{ding2005equivalence}, as well as in dynamical systems and control \cite{berman2012characterisation, mason2007linear}.}.
Second, we provide examples of entangled PPTDS states and entanglement witnesses detecting them. Third, we give further characterization criteria for separability in DS states in terms of the best diagonal dominant decomposition. Finally, we present a family of $N$-qubit almost-DS states that are PPT with respect to each bipartition, but nevertheless entangled. The word \textit{almost} here means that by adding an arbitrarily small off-diagonal term (GHZ coherence) to a family of separable DS $N$-qubits, the state becomes PPT-entangled.
\newpage	

The paper is organized as follows. In Section \ref{sec:Preliminaries} we establish the notation and the basic definitions that we are going to use in the next sections. In Section \ref{sec:Separability} we discuss the separability problem for bipartite DS states of arbitrary dimension, with particular emphasis in their connection to non-convex quadratic optimization problems. In Section \ref{sec:sufficient} we provide sufficient criteria to certify either separability or entanglement. In Section \ref{sec:Class} we present a class of PPT-entangled multipartite qubit almost-diagonal symmetric states. In Section \ref{sec:Conclusions} we conclude and discuss further research directions. Finally, in the Appendix we present some proofs, examples and counterexamples that complement the results discussed in the text.

\section{Preliminaries}
\label{sec:Preliminaries}
In this section we set the notation and define the basic concepts that we are going to use throughout the paper.

\subsection{The separability problem}
\begin{defn}

Consider a bipartite quantum state $\rho$ acting on $\mathbbm{C}^d \otimes \mathbbm{C}^{d'}$. The state $\rho$ is positive semi-definite ($\rho \succeq 0$) and normalized ($\mathrm{Tr} \rho = 1$). A state $\rho$ is \textit{separable} if it can be written as
\begin{equation}
\rho = \sum_{i} p_i \rho_i^{A} \otimes \rho_i^{B},
\label{eq:sep}
\end{equation}
where $p_i$ form a convex combination ($p_i \geq 0$ and $\sum\limits_i p_i=1$) and $\rho_i^{A}$ ($\rho_i^{B}$) are quantum states acting on Alice's (Bob's) subsystem; \textit{i.e.}, they are positive semidefinite operators of trace one. If a decomposition of $\rho$ of the form of Eq. (\ref{eq:sep}) does not exist, then $\rho$ is \textit{entangled}.
\end{defn}

The separability problem; i.e., deciding whether a quantum state $\rho$ admits a decomposition of the form of Eq. (\ref{eq:sep}) is, in general, an NP-hard problem \cite{Gurvits03}. However, there exist simple tests that provide sufficient conditions to certify that $\rho$ is entangled \cite{HorodeckiRMP}. One of the most renowned separability criteria is the positivity under partial transposition (PPT) criterion \cite{Peres}. It states that, if $\rho$ can be decomposed into the form of Eq. (\ref{eq:sep}), then the state $(\mathbbm{1} \otimes T)[\rho]$ must be positive semi-definite, where $T$ is the transposition with respect to the canonical basis of $\mathbbm{C}^{d'}$. Such state is denoted $\rho^{T_B}$, the partial transposition of $\rho$ on Bob's side. Because $(\rho^{T_B})^T = \rho^{T_A}$, the PPT criterion does not depend on which side of the bipartite system the transposition operation is applied on. Breaking PPT criterion is a necessary and sufficient condition for entanglement only in the two qubit \cite{stormer1963} and qubit-qutrit \cite{WORONOWICZ1976165} cases, and there exist counterexamples for states of strictly higher physical dimension \cite{Horodecki3x3-2x4}.

In the multipartite case, the definition of separability given in Eq. (\ref{eq:sep}) naturally generalizes to $N$ subsystems.

\begin{defn}
A quantum state $\rho$ acting on $\mathbbm{C}^{d_1}\otimes \cdots \otimes \mathbbm{C}^{d_N}$ is \textit{fully separable} if it can be written as
\begin{equation}
\rho = \sum_{i} p_i \rho_i^{(A_1)} \otimes \cdots \otimes \rho_i^{(A_N)},
\label{eq:fullysep}
\end{equation}
where $\rho_i^{(A_k)}$ are quantum states acting on the $k$-th subsystem and $p_i$ form a convex combination.
\end{defn}
Therefore, the PPT criterion also generalizes to $2^{\lfloor N/2\rfloor}$ criteria, where $\lfloor \cdot \rfloor$ is the floor function, depending on which subsystems one chooses to transpose.

\subsubsection{Entanglement witnesses}
Let us denote by $\cal{D}_{\mathrm{sep}}$ the set of separable sates (cf. Eqs. (\ref{eq:sep}), (\ref{eq:fullysep})). This set is closed and convex. Therefore it admits a dual description in terms of its dual cone, which we denote $${\cal P} = \{ W = W^\dagger \text{ s. t. } \langle W, \rho \rangle \geq 0 \ \forall \rho \in {\cal D}_{\mathrm{sep}}\},$$
where the usual Hilbert-Schmidt scalar product $\langle W, \rho \rangle = \mathrm{Tr} (W^\dagger \rho)$ is taken. The elements of ${\cal P}$ can be thus viewed as half-spaces containing ${\cal D}_{\mathrm{sep}}$. Of course, not every operator in $\cal P$ is useful to detect entangled states. In order to be non-trivial, one requires that $W$ has at least one negative eigenvalue. Such operators are called \textit{entanglement witnesses} (EW) \cite{TerhalEW99} and they form a non-convex set, denoted ${\cal W} = \{W \in {\cal P} \text{ s. t. } W \not \succeq 0\}$. A state $\rho$ is then separable if, and only if, $\mathrm{Tr} (W \rho) \geq 0$ for all $W \in {\cal W}$.

Among EWs, it is worth to make a distinction that relates them to the PPT criterion: decomposable and indecomposable EWs.
\begin{defn}
 \textit{Decomposable} EWs (DEWs) in a bipartite quantum system are those $W\in {\cal W}$ of the form 
\begin{equation}
W = P + Q^{T_B},
\label{eq:DEW}
\end{equation}
with $P \succeq 0$ and $Q \succeq 0$. \textit{Indecomposable} EWs (IEWs) are those EWs that are not of the form of Eq. (\ref{eq:DEW}).
\end{defn}
Although DEWs are easier to characterize \cite{SummerFellow}, they do not detect PPT-entangled states, because
\begin{equation}
\mathrm{Tr}(W\rho) = \mathrm{Tr}(P\rho)  +\mathrm{Tr} (Q^{T_B}\rho) = \mathrm{Tr}(P\rho)  +\mathrm{Tr} (Q\rho^{T_B}) \geq 0.
\end{equation}
In Section \ref{subsec:IEWsDSS} we construct EWs which detect entangled PPTDS states, therefore they correspond to indecomposable witnesses.

\section{Separability in diagonal symmetric states acting on $\mathbbm{C}^d \otimes \mathbbm{C}^d$.}
\label{sec:Separability}
In this section, we characterize the bipartite diagonal symmetric two-qudit states in terms of the separability and the PPT properties. We establish an equivalence between: (i) separability and the PPT property in DS states and (ii) quadratic conic optimization problems and their relaxations, respectively.

We first introduce the Dicke basis in its full generality and then we move to the two particular cases of interest to this paper: the case of $N$-qubits and the case of $2$-qudits.
One can think of the space spanned by the Dicke states as the linear subspace of $({\mathbbm C}^d)^{\otimes N}$ containing all permutationally invariant states.
\begin{defn}\label{def:DickeStatesGeneral}
Consider a multipartite Hilbert space $({\mathbbm C}^d)^{\otimes N}$ of $N$ qudits. The Dicke basis in that space consists of all vectors which are equal superpositions of $k_0$ qudits in the state $\ket{0}$, $k_1$ qudits in the state $\ket{1}$, etc., where the multiindex variable $\mathbf{k} = (k_0, \ldots, k_{d-1})$ forms a partition of $N$; \textit{i.e.}, $k_i \geq 0$ and $\sum_{i=0}^{d-1}k_i = N$. They can be written as
 \begin{equation}
  \ket{D_{\mathbf{k}}} \propto \sum_{\sigma \in {\mathfrak{S}_N}} \sigma(\ket{0}^{\otimes k_0} \ket{1}^{\otimes k_1} \cdots \ket{d-1}^{\otimes k_{d-1}}),
 \end{equation}
where $\sigma$ runs over all permutations of $N$ elements.
\end{defn}
The Dicke state has ${N \choose \mathbf{k}}$ different elements, where the quantity follows from the multinomial combinatorial quantity
\begin{equation}
 {N \choose \mathbf{k}} = \frac{N!}{k_0! k_1! \cdots k_{d-1}!}.
\end{equation}
Finally, recall that there are as many Dicke states as partitions of $N$ into $d$ (possibly empty) subsets; therefore, the dimension of the subspace of $({\mathbbm C}^d)^{\otimes N}$ is given by
\begin{equation}
 \dim [\{\ket{D_{\mathbf k}}: \mathbf{k} \vdash N\}] = {N + d-1 \choose d-1},
\end{equation}
where $\vdash$ denotes \textit{partition of}.

In this paper, we are particularly interested in the case of $N$-qubits and $2$-qudits:
\begin{itemize}
 \item $d=2$. For $N$-qubit states we shall use the notation $\ket{D_{\mathbf k}} \equiv \ket{D_{k}^N}$, where $k=k_1$ denotes the number of qubits in the excited ($\ket{1}$) state.
 Mixtures of Dicke states correspond to
\begin{eqnarray}
\rho=\sum\limits^N_{k=0}p_k\ket{D^N_k}\bra{D^N_k}.
\end{eqnarray}
 \item $N=2$. For bipartite $d$-level systems, we are going to denote the Dicke states by $\ket{D_{\mathbf k}} \equiv \ket{D_{ij}}$, where $i$ and $j$ are the indices (possibly repeated) of the non-zero $k_i$ and $k_j$. Since the terminology \textit{Dicke states} is often reserved for the multipartite case, we call $\ket{D_{ij}}$ simply \textit{symmetric states}.
\end{itemize}

In the bipartite case (Sections \ref{sec:Separability} and \ref{sec:sufficient}), we focus on diagonal symmetric states, given by Def. \ref{def:DSS}:

\begin{defn}\label{def:DSS}
Let $\rho$ be a state acting on a bipartite Hilbert space ${\cal H}_A \otimes {\cal H}_B = \mathbbm{C}^d \otimes \mathbbm{C}^d$. The state $\rho$ is said to be \emph{diagonal symmetric} (DS) if, and only if, $\rho$ can be written in the form
 \begin{equation}
  \rho = \sum_{0 \leq i \leq j < d} p_{ij}\ket{D_{ij}}\bra{D_{ij}},
  \label{eq:DSS}
 \end{equation}
where $p_{ij} \geq 0$, $\sum_{ij} p_{ij}=1$, $\ket{D_{ii}}: = \ket{ii}$ and $\ket{D_{ij}} = (\ket{ij} + \ket{ji})/\sqrt{2}$.
\end{defn}

In the computational basis, a DS $\rho$ is a $d^2 \times d^2$ matrix that is highly sparse. Therefore, it will be useful to associate a $d\times d$ matrix to $\rho$ that captures all its relevant information. We define the $M$-matrix of $\rho$ to be
\begin{defn}\label{def:M}
To every DS $\rho$ acting on $\mathbbm{C}^d \otimes \mathbbm{C}^d$, there is an associated $d \times d$ matrix $M(\rho)$, with non-negative entries
\begin{equation}
 \label{eq:def:M}
  M(\rho) := \left(
  \begin{array}{cccc}
   p_{00} & p_{01}/2 & \cdots & p_{0,d-1}/2\\
   p_{01}/2 & p_{11}& \cdots & p_{1,d-1}/2\\
   \vdots & \vdots & \ddots & \vdots\\
   p_{0,d-1}/2& p_{1,d-1}/2&\cdots&p_{d-1,d-1}
  \end{array}
  \right),
 \end{equation}
 which arises from the partially transposed matrix $\rho^{T_B}$.
\end{defn}
 
Notice that, while a DS state $\rho$ is always diagonal in the Dicke basis, its partial transposition (which is defined with respect to the computational basis) scrambles its elements. Then $\rho^{T_B}$ is block-diagonal in the Dicke basis and its blocks are $1 \times 1$ elements corresponding to $p_{ij}$ with $i < j$, and $M(\rho)$.
One can see the effect of the partial transposition operation on a DS state by inspecting the action of $T_B$ onto the elements $\ket{D_{ij}}\bra{D_{ij}}$ that compose Eq. (\ref{eq:DSS}):
\begin{itemize}
 \item If $i = j$, then $(\ket{D_{ii}}\bra{D_{ii}})^{T_B} = \ket{D_{ii}}\bra{D_{ii}}$, because $\ket{D_{ii}}= \ket{ii}$.
 \item If $i \neq j$, the action of the partial transposition is best seen by expanding $\ket{D_{ij}}$ onto the computational basis: $(\ket{D_{ij}}\bra{D_{ij}}) = \frac{1}{2}(\ket{ij}\bra{ij} + \ket{ij}\bra{ji} +\ket{ji}\bra{ij} +\ket{ji}\bra{ji})$. Therefore, two of the terms are left invariant
 and the remaining two are to be mapped as $(\ket{ij}\bra{ji}+\ket{ji}\bra{ij})^{T_B} = \ket{ii}\bra{jj}+\ket{jj}\bra{ii}$.
\end{itemize}
Thus, $M(\rho)$ is the submatrix corresponding to the elements indexed by $\ket{ii}\bra{jj}$ for $0 \leq i,j < d$ of $\rho^{T_B}$. Because there is no mixing between other rows or columns, we have that $\rho^{T_B}$ decomposes as the direct sum
\begin{equation}
 \rho^{T_B} = M(\rho) \bigoplus_{0 \leq i \neq j < d} \left(\frac{p_{ij}}{2}\right).
\end{equation}
Since $p_{ij} = p_{ji}$, we find that the $1 \times 1$ blocks appear all with multiplicity $2$.

Therefore, each $M(\rho)$ with non-negative entries summing $1$ is in one-to-one correspondence to a DS state $\rho$. In this section we characterize the separability properties of $\rho$ in terms of equivalent properties of $M(\rho)$, which are naturally related to quadratic conic optimization.

In quadratic conic optimization, one is interested in characterizing the so-called completely positive (CP) matrices, which are defined as follows
\begin{defn}
\label{def:CP}
Let $A$ be a $d\times d$ matrix. $A$ is \emph{completely positive} (CP) if, and only if, it admits a decomposition $A=B \cdot B^T$, where $B$ is a $d\times k$ matrix, for some $k\geq 1$, such that $B_{ij}\geq 0$.
\end{defn}
Matrices which are CP form a proper\footnote{Closed, convex, pointed and full-dimensional.} cone, which is denoted by ${\cal CP}_d$. Note that the sum of two CP matrices is a CP matrix and the multiplication of a CP matrix by a non-negative scalar is a CP matrix.

Given a non-convex optimization problem over the simplex, which is NP-hard in general, CP matrices translate the complexity of the problem by reformulating it as a linear problem in matrix variables over ${\cal CP}_d$. Therefore, they allow to shift all the difficulty of the original problem into the cone constraint. Precisely, every non-convex quadratic optimization problem over the simplex (LHS of Eq. (\ref{eq:CP-equivalence})) has an equivalent CP formulation (RHS of Eq. (\ref{eq:CP-equivalence})):
\begin{equation}
\max_{x_i \geq 0,\ \langle u | x \rangle = 1} \bra{x}Q \ket{x} = \max_{X \in {\cal CP}_d,\ \langle u |X|u \rangle = 1} \mathrm{Tr}(XQ),
\label{eq:CP-equivalence}
\end{equation}
where $\ket{u}$ is the unnormalized vector of ones and $Q$ is, without loss of generality\footnote{$Q$ can be assumed to be symmetric because $\bra{x}Q\ket{x} = (\bra{x}Q\ket{x})^T = \bra{x}Q^T\ket{x}$. It can be assumed to be positive semi-definite because adding $\mathbbm{1}$ to $(Q+Q^T)/2$ does not change the optimal $\ket{x}$; it only adds a bias to the maximum.}, symmetric and positive semi-definite.
Therefore, deciding membership in ${\cal CP}_d$ is NP-hard \cite{berman2015open}.

One can obtain, however, an upper bound on the optimization in Eq. (\ref{eq:CP-equivalence}) by observing that every CP matrix $A$ is positive semi-definite, because it allows for a factorization $A = B \cdot B^T$. Moreover, it is also entry-wise non-negative: $A_{ij} \geq 0$. This motivates Definition \ref{def:DNN}:
\begin{defn}
\label{def:DNN}
Let $A$ be a $d\times d$ matrix. $A$ is \emph{doubly non-negative} (DNN) if, and only if, $A\succeq 0$ and $A_{ij} \geq 0$.
\end{defn}

We are now ready to introduce the equivalences between the separability problem in DS states and quadratic conic optimization. After producing our results, we learned that these equivalences were independently observed by Nengkun Yu in \cite{KoreanGuy}. We nevertheless prove them in a different way.
\begin{theorem}
\label{thm:Winterfell}
 Let $\rho$ be a DS state acting on $\mathbbm{C}^d \otimes \mathbbm{C}^d$.
 \begin{equation}
  \rho \mbox{ is separable} \iff M(\rho) \mbox{ is CP}.
 \end{equation}
\end{theorem}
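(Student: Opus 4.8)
The plan is to prove both implications by exploiting two structural facts recorded above: that a DS state is supported on the symmetric subspace $\mathcal{S}\subset\mathbbm{C}^d\otimes\mathbbm{C}^d$ (each $\ket{D_{ij}}$ is symmetric), and that, by the one-to-one correspondence between $\rho$ and $M(\rho)$, it suffices to match $M$-matrices. Throughout I would use the description of $M(\rho)$ as the restriction of $\rho^{T_B}$ to the sector spanned by $\{\ket{ii}\}$.

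For the implication $\rho$ separable $\Rightarrow M(\rho)$ CP, the key step I would isolate first is a structural lemma: any separable state whose support lies in $\mathcal{S}$ admits a decomposition into \emph{symmetric} product projectors $\ket{e_\alpha}\bra{e_\alpha}\otimes\ket{e_\alpha}\bra{e_\alpha}$. I would prove this by taking an arbitrary separable decomposition $\rho=\sum_\alpha q_\alpha \ket{e_\alpha}\bra{e_\alpha}\otimes\ket{f_\alpha}\bra{f_\alpha}$ and testing it against the antisymmetric projector $P_{\mathrm{anti}}=\mathbbm{1}-P_{\mathcal S}$: since $\mathrm{Tr}(P_{\mathrm{anti}}\rho)=0$ while every summand $q_\alpha\bra{e_\alpha f_\alpha}P_{\mathrm{anti}}\ket{e_\alpha f_\alpha}$ is non-negative, each product vector $\ket{e_\alpha}\otimes\ket{f_\alpha}$ must itself be symmetric, which for a product vector forces $\ket{f_\alpha}\propto\ket{e_\alpha}$. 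Writing $\ket{e_\alpha}=\sum_i c_{\alpha,i}\ket{i}$, a direct evaluation of the entries $\bra{ii}\rho^{T_B}\ket{jj}=\bra{ij}\rho\ket{ji}$ then yields $M(\rho)_{ij}=\sum_\alpha q_\alpha |c_{\alpha,i}|^2|c_{\alpha,j}|^2$, i.e. $M(\rho)=\sum_\alpha w_\alpha w_\alpha^T$ with $(w_\alpha)_i=\sqrt{q_\alpha}\,|c_{\alpha,i}|^2\geq 0$, which is exactly a CP factorization in the sense of Def.~\ref{def:CP}.

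For the converse, $M(\rho)$ CP $\Rightarrow\rho$ separable, I would run the computation backwards. Start from a CP factorization $M(\rho)=\sum_\alpha b_\alpha b_\alpha^T$ with $b_\alpha$ entry-wise non-negative, and define product vectors $\ket{e_\alpha}=\sum_i\sqrt{(b_\alpha)_i}\,\ket{i}$; the square root is what converts the quadratic CP data into the correct (quartic) matrix elements of a product state. The operator $\sigma=\sum_\alpha\ket{e_\alpha e_\alpha}\bra{e_\alpha e_\alpha}$ is manifestly separable but not yet DS, so I would apply the local dephasing twirl $\mathcal{T}(X)=\int (U\otimes U)X(U\otimes U)^\dagger\,d\mu(U)$ over diagonal unitaries $U$, which preserves separability (it averages over local unitaries) and projects onto the DS form. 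A short calculation gives $M(\mathcal{T}(\sigma))=\sum_\alpha b_\alpha b_\alpha^T=M(\rho)$, so $\mathcal{T}(\sigma)$ is a DS state with the same $M$-matrix as $\rho$; by the bijection $\mathcal{T}(\sigma)=\rho$, and hence $\rho$ is separable.

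The routine parts are the bookkeeping of the factors of $1/2$ relating the $p_{ij}$, the entries of $M(\rho)$, and the computational-basis entries of $\rho$, together with verifying that $\mathcal{T}$ acts as claimed on a symmetric product projector. The genuine obstacle is the structural lemma in the first implication: passing from a generic separable decomposition to one using only symmetric product states. Everything downstream is an essentially algebraic translation, so I expect that lemma — and in particular the elementary observation that symmetry of a product vector forces collinearity of its two factors — to carry the real content of the proof.
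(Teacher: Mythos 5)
Your proof is correct, and both directions follow the same overall strategy as the paper's Appendix~\ref{app:ProofOfWinterfell}: reduce a separable decomposition of the symmetric state $\rho$ to one over symmetric product projectors $\ket{e}^{\otimes 2}\bra{e}^{\otimes 2}$, read off the CP factorization of $M(\rho)$ from the moduli $|e_i|^2$, and, conversely, turn the columns $b_\alpha$ of a CP factor into product vectors via entrywise square roots and then dephase away the unwanted coherences. Two points differ in execution, both to your credit. First, you actually prove the structural lemma (a separable state supported on the symmetric subspace decomposes into symmetric product projectors, via positivity against the antisymmetric projector and the collinearity of the two factors of a symmetric product vector) which the paper merely asserts with ``Since it is symmetric, it admits a convex decomposition into product vectors of the following form.'' Second, for the converse the paper builds an explicit finite dephasing family $\ket{\zeta_{i,j,\mathbf{k}}}$ of $d\,2^{d}$ phase-modified copies of each $\ket{z_i}$ and verifies cancellation of the off-DS terms by Kronecker-delta bookkeeping, which yields a concrete finite separable decomposition; you instead apply the continuous twirl over diagonal unitaries, which is shorter and is exactly the technique the paper itself uses in Lemma~\ref{lem:X}: your $\mathcal{T}(\ket{e_\alpha e_\alpha}\bra{e_\alpha e_\alpha})$ equals $\|b_\alpha\|_1^2\, I_{b_\alpha}$, so your converse amounts to exhibiting $\rho$ as a convex combination of the extremal separable states $I_{b_\alpha}$. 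The only trade-off is that your separable decomposition is an integral rather than a finite sum (separability still follows since the separable set is closed and convex), whereas the paper's discrete construction exhibits finitely many product vectors explicitly. One small point worth stating when you write it up: the identification $\mathcal{T}(\sigma)=\rho$ uses that $\mathrm{Tr}\,\rho=\|M(\rho)\|_1$, so equality of $M$-matrices already guarantees equality of normalizations.
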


We prove Theorem \ref{thm:Winterfell} in Appendix \ref{app:ProofOfWinterfell}.

By virtue of Theorem \ref{thm:Winterfell}, we recover the result of \cite{KoreanGuy}: Because it is NP-Hard to decide whether a matrix admits a CP decomposition \cite{berman2015open}, the separability problem in $\mathbbm{C}^d \otimes \mathbbm{C}^d$ DS states is NP-Hard. 

We remark that the NP-hardness result that we obtain holds under polynomial-time Turing reductions\footnote{Intuitively speaking, a Turing reduction describes how to solve problem $A$ by running an algorithm for a second problem $B$, possibly multiple times.}, as opposed to poly-time many-one\footnote{A many-one reduction is a special case of a Turing reduction, with the particularity that the algorithm for problem $B$ can be called only one time, and its output is immediately returned as the output of problem $A$.} reductions \cite{Gharibian10}. For instance, this is the case for Gurvits' initial reduction of the weak membership problem\footnote{Weak in the sense that it allows for error in points at a given Euclidean distance from the border of the set.} in the set of separable states from the NP-complete problem PARTITION\footnote{The PARTITION problem is a decision problem corresponding to whether a given set of integer numbers can be partitioned into two sets of equal sum. This problem is efficiently solvable with a dynamic programming procedure \cite{Gareybook}, but becomes NP-hard when the magnitudes of the input integers become exponentially large with the input size.} \cite{Gurvits03}. In the case we present here, the reduction holds because the NP-hardness of deciding membership in the ${\cal CP}_d$ set follows via a Turing reduction, which is the result we use as our starting point. The part of the reduction that we provide here, however, is many-one.
\newpage

We here briefly discuss the steps that would be required to make this result completely rigurous from a computer science point of view. One would need to embed the NP-hardness into the formalism of the weak membership problem \cite{Gurvits03, grotschel2012geometric}. This requires, for instance, showing that the convex set of separable DS states has some desirable properties such as being well-bounded or $p$-centered. We refer the reader to \cite{Gharibian10} for the technical aspects of these definitions. On the other hand, the completely positive cone is known to be well-bounded and p-centered: in \cite{dickinson2013scaling} it was proved that the weak membership problem in the completely positive cone is NP-hard. By using the one-to-one correspondence between DS states and CP matrices given by $M(\rho)$ in Def. \ref{def:M}, then the result is mapped onto the DS set\footnote{The technical requirement of full dimensionality \cite{Gharibian10, dickinson2013scaling} depends on the set in which one embeds the problem. Recall that we are interested in solving the separability problem within DS states. The set of DS separable states is of course not full-dimensional when embedded in the whole two-qudit Hilbert space. However, it is full-dimensional when viewed in the DS subspace (cf. Figure \ref{fig:DS}).}.

Geometrically, the set of separable DS states is convex. Hence, it is fully characterized by its extremal elements (those that cannot be written as a non-trivial convex combination of other separable DS states). Identifying such elements is of great importance towards the characterization of the separability properties of DS states. For instance, in the set of all separable density matrices, the extremal ones are the rank-$1$ projectors onto product vectors. However, this property may be lost when restricting our search in a subspace: observe that the set of separable DS states states is obtained as the intersection of the subspace of DS states with the convex set of separable density matrices. Therefore, the set of extremal separable DS states states may contain states that are separable, but not extremal in the set of separable density matrices (see Fig. \ref{fig:DS}). Theorem \ref{thm:Winterfell} allows us to fully characterize extremality in the set of separable DS states in terms of extremal CP matrices, thus obtaining the following corollary:

\begin{cor}
\label{cor:extr}
The extremal separable DS states $\rho$ fulfill
 \begin{equation}
 p_{ij} = 2\sqrt{p_{ii}p_{jj}}, \qquad i < j.
 \label{eq:Bran}
 \end{equation}
\end{cor}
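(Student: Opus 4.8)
The plan is to translate extremality of a separable DS state into extremality in the cone ${\cal CP}_d$ via Theorem \ref{thm:Winterfell}, and then invoke the characterization of the extreme rays of the completely-positive cone. First I would observe that the assignment $\rho \mapsto M(\rho)$ of Def. \ref{def:M} is an affine isomorphism from the set of DS states onto the affine slice $\{M : M_{ij}\geq 0,\ \langle u|M|u\rangle = 1\}$, where $\ket{u}$ is the vector of ones. Indeed, the diagonal entries of $M(\rho)$ are $p_{ii}$ and the off-diagonal ones are $p_{ij}/2$, so normalization $\mathrm{Tr}\,\rho = \sum_{i\leq j}p_{ij}=1$ reads exactly $\langle u|M(\rho)|u\rangle = \sum_i M_{ii} + 2\sum_{i<j}M_{ij} = 1$. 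By Theorem \ref{thm:Winterfell}, under this isomorphism the separable DS states correspond precisely to the base $\mathcal{B} := \{M \in {\cal CP}_d : \langle u|M|u\rangle = 1\}$ of the cone ${\cal CP}_d$. Since the map is affine and bijective, it carries extreme points to extreme points, so $\rho$ is extremal among separable DS states if, and only if, $M(\rho)$ is an extreme point of $\mathcal{B}$.

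Next I would note that $\langle u|M|u\rangle > 0$ for every nonzero $M \in {\cal CP}_d$, as its entries are non-negative and not all vanishing; hence the hyperplane $\langle u|M|u\rangle = 1$ meets each ray of ${\cal CP}_d$ exactly once, and the extreme points of $\mathcal{B}$ are exactly the normalized generators of the extreme rays of ${\cal CP}_d$. The crux is then to show that these extreme rays are precisely the rank-one matrices $\ket{v}\bra{v}$ with $\ket{v}$ entrywise non-negative. One inclusion is immediate from Def. \ref{def:CP}: any $A \in {\cal CP}_d$ decomposes as $A = \sum_k \ket{b_k}\bra{b_k}$ with $\ket{b_k}\geq 0$, so every extreme ray must have this form. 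For the converse I would argue that $\ket{v}\bra{v}$ is extremal: if $\ket{v}\bra{v} = A + B$ with $A,B \in {\cal CP}_d$, then $0 \preceq A \preceq \ket{v}\bra{v}$, and since $\ket{v}\bra{v}$ has rank one this forces $A = t\,\ket{v}\bra{v}$ for some $t \in [0,1]$, making the decomposition trivial.

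Finally, writing an extremal $M(\rho) = \ket{v}\bra{v}$ with $v_i \geq 0$, I read off $p_{ii} = M_{ii} = v_i^2$ and $p_{ij}/2 = M_{ij} = v_i v_j$ for $i<j$, whence $p_{ij} = 2 v_i v_j = 2\sqrt{p_{ii}p_{jj}}$, which is exactly Eq. (\ref{eq:Bran}). The main obstacle is the rank-one characterization of the extreme rays of ${\cal CP}_d$; once that standard fact is in place, everything else is bookkeeping through the affine isomorphism $M$.
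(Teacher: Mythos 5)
Your proposal is correct and follows essentially the same route as the paper: pass through $M(\rho)$ via Theorem \ref{thm:Winterfell}, use that the extreme rays of ${\cal CP}_d$ are the rank-one matrices $\vec{b}\,\vec{b}^T$ with $\vec{b}\geq 0$, normalize, and read off Eq.~(\ref{eq:Bran}). The only difference is that the paper simply cites this extreme-ray characterization from the literature, whereas you additionally supply a (correct) proof of it and spell out the affine-isomorphism and normalization bookkeeping.
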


\paragraph{Proof.} -- 
Since the extremal rays of the ${\cal CP}_d$ cone are the rank-$1$ matrices $\vec{b}\ \vec{b}^T$ where $b_i \geq 0$ \cite{berman2003completely}, by normalizing and comparing to Eq. (\ref{eq:def:M}) we obtain Eq. (\ref{eq:Bran}).

\begin{figure}[h]
\centering
\includegraphics[scale=1]{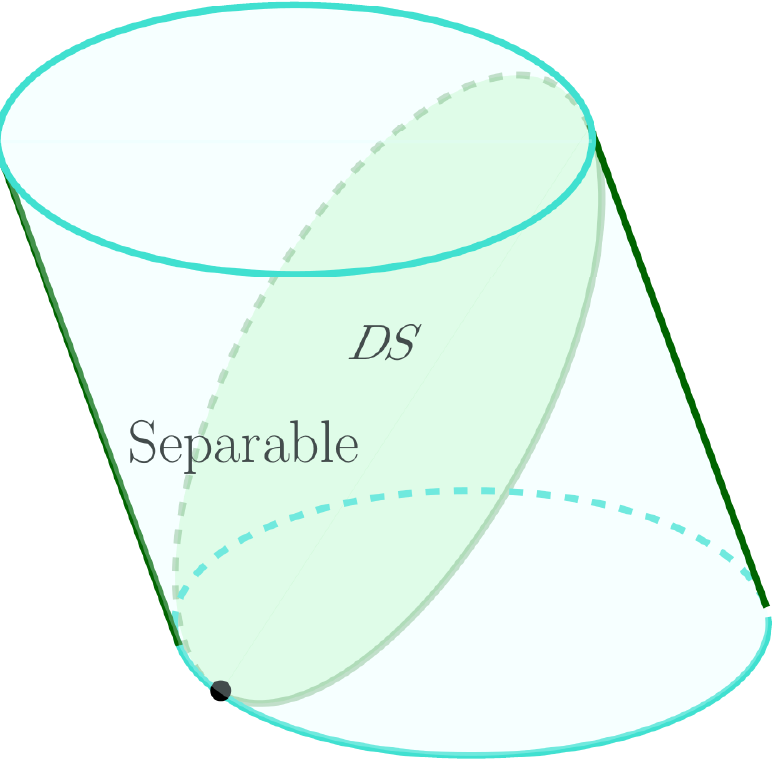}
\caption{Cartoon picture of the set of separable states SEP (cylinder) and its intersection with the subspace of diagonal symmetric states DS (ellipse). The intersection of the subspace of DS states with the set of separable states gives rise to the set of separable DS states, which is represented by the green ellipse, including its interior.
Only the states of the form $\ket{ii}\bra{ii}$ are extremal in both sets (represented by the black dot in the figure). However, states that were in the boundary of SEP, could now be extremal when viewed in DS (represented by the border of the green ellipse in the figure).}
\label{fig:DS}
\end{figure}

\begin{theorem}
\label{thm:KingsLanding}
 Let $\rho$ be a DS state acting on $\mathbbm{C}^d \otimes \mathbbm{C}^d$.
 \begin{equation}
  \rho \mbox{ is PPT} \iff M(\rho) \mbox{ is DNN}.
 \end{equation}
\end{theorem}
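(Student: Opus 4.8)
The plan is to exploit the explicit block-diagonal structure of $\rho^{T_B}$ that was already derived above, namely the direct-sum decomposition
\begin{equation}
 \rho^{T_B} = M(\rho) \bigoplus_{0 \leq i \neq j < d} \left(\frac{p_{ij}}{2}\right).
\end{equation}
Since a block-diagonal matrix is positive semi-definite if, and only if, each of its blocks is positive semi-definite, the PPT condition $\rho^{T_B} \succeq 0$ splits into two families of requirements: that $M(\rho) \succeq 0$, and that each $1\times 1$ block $p_{ij}/2$ be non-negative for $i \neq j$. I would therefore establish the equivalence by showing that these two families of conditions together are equivalent to $M(\rho)$ being DNN (Def. \ref{def:DNN}); that is, $M(\rho)\succeq 0$ together with entry-wise non-negativity $M(\rho)_{ij} \geq 0$.

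First I would argue the forward direction ($\Rightarrow$). Assuming $\rho$ is PPT, the block decomposition immediately gives $M(\rho) \succeq 0$ as one of the blocks. For the entry-wise non-negativity, I would simply recall from Def. \ref{def:M} that the entries of $M(\rho)$ are $p_{ii}$ on the diagonal and $p_{ij}/2$ off the diagonal, all of which are non-negative by the very definition of a DS state (Def. \ref{def:DSS}), where $p_{ij} \geq 0$. Hence $M(\rho)$ is both positive semi-definite and entry-wise non-negative, i.e. DNN.

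For the converse direction ($\Leftarrow$), suppose $M(\rho)$ is DNN. Positive semi-definiteness of $M(\rho)$ handles that block of $\rho^{T_B}$ directly. The remaining $1\times 1$ blocks $p_{ij}/2$ with $i \neq j$ are non-negative precisely because these $p_{ij}$ are (up to the factor $1/2$) the off-diagonal entries of $M(\rho)$, whose non-negativity is guaranteed by the entry-wise condition in the DNN property; alternatively they are non-negative by Def. \ref{def:DSS}. Since every block of $\rho^{T_B}$ is then positive semi-definite, so is $\rho^{T_B}$, and $\rho$ is PPT.

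The argument is almost entirely bookkeeping on the block structure, so I do not anticipate a genuinely hard step; the only point that requires a little care is recognizing that the entry-wise non-negativity built into the DNN definition is not an extra constraint beyond what is already forced by $p_{ij} \geq 0$ in Def. \ref{def:DSS}. In other words, the off-diagonal non-negativity needed to certify positivity of the $1\times 1$ blocks is automatic for any valid DS state, so the real content of the PPT condition is carried by the single requirement $M(\rho) \succeq 0$. Making this interplay explicit—so that the DNN characterization is seen to be the natural and complete translation of $\rho^{T_B}\succeq 0$—is the main thing I would be careful to state clearly.
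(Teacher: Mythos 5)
Your proposal is correct and follows essentially the same route as the paper's own proof: both rely on the direct-sum decomposition of $\rho^{T_B}$ into $M(\rho)$ and the $1\times 1$ blocks $p_{ij}/2$, identify $M(\rho)\succeq 0$ as the content of the PPT condition, and observe that entry-wise non-negativity of $M(\rho)$ is equivalent to the non-negativity of the $p_{ij}$ already guaranteed by the DS form. Your closing remark that the off-diagonal non-negativity is automatic for a valid DS state matches the paper's observation that $p_{ab}\geq 0$ follows from $\rho$ being a valid quantum state.
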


\paragraph{Proof.} -- 
Let us assume that $\rho$ is PPT. Note that the partial transposition of $\rho$ can be written as
\begin{equation}
\label{eq:LittleFinger}
\rho^{\Gamma} = \left(\bigoplus_{0\leq a < b < d} \left(\frac{p_{ab}}{2}\right)\oplus\left(\frac{p_{ab}}{2}\right)\right)\oplus M(\rho).
\end{equation}
Since $\rho$ is PPT, Eq. (\ref{eq:LittleFinger}) implies that $M(\rho)\succeq 0$. Since $\rho$ is a valid quantum state, then $p_{ab} \geq 0$ for all $0\leq a \leq b < d$. Hence, all the entries of $M(\rho)$ are also non-negative. Thus, $M(\rho)$ is DNN.

Conversely, if $M(\rho)$ is DNN then we have that all its entries are non-negative; \textit{i.e.,} $p_{ab} \geq 0$ for $0 \leq a \leq b < d$. These conditions guarantee that $\rho \succeq 0$. Additionally, as $M(\rho) \succeq 0$, these conditions imply that $\rho^\Gamma \succeq 0$. Hence, $\rho$ is PPT.

\qed

\begin{figure}[h]
\centering
\includegraphics[scale=0.5]{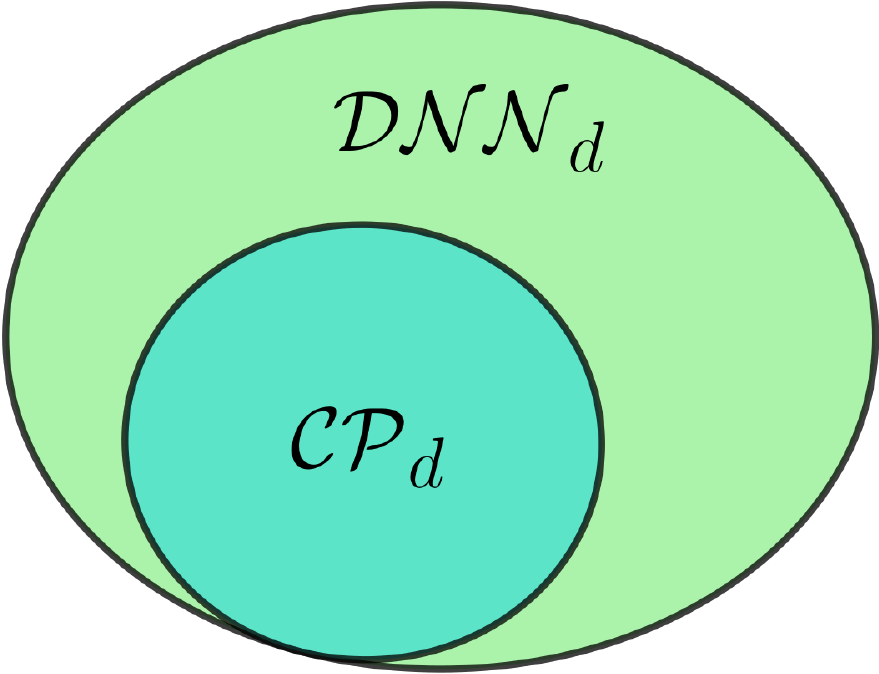}
\caption{For a two qudit PPTDS state $\rho$, if its corresponding $M(\rho)$ is in $\mathcal{CP}_d$ then $\rho$ is separable, if $M(\rho)$ is in $\mathcal{DNN}_d$ then $\rho$ is $PPT$ and if $M(\rho)$ is in $\mathcal{DNN}_d$ but not $\mathcal{CP}_d$ then $\rho$ is $PPT$ but entangled.}
\label{fig:sets}
\end{figure}

Recall (cf. Definitions \ref{def:CP} and \ref{def:DNN}, also Fig. \ref{fig:sets}) that ${\cal CP}_d \subseteq {\cal DNN}_d$. However, the inclusion is strict for $d\geq 5$:
It is known that ${\cal CP}_d = {\cal DNN}_d$ for $d\leq 4$ and ${\cal CP}_d \subsetneq {\cal DNN}_d$ for $d \geq 5$ \cite{berman2015open}. This yields a full characterization of the bipartite separable DS states in terms of the PPT criterion:

\begin{theorem}
\label{thm:TheEyre}
 Let $\rho$ be a DS state acting on $\mathbbm{C}^d \otimes \mathbbm{C}^d$, with $d\leq 4$.
 \begin{equation}
  \rho \mbox{ is separable} \iff \rho \mbox{ is PPT}.
 \end{equation}
\end{theorem}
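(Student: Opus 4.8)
The plan is to obtain the statement as a short chain of equivalences, since all the substantive work has already been done. By Theorem \ref{thm:Winterfell}, separability of $\rho$ is equivalent to $M(\rho)$ being CP, and by Theorem \ref{thm:KingsLanding}, the PPT property of $\rho$ is equivalent to $M(\rho)$ being DNN. Thus it suffices to argue that, for the particular $d\times d$ matrix $M(\rho)$ attached to $\rho$, membership in ${\cal CP}_d$ and membership in ${\cal DNN}_d$ are the same condition whenever $d\leq 4$.

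First I would recall the inclusion ${\cal CP}_d \subseteq {\cal DNN}_d$, which is immediate from the definitions: any $A = B\cdot B^T$ with $B_{ij}\geq 0$ is positive semi-definite and has non-negative entries. The content of the theorem is therefore the reverse inclusion in low dimension, namely that ${\cal DNN}_d \subseteq {\cal CP}_d$ for $d\leq 4$. This is exactly the classical matrix-analytic result ${\cal CP}_d = {\cal DNN}_d$ for $d\leq 4$ (with strict containment ${\cal CP}_d \subsetneq {\cal DNN}_d$ for $d\geq 5$), which we invoke as a known fact. Composing the three statements gives
\begin{equation}
\rho \text{ separable} \iff M(\rho)\in{\cal CP}_d \iff M(\rho)\in{\cal DNN}_d \iff \rho \text{ PPT},
\end{equation}
where the middle equivalence uses $d\leq 4$, completing the argument.

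The genuinely hard content is not in the logical assembly but is entirely absorbed into the cited equality ${\cal CP}_d={\cal DNN}_d$ for $d\leq 4$; proving that a doubly non-negative matrix of size at most $4$ always admits an entry-wise non-negative factorization $B\cdot B^T$ is a nontrivial result of convex geometry and combinatorial matrix theory, and I would simply quote it rather than reprove it. One subtlety worth checking is that every $M(\rho)$ arising from a normalized DS state is an arbitrary symmetric entry-wise non-negative $d\times d$ matrix with the appropriate trace normalization, so that the matrix-cone result applies to it without loss of generality; this is guaranteed by the one-to-one correspondence between DS states and such $M$-matrices noted after Def. \ref{def:M}. With that correspondence in hand, the theorem follows directly, and no further case analysis in $d$ is required beyond the threshold $d\leq 4$ already encoded in the cited cone identity.
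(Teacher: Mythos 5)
Your proposal is correct and follows essentially the same route as the paper: the result is obtained by chaining Theorem \ref{thm:Winterfell} (separability $\iff$ CP) and Theorem \ref{thm:KingsLanding} (PPT $\iff$ DNN) with the cited cone identity ${\cal CP}_d = {\cal DNN}_d$ for $d\leq 4$. The paper additionally points to a self-contained constructive argument for $d=3$ in the appendix, but the core logic is identical to yours.
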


\paragraph{Proof.} -- The result follows from the identity ${\cal CP}_d = {\cal DNN}_d$, which holds for $d \leq 4$ \cite{berman2015open}. In Example \ref{ex:PPT3} we provide a constructive proof for $d=3$.

\qed

Finally, we end this section by giving a sufficient separability criteria for any $d$ in terms of the ranks of $M(\rho)$.

\begin{theorem}\label{thm:PPTDSsep}
Let $\rho$ be a PPTDS state with $M(\rho)$ of rank at most $2$. Then, $\rho$ is separable.
\end{theorem}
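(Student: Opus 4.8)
The plan is to translate the claim entirely into the language of the $M$-matrix and then to prove a purely matrix-analytic statement. By Theorem~\ref{thm:KingsLanding}, the hypothesis that $\rho$ is PPTDS is equivalent to $M(\rho)$ being doubly non-negative, and by Theorem~\ref{thm:Winterfell} the desired conclusion that $\rho$ is separable is equivalent to $M(\rho)$ being completely positive. Hence it suffices to prove the following purely linear-algebraic fact, which I would isolate as a lemma: every DNN matrix $M$ of rank at most $2$ is CP. The lemma would then be proved by an explicit geometric construction of an entrywise non-negative factorization.

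The low-rank cases are immediate. If $\mathrm{rank}(M)=0$ then $M=0$ is trivially CP. If $\mathrm{rank}(M)=1$, positive semidefiniteness gives $M=\lambda\, w w^T$ with $\lambda>0$; the condition $M_{ij}=\lambda w_i w_j\geq 0$ forces all nonzero entries of $w$ to share a common sign, so after replacing $w$ by $-w$ if necessary we may take $w\geq 0$, and $M=(\sqrt{\lambda}\,w)(\sqrt{\lambda}\,w)^T$ exhibits $M$ as CP via Definition~\ref{def:CP}.

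The substantive case is $\mathrm{rank}(M)=2$. Since $M\succeq 0$ has rank $2$, I would factor $M=VV^T$ with $V$ a real $d\times 2$ matrix, and denote by $r_1,\dots,r_d\in\mathbb{R}^2$ its rows, so that $M_{ij}=\langle r_i,r_j\rangle$. The doubly non-negative hypothesis is then exactly the statement that these planar vectors have pairwise non-negative inner products, i.e. any two of them subtend an angle of at most $90^\circ$. The key geometric step is to show that a family of planar vectors with all pairwise angles $\leq 90^\circ$ must lie inside a single closed sector of angular width $\leq 90^\circ$: fixing a nonzero reference row $r_k$, every other vector lies within $90^\circ$ of $r_k$ and hence in the closed semicircle of directions centered on $r_k$, so each direction is recorded by a signed angle $\alpha_i\in[-90^\circ,90^\circ]$ relative to $r_k$; applying the pairwise bound to the largest and smallest such angles shows that their total spread is at most $90^\circ$ (zero rows correspond to $M_{ii}=0$ and are placed at the origin). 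Once this is established, I would apply a rotation $R\in O(2)$ carrying that sector into the first quadrant. Since rotations preserve inner products, the matrix $B$ whose rows are $Rr_i$ is entrywise non-negative and satisfies $BB^T=VV^T=M$, so $M$ is CP.

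The main obstacle is precisely this geometric lemma on the $90^\circ$ sector, and in particular handling the circular wrap-around correctly: a pairwise geodesic distance $\leq 90^\circ$ on the unit circle does not \emph{a priori} preclude the directions from winding around, so the argument must first confine all directions to a fixed semicircle about the reference vector before comparing extreme angles. Getting this reduction clean, together with the degenerate bookkeeping of zero rows and of genuinely orthogonal vectors (angle exactly $90^\circ$, the boundary case that still fits inside the closed first quadrant), is where the care is required; the surrounding algebra (the factorization $M=VV^T$ and the rotation) is routine.
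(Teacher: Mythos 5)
Your proposal is correct and follows essentially the same route as the paper: factor $M(\rho)=VV^T$, read the DNN condition as saying the rows are planar vectors with pairwise angles at most $\pi/2$, and rotate them into the non-negative orthant to obtain a CP factorization. The only difference is that you actually prove the key geometric step (confinement to a $90^\circ$ sector via a reference vector and the extreme signed angles), which the paper merely asserts with a reference to Fig.~\ref{fig:geom_interp}; your write-up is therefore a more complete version of the same argument.
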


\paragraph{Proof.} --
Since $\rho$ is PPT, $M(\rho)\succeq 0$. Therefore, it admits a factorization $M(\rho) = V V^T$, where $V$ is a $d \times 2$ or a $d \times 1$ matrix. Geometrically, every row of $V$ can be seen as a vector in $\mathbbm{R}^2$ (or a scalar if the rank of $M(\rho)$ is one). Therefore, $M(\rho)$ can be seen as the Gram matrix of those vectors; each element being their scalar product. Since $M(\rho)$ is doubly non-negative, it implies that all these scalar products must be positive; therefore, the angle between each pair of vectors is smaller or equal than $\pi/2$. Thus, the geometrical interpretation is that $M(\rho)$ is CP if, and only if, they can be isometrically embedded into the positive orthant of $\mathbbm{R}^k$ for some $k$. This is always possible to do for $k = 2$ (see Fig. \ref{fig:geom_interp}), which corresponds to $M(\rho)$ having rank at most $2$.

\qed

\begin{figure}[h]
\centering
\includegraphics{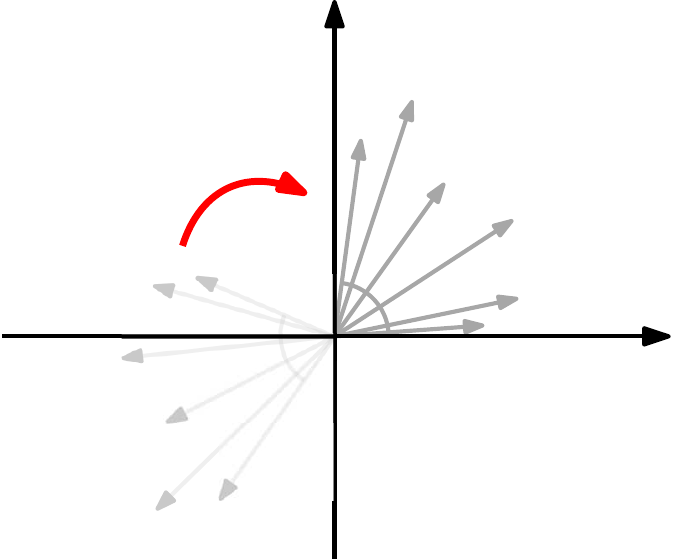}
\caption{Visual representation of the proof for Theorem \ref{thm:PPTDSsep}. When the angle between each pair of vectors is smaller or equal than $\pi/2$, meaning that $M(\rho)$ is $\mathcal{CP}_d$, all vectors can be isometrically embedded in the nonnegative orthant.}
\label{fig:geom_interp}
\end{figure}

\section{Sufficient criteria for entanglement and separability}
\label{sec:sufficient}
In this section we further characterize the bipartite DS states by providing sufficient criteria to certify entanglement by means of Entanglement Witnesses for DS states, and by providing sufficient separability conditions in terms of $M(\rho)$.
\subsection{Entanglement Witnesses for DS states}
\label{subsec:IEWsDSS}

We begin by introducing the concept of \emph{copositive} matrix:
\begin{defn}
A matrix $A$ is called \emph{copositive} if, and only if, $\vec{x}^T A \vec{x} \geq 0$ for all $\vec{x}$ with non-negative entries.
\end{defn}

The set of $d\times d$ copositive matrices also forms a proper cone, denoted ${\cal COP}_d$. The cones ${\cal CP}_d$ and ${\cal COP}_d$ are dual to each other with respect to the trace inner product. It is also easy to see that ${\cal PSD}_d + {\cal N}_d \subseteq {\cal COP}_d$, where ${\cal PSD}_d$ is the set of positive-semidefinite $d\times d$ matrices and ${\cal N}_d$ is the set of symmetric entry-wise non-negative matrix. Actually, we have ${\cal DNN}_d = {\cal PSD}_d \cap {\cal N}_d$ and the observation follows from the inclusion ${\cal CP}_d \subseteq {\cal DNN}_d$.

Therefore, one can view copositive matrices as EWs for DS states. Furthermore, one could think of ${\cal PSD}_d + {\cal N}_d$ as the set of DEWs for DS states, in the sense that they do not detect entangled PPTDS states.\\
In Examples \ref{subsec:Exemples5x5} and \ref{ex:5x5second} we provide some $M(\rho) \in {\cal DNN}_d \setminus {\cal CP}_d$ for $d\geq 5$, therefore corresponding to entangled PPTDS states. The paradigmatic example of a copositive matrix detecting matrices DNN, but not CP, (\textit{i.e.}, PPT, but entangled DS states) is the \emph{Horn} matrix \cite{HornMatrix}, which is defined as
\begin{equation}
H := \left(
\begin{array}{rrrrr}
1&-1&1&1&-1\\
-1&1&-1&1&1\\
1&-1&1&-1&1\\
1&1&-1&1&-1\\
-1&1&1&-1&1
\end{array}
\right).
\end{equation}
It is proven that $H \in {\cal COP}_5 \setminus ({\cal PSD}_5 + {\cal N}_5)$ in \cite{HornMatrix}. As $\mathrm{Tr}(H M(\rho))=-1< 0$, $H$ corresponds to an (indecomposable) entanglement witness for the state corresponding to $M(\rho)$.

Although the boundary of the set of copositive matrices remains uncharacterized for arbitrary dimensions, ${\cal COP}_5$ was fully characterized in \cite{dickinson2013scaling}:
\begin{equation}
{\cal COP}_5 = \{DAD : D \mbox{ is positive diagonal}, A \mbox{ s. t. } p(A,\vec{x}) \mbox{ is a sum of squares}\},
\end{equation}
where $$p(A,\vec{x}):=\left(\sum_{i,j} A_{ij} x_i^2x_j^2\right)\left(\sum_{k} x_k^2\right).$$
Furthermore, the extremal rays of ${\cal COP}_d$ have been fully characterized for $d\leq 5$, divided into classes, but this also remains an open problem for higher $d$ \cite{berman2015open}.

In Appendix \ref{app:Exposedness} we discuss exposedness properties of the sets of completely positive and co-positive matrices and their relation to the separability problem and its geometry.

\paragraph{Examples of entangled PPTDS states for $d=5$.} --
\label{subsec:Exemples5x5}
Let us provide an example of a bipartite PPTDS entangled state for $d=5$.
\begin{equation}
\tilde{M}(\rho)=\left(
\begin{array}{ccccc}
1&1&0&0&1\\
1&2&1&0&0\\
0&1&2&1&0\\
0&0&1&1&1\\
1&0&0&1&3
\end{array}
\right).
\end{equation}
It can be easily seen using the Range criterion, as in Section \ref{ex:d6}, that $\rho$ is entangled. By Theorem \ref{thm:Winterfell}, it is equivalent to show \cite{berman2015open} that $M(\rho) \in {\cal DNN}_5 \setminus {\cal CP}_5$.\\
Finally, it can be appreciated how the Horn matrix can be used as an EW and certify entanglement $\mathrm{Tr}(H \tilde{M}(\rho))=-1< 0$.

\subsection{Sufficient separability conditions for diagonal symmetric states}
In the spirit of best separable approximations (BSA) \cite{LewensteinSanperaSep}, in this section we provide sufficient separability conditions for bipartite PPTDS states. In the same way that the BSA allows one to express any PPTDS state as a sum of a separable part and an entangled one with maximal weight on the separable one\footnote{In \cite{QuesadaBSA}, the BSA was found analytically for $N$-qubit DS states}. In this section we introduce Best Diagonal Dominant (BDD) approximations, which give a sufficient criterion to certify that a PPTDS state is separable. The idea is that although checking membership in ${\cal CP}_d$ is NP-hard, it is actually easy to (i) characterize the extremal elements in ${\cal CP}_d$ (cf. Corollary \ref{cor:extr}) and (ii) check for membership in a subset of ${\cal DD}_d \subseteq {\cal CP}_d$ that is formed of those matrices $A \in {N}_d$ that are diagonal dominant. In \cite{kaykobad1987nonnegative} the inclusion ${\cal DD}_d \subseteq {\cal CP}_d$ was proven. Therefore, to show that ${\cal CP}_d \setminus {\cal DD}_d$ is nonempty we study when the decomposition of a potential element in ${\cal CP}_d$ as a convex combination of an extremal element of ${\cal CP}_d$ and an element of ${\cal DD}_d$ is possible (see Figure \ref{fig:1}).

Let us start by stating a lemma that gives an explicit separable decomposition of a quantum state.
\begin{lemma}
\label{lem:I}
Let $I$ be the unnormalized quantum state defined as 
\begin{equation}
I = \sum_{i=0}^{d-1} \ket{ii}\bra{ii} + \sum_{0 \leq i< j < d} 2\ket{D_{ij}}\bra{D_{ji}},
\end{equation}
where $\ket{D_{ij}} = (\ket{ij} + \ket{ji})/\sqrt{2}$. For instance, for $d=3$,
\begin{equation}
I = \left(
\begin{array}{ccc|ccc|ccc}
1&\cdot&\cdot&\cdot&\cdot&\cdot&\cdot&\cdot&\cdot\\
\cdot&1&\cdot&1&\cdot&\cdot&\cdot&\cdot&\cdot\\
\cdot&\cdot&1&\cdot&\cdot&\cdot&1&\cdot&\cdot\\
\hline
\cdot&1&\cdot&1&\cdot&\cdot&\cdot&\cdot&\cdot\\
\cdot&\cdot&\cdot&\cdot&1&\cdot&\cdot&\cdot&\cdot\\
\cdot&\cdot&\cdot&\cdot&\cdot&1&\cdot&1&\cdot\\
\hline
\cdot&\cdot&1&\cdot&\cdot&\cdot&1&\cdot&\cdot\\
\cdot&\cdot&\cdot&\cdot&\cdot&1&\cdot&1&\cdot\\
\cdot&\cdot&\cdot&\cdot&\cdot&\cdot&\cdot&\cdot&1
\end{array}
\right).
\end{equation}
Then, $I$ is separable.
\end{lemma}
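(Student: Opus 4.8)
The plan is to turn the statement into an \emph{explicit} separable decomposition by recognizing $I$ as a phase-averaged mixture of product vectors. The first step is to simplify the given expression. Since $\ket{D_{ij}} = \ket{D_{ji}}$, the term $\ket{D_{ij}}\bra{D_{ji}}$ is just $\ket{D_{ij}}\bra{D_{ij}}$, and expanding $2\ket{D_{ij}}\bra{D_{ij}} = \ket{ij}\bra{ij}+\ket{ji}\bra{ji}+\ket{ij}\bra{ji}+\ket{ji}\bra{ij}$ collapses the definition into the compact form $I = \mathbbm{1} + \sum_{i\neq j}\ket{ij}\bra{ji}$, where $\mathbbm{1}$ is the identity on $\mathbbm{C}^d\otimes\mathbbm{C}^d$. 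This rewriting is the conceptual key: it displays $I$ as the identity plus the off-diagonal part of the SWAP operator, and it is this combination that a phase twirl over product states reproduces.

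Next I would guess the right product vectors. For a phase vector $\vec\theta\in[0,2\pi)^d$ set $\ket{\psi(\vec\theta)} = \sum_j e^{i\theta_j}\ket{j}$ and consider the product operators $(\ket{\psi(\vec\theta)}\bra{\psi(\vec\theta)})\otimes(\ket{\psi(\vec\theta)}\bra{\psi(\vec\theta)})$, whose expansion is $\sum_{j,k,l,m} e^{i(\theta_j+\theta_k-\theta_l-\theta_m)}\ket{jk}\bra{lm}$. Averaging uniformly over $\vec\theta$, each coordinate integral $\tfrac{1}{2\pi}\int_0^{2\pi} e^{i\theta_a c_a}\,d\theta_a$ vanishes unless the net exponent $c_a$ of $\theta_a$ is zero, so only the terms with $\{j,k\}=\{l,m\}$ as multisets survive. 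A short case split ($j=k=l=m$ versus $j\neq k$ with $\{l,m\}\in\{(j,k),(k,j)\}$) shows the survivors recombine exactly into $\mathbbm{1} + \sum_{i\neq j}\ket{ij}\bra{ji} = I$, the normalization being consistent since $\mathrm{Tr}\,I = d^2$ while $\braket{\psi(\vec\theta)}{\psi(\vec\theta)} = d$ for every $\vec\theta$. Hence
\begin{equation}
I = \frac{1}{(2\pi)^d}\int_{[0,2\pi)^d}\bigl(\ket{\psi(\vec\theta)}\bra{\psi(\vec\theta)}\bigr)\otimes\bigl(\ket{\psi(\vec\theta)}\bra{\psi(\vec\theta)}\bigr)\,d\vec\theta,
\end{equation}
which is manifestly a convex mixture of product states, hence separable; by a Carath\'{e}odory argument the continuous mixture reduces to a finite one, or one may discretize the phases to $D$-th roots of unity with $D\geq 3$ (which avoids any spurious cancellation) to exhibit a finite decomposition directly.

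The main obstacle I anticipate is purely the combinatorial bookkeeping of the twirl: checking that the multiset condition $\{j,k\}=\{l,m\}$ is exactly what governs which coherences survive, and that the surviving diagonal and swap terms reassemble with the correct coefficients into $I$ (and not, for instance, into the symmetric projector). As a sanity check and a much shorter alternative route, I note that $I$ is an unnormalized DS state with $p_{ii}=1$ and $p_{ij}=2$ for $i<j$, so its associated matrix has all entries equal to one, $M(I)=\vec{1}\,\vec{1}^{\,T}$, which is completely positive; Theorem \ref{thm:Winterfell} then yields separability immediately. I would nevertheless keep the explicit integral decomposition as the primary proof, since the lemma is intended to supply a concrete separable form for use in the subsequent constructions.
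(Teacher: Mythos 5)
Your proposal is correct and is essentially the paper's own proof: the paper likewise writes $I$ as the uniform phase average $\int_{[0,2\pi]^d}\frac{\mathrm{d}\vec\varphi}{(2\pi)^d}(\ket{e(\vec\varphi)}\bra{e(\vec\varphi)})^{\otimes 2}$ with $\ket{e(\vec\varphi)}=\sum_j e^{\mathbbm{i}\varphi_j}\ket{j}$ and evaluates the same Kronecker-delta survival condition $\delta_{i,k}\delta_{j,l}+\delta_{i,l}\delta_{j,k}-\delta_{i,j,k,l}$, which is exactly your multiset condition $\{j,k\}=\{l,m\}$. Your identification $I=\mathbbm{1}+\sum_{i\neq j}\ket{ij}\bra{ji}$ and the $M(I)=\vec{1}\,\vec{1}^{\,T}$ cross-check are consistent with the paper; no gaps.
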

\paragraph{Proof.} -- 
Let $\ket{e(\vec{\varphi})} = \ket{0} + e^{\mathbbm{i} \varphi_1}\ket{1} + \cdots + e^{\mathbbm{i} \varphi_{d-1}}\ket{d-1}$. A separable decomposition of $I$ is given by
\begin{equation}
I = \int_{[0, 2\pi]^{d}} \frac{\mathrm{d}\vec{\varphi}}{(2 \pi)^d} (\ket{e(\vec{\varphi})}\bra{e(\vec{\varphi})})^{\otimes 2}.
\end{equation}

Indeed,
\begin{equation}
I = \sum_{ijlk} \ket{ij}\bra{kl} \int_{[0, 2\pi]^{d}} \frac{\mathrm{d}\vec{\varphi}}{(2 \pi)^d} e^{\mathbbm{i}(\varphi_{i} + \varphi_{j} - \varphi_{k} - \varphi_{l})} = \sum_{ijlk} \ket{ij}\bra{kl} (\delta_{i,k}\delta_{j,l} + \delta_{i,l}\delta_{j,k} - \delta_{i,j,k,l}),
\end{equation}
where $\delta$ is the Kronecker delta function.
\qed

Lemma \ref{lem:I} allows us to give a sufficient condition for a state $\rho$ to be separable. The idea is to subtract $\varepsilon I$ from $\rho$ in such a way that it remains a valid diagonal symmetric state, PPT, and close enough to the interior of the separable set such that it is easy to certify that the state is separable (see Fig. \ref{fig:1}).

\begin{figure}
\centering
\includegraphics[scale=0.75]{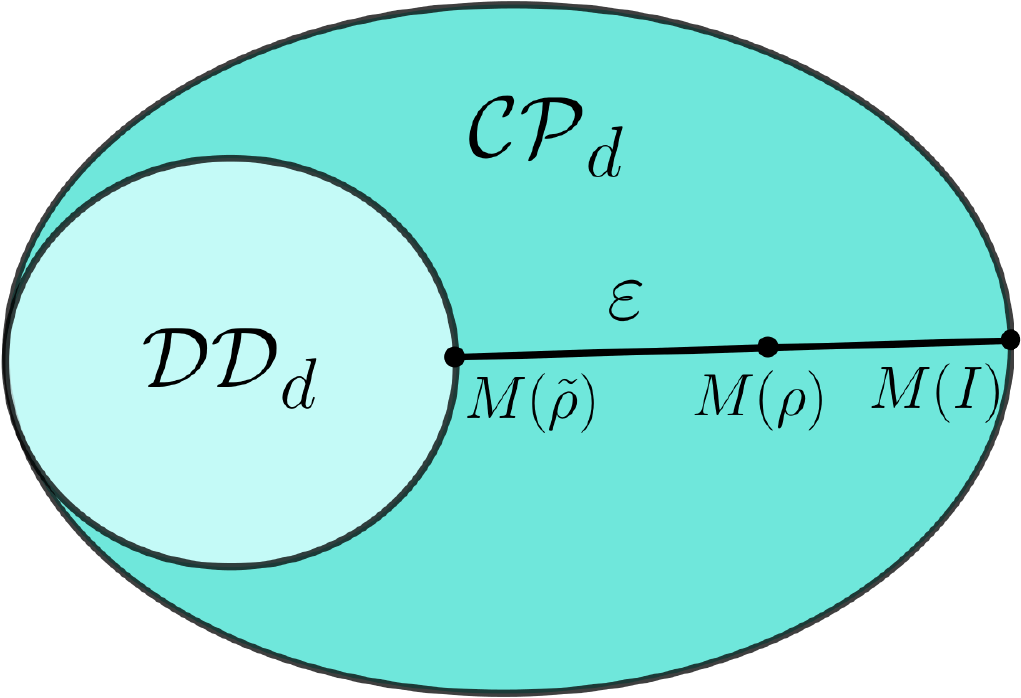}
\caption{Given a two-qudit PPTDS state $\rho$, if we can decompose $M(\rho)$ in terms of $M(I)$ (an extremal element of $\mathcal{CP}_d$) and $M(\tilde{\rho})$ (an element of $\mathcal{DD}_d$) we can certify that $M(\rho)$ is in $\mathcal{CP}_d$ and therefore certify that $\rho$ is separable.}
\label{fig:1}
\end{figure}

\begin{theorem}
\label{thm:Anna}
Let $\rho$ be a two-qudit PPTDS state with associated $M(\rho)$. If there exists $\varepsilon \geq 0$ such that
\begin{enumerate}
\item $\varepsilon \leq \rho_{ij}$ for all $i$, $j$ such that $0 \leq i,j < d$.
\item $\varepsilon d \leq (\bra{u} \frac{1}{M(\rho)}\ket{u})^{-1}$ and $\ket{u} \in {\cal R}(M(\rho))$, ${\cal R}(M(\rho))$ is the range of $M(\rho)$ and $\frac{1}{M(\rho)}$ is the pseudo-inverse of $M(\rho)$. Here $\ket{u}$ is a normalized vector of ones.
\item for all $i$ such that $0 \leq i < d$, $\rho_{ii} + \varepsilon (d-2) \geq \sum_{j \neq i}\rho_{ji}$.
\end{enumerate}
Then, $\rho$ is separable.
\end{theorem}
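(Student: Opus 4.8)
The plan is to exhibit an explicit separable decomposition of $\rho$ by subtracting a carefully scaled copy of the separable state $I$ from Lemma \ref{lem:I}, leaving behind a \emph{diagonal dominant} matrix which, by the inclusion ${\cal DD}_d \subseteq {\cal CP}_d$ of \cite{kaykobad1987nonnegative}, is automatically completely positive. Concretely, I would define $\tilde{\rho} := \rho - \varepsilon I$ and study the associated matrix $M(\tilde{\rho}) = M(\rho) - \varepsilon M(I)$. Since $I$ is separable and ${\cal CP}_d$ is a cone closed under addition (the remark following Def. \ref{def:CP}), it suffices to show that $M(\tilde{\rho})$ is CP and that $\tilde{\rho}$ is a legitimate (unnormalized) PPTDS state; then $\rho = \tilde{\rho} + \varepsilon I$ is a sum of two separable pieces, hence separable by Theorem \ref{thm:Winterfell}. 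The three hypotheses of the theorem are precisely the three conditions needed to make this subtraction go through.

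First I would unpack what $M(I)$ looks like. From the definition of $I$ one reads off that its diagonal entries are $1$ and, after the partial-transposition bookkeeping that produces $M$ (Def. \ref{def:M}), $M(I) = \ket{u}\bra{u}$ up to normalization, i.e. the rank-one all-ones matrix; this is why Figure \ref{fig:1} labels $M(I)$ as an extremal element of ${\cal CP}_d$ (cf. Corollary \ref{cor:extr}). With this identification, condition~1 ($\varepsilon \leq \rho_{ij}$ for all entries) guarantees that $\tilde{\rho}$ still has non-negative entries, so it remains a valid DS state with $M(\tilde{\rho})$ entry-wise non-negative, i.e. in ${\cal N}_d$. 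Condition~2 is the positivity requirement: subtracting $\varepsilon\,\ket{u}\bra{u}$ (suitably normalized so the scalar matches $\varepsilon d$) from $M(\rho)$ must preserve positive semidefiniteness, and the bound $\varepsilon d \leq (\bra{u}\frac{1}{M(\rho)}\ket{u})^{-1}$ with $\ket{u}\in{\cal R}(M(\rho))$ is exactly the Schur-complement / generalized-Haynsworth condition ensuring $M(\rho) - \varepsilon\,\ket{u}\bra{u} \succeq 0$. Thus after conditions~1 and~2 we know $M(\tilde{\rho})$ is DNN, equivalently $\tilde{\rho}$ is PPT by Theorem \ref{thm:KingsLanding}.

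The final and decisive step uses condition~3 to upgrade ``DNN'' to ``CP'' for $M(\tilde{\rho})$. A symmetric non-negative matrix $A$ is diagonal dominant when $A_{ii} \geq \sum_{j\neq i} |A_{ij}| = \sum_{j\neq i} A_{ij}$ for every $i$. Writing out the diagonal and off-diagonal entries of $M(\tilde{\rho}) = M(\rho) - \varepsilon M(I)$ in terms of the $\rho_{ij}$, the off-diagonal entries each decrease by a fixed amount proportional to $\varepsilon$ while the diagonal entries decrease by $\varepsilon$; collecting the net effect on the row sums yields precisely the inequality $\rho_{ii} + \varepsilon(d-2) \geq \sum_{j\neq i}\rho_{ji}$ of hypothesis~3 as the statement ``$M(\tilde{\rho})$ is diagonal dominant.'' Invoking ${\cal DD}_d \subseteq {\cal CP}_d$ \cite{kaykobad1987nonnegative} then gives $M(\tilde{\rho}) \in {\cal CP}_d$, so $\tilde{\rho}$ is separable, and therefore so is $\rho$.

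The main obstacle I anticipate is bookkeeping rather than conceptual: one must track carefully how the factors of $1/2$ in the definition of $M(\rho)$ (Def. \ref{def:M}), the factor of $2$ multiplying the off-diagonal terms in the definition of $I$, and the normalization implicit in condition~2 all interact, so that the three abstract requirements ``non-negative entries,'' ``$M \succeq 0$,'' and ``diagonal dominant'' translate \emph{exactly} into hypotheses~1,~2,~3 with the stated constants ($\varepsilon d$ in condition~2 and the $(d-2)$ coefficient in condition~3). Verifying that $\ket{u}\in{\cal R}(M(\rho))$ is the correct range condition guaranteeing the pseudo-inverse bound is meaningful (rather than vacuous or too strong) is the subtle point of condition~2, and matching the $(d-2)$ shift in condition~3 is the delicate part of the diagonal-dominance computation.
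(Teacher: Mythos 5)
Your proposal is correct and follows essentially the same route as the paper's own proof in Appendix \ref{proof:Annathm}: write $\rho = \tilde{\rho} + \varepsilon I$, use condition~1 for entry-wise non-negativity of $\tilde{\rho}$, condition~2 together with the pseudo-inverse/Cauchy--Schwarz (equivalently Schur-complement) bound for $M(\tilde{\rho}) = M(\rho) - \varepsilon d \ket{u}\bra{u} \succeq 0$, and condition~3 for diagonal dominance, which yields complete positivity via \cite{kaykobad1987nonnegative}. The bookkeeping you flag as the remaining obstacle works out exactly as you anticipate, e.g. $\tilde{\rho}_{ii} = \rho_{ii} - \varepsilon \geq \sum_{j\neq i}\rho_{ji} - (d-1)\varepsilon = \sum_{j\neq i}\tilde{\rho}_{ji}$ is precisely hypothesis~3.
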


See the proof in Appendix \ref{proof:Annathm}.

A few comments are in order: The first condition on Theorem \ref{thm:Anna} ensures that $I$ can be subtracted from $\rho$ and $\tilde{\rho}$ will remain in the DS subspace. The second condition requires that $I$ can be subtracted from $\rho$ while maintaining the PPT property of $\tilde \rho$. If $\ket{u} \notin {\cal R}(M(\rho))$, then $\tilde \rho$ would not be PPT for any $\varepsilon \neq 0$. Therefore, the second condition gives the maximal value of $\varepsilon$ that can be subtracted such that $\tilde \rho$ remains PPT. Finally, the third condition relies on guaranteeing that $\tilde \rho$ is separable, which is ensured by $M(\tilde \rho)$ to be diagonal dominant. This means that one might need to subtract a minimal amount of $I$ to accomplish such a property (unless $\rho$ is already diagonal dominant).\\
In Example \ref{ex:Albert1} we show that ${\cal CP}_d \setminus {\cal DD}_d \neq \emptyset$ using the approach of Theorem \ref{thm:Anna}.\\

The above result can be now normalized and generalized to any other extremal element $I$ in ${\cal CP}_d$:

\begin{lemma}
  Let $\mathbf{x} \in \mathbbm{R}^d$ with $x_i \geq 0$ and $||\mathbf{x}|| > 0$. Let $I_{\mathbf x}$ be the quantum state defined as $$I_{\mathbf x} = \frac{1}{||\mathbf{x}||_1^2}\left( \sum_{i=0}^{d-1} x_i^2 \ket{ii}\bra{ii} + \sum_{0 \leq i < j < d} 2 x_i x_j\ket{D_{ij}}\bra{D_{ij}}\right).$$ 
  Then, the quantum state $I_\mathbf{x}$ is separable.
  \label{lem:X}
\end{lemma}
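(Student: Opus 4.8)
The plan is to reduce the claim to Theorem \ref{thm:Winterfell} by reading off the $M$-matrix of $I_{\mathbf x}$. First I would note that $I_{\mathbf x}$ is already written in the diagonal symmetric form of Def. \ref{def:DSS}, with weights $p_{ii} = x_i^2/||\mathbf{x}||_1^2$ and $p_{ij} = 2x_i x_j/||\mathbf{x}||_1^2$ for $i<j$; these are manifestly non-negative (since $x_i\geq 0$) and sum to $1$ because $\left(\sum_i x_i\right)^2 = ||\mathbf{x}||_1^2$. Comparing with Eq. (\ref{eq:def:M}), whose diagonal entries are $p_{ii}$ and whose off-diagonal entries are $p_{ij}/2$, the associated matrix is
$$M(I_{\mathbf x}) = \frac{1}{||\mathbf{x}||_1^2}\,\vec{x}\,\vec{x}^{\,T}.$$
The key step is then immediate: this is $B B^T$ with $B = \vec{x}/||\mathbf{x}||_1$ a $d\times 1$ matrix whose entries $x_i \geq 0$ are non-negative, so $M(I_{\mathbf x})$ is a rank-one completely positive matrix, i.e. an extremal ray of ${\cal CP}_d$ (consistent with Corollary \ref{cor:extr}, since $p_{ij}=2x_ix_j/||\mathbf{x}||_1^2 = 2\sqrt{p_{ii}p_{jj}}$). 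By Theorem \ref{thm:Winterfell}, $I_{\mathbf x}$ is therefore separable.

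Alternatively, and in direct parallel with Lemma \ref{lem:I}, I would exhibit an explicit separable decomposition as a continuous convex combination of product vectors. Setting $\ket{e_{\mathbf x}(\vec{\varphi})} = \sum_{k=0}^{d-1}\sqrt{x_k}\,e^{\mathbbm{i}\varphi_k}\ket{k}$, I would claim
$$I_{\mathbf x} = \frac{1}{||\mathbf{x}||_1^2}\int_{[0,2\pi]^{d}} \frac{\mathrm{d}\vec{\varphi}}{(2\pi)^d}\,\big(\ket{e_{\mathbf x}(\vec{\varphi})}\bra{e_{\mathbf x}(\vec{\varphi})}\big)^{\otimes 2},$$
and verify it by carrying out the phase integral exactly as in Lemma \ref{lem:I}: the average of $e^{\mathbbm{i}(\varphi_i+\varphi_j-\varphi_k-\varphi_l)}$ yields $\delta_{i,k}\delta_{j,l}+\delta_{i,l}\delta_{j,k}-\delta_{i,j,k,l}$, while the surviving amplitudes $\sqrt{x_i x_j x_k x_l}$ collapse to $x_i^2$ on the $\ket{ii}\bra{ii}$ terms and to $x_i x_j$ on the $\ket{ij}\bra{ij}$ and $\ket{ij}\bra{ji}$ terms, reproducing $||\mathbf{x}||_1^2\,I_{\mathbf x}$. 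Since the integrand is a product state for every $\vec\varphi$, separability follows.

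The only genuinely subtle point I anticipate is the correct choice of amplitudes: to obtain weights proportional to $x_i^2$ on the diagonal and $x_i x_j$ off-diagonal one must use amplitudes $\sqrt{x_k}$ rather than $x_k$, since the double tensor power effectively squares each amplitude. With that choice the normalization is automatic, because the integrand has trace $\left(\sum_k x_k\right)^2 = ||\mathbf{x}||_1^2$, and the hypothesis $||\mathbf{x}|| > 0$ together with $x_k \geq 0$ guarantees $||\mathbf{x}||_1 > 0$, so vanishing components $x_k=0$ cause no difficulty. Either route settles the claim, but the $M$-matrix argument via Theorem \ref{thm:Winterfell} is by far the shortest, so I would present that as the main proof and mention the integral as the constructive analogue of Lemma \ref{lem:I}.
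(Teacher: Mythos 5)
Your proposal is correct, and in fact contains two valid proofs. Your secondary route (the phase integral with amplitudes $\sqrt{x_k}$) is essentially the paper's own proof verbatim: the paper sets $\ket{e(\vec\varphi)}=\sum_i\sqrt{x_i/||\mathbf{x}||_1}\,e^{\mathbbm{i}\varphi_i}\ket{i}$ and integrates $(\ket{e(\vec\varphi)}\bra{e(\vec\varphi)})^{\otimes 2}$ over $[0,2\pi]^d$, obtaining exactly the Kronecker-delta structure $\delta_{i,k}\delta_{j,l}+\delta_{i,l}\delta_{j,k}-\delta_{i,j,k,l}$ you describe; you have only moved the normalization $1/||\mathbf{x}||_1^2$ outside the integral. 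Your primary route is genuinely different: you read off $M(I_{\mathbf x})=\vec{x}\,\vec{x}^{\,T}/||\mathbf{x}||_1^2$, observe it is a rank-one completely positive matrix, and invoke Theorem \ref{thm:Winterfell}. That is logically sound and not circular (Theorem \ref{thm:Winterfell} is proved independently in Appendix \ref{app:ProofOfWinterfell}), and it is indeed shorter. What the paper's explicit construction buys, however, is a concrete product-state decomposition of $I_{\mathbf x}$ that stands on its own and mirrors Lemma \ref{lem:I}; this matters because $I_{\mathbf x}$ is subsequently used in Theorem \ref{thm:JordisGen} as an explicit extremal separable ingredient, and because the converse direction of Theorem \ref{thm:Winterfell} is itself a nontrivial discrete phase-averaging construction, so leaning on it hides rather than removes the work. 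Both arguments are complete; your checks of normalization, non-negativity of the weights, and the harmlessness of vanishing components $x_k=0$ are all accurate.
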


See the proof in Appendix \ref{proof:lemmaIx}.

Note that the corresponding $M(I_{\mathbf x})$ is given by $\ket{u_{\mathbf x}}\bra{u_{\mathbf x}}$, where $\ket{u_{\mathbf x}} = \mathbf{x}/||\mathbf{x}||_1$. The sum of the elements of $M(I_{\mathbf x})$ is then one; \textit{i.e.}, $||M(I_{\mathbf x})||_1 = 1$.

Lemma \ref{lem:X} allows us to give a sufficient condition for a state $\rho$ is separable. This time the idea is to decompose $\rho$ as a convex combination between $I_{\mathbf x}$, which is a state that is extremal in the set of separable DS states, and a state $\tilde \rho$ which is deep enough in the interior of the set of separable states, such that we can certify its separability by other means (by showing that $M(\tilde{\rho})$ is diagonal dominant and doubly non-negative; therefore completely positive \cite{kaykobad1987nonnegative}).

\begin{theorem}
\label{thm:JordisGen}
 Let $\rho$ be a two-qudit PPTDS state with associated $M(\rho)$. Let $\mathbf{x} \in \mathbbm{R}^d$ with $x_i > 0$. If there exists a $\lambda \in [0, 1)$ such that
 \begin{enumerate}
  \item $\lambda \leq (M(\rho))_{ij} ||\mathbf{x}||_1^2/x_i x_j$ for all $i$ and $j$,
  \item $\lambda \leq 1/ \bra{u_{\mathbf{x}}} \frac{1}{M(\rho)} \ket{u_{\mathbf{x}}}$ and $\ket{u_{\mathbf x}} \in {\cal R}(M(\rho))$, where ${\cal R}(M(\rho))$ is the range of $M(\rho)$ and $\frac{1}{M(\rho)}$ denotes the pseudo-inverse of $M(\rho)$,
  \item $\lambda x_i (||\mathbf{x}||_1 - 2 x_i) \geq ||\mathbf{x}||_1^2 \left[ \sum_{j \neq i} (M(\rho))_{ij} - (M(\rho))_{ii}\right]$ for all $i$,
 \end{enumerate}
then $\rho$ is separable. Equivalently, then $M(\rho)$ is completely positive.
\end{theorem}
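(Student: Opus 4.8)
The plan is to invoke Theorem \ref{thm:Winterfell} to reduce the statement to showing that $M(\rho)$ is completely positive, and then to exhibit $M(\rho)$ explicitly as a sum of two CP matrices. Concretely, I would set $N := M(\rho) - \lambda M(I_{\mathbf{x}})$, where by the remark following Lemma \ref{lem:X} the matrix $M(I_{\mathbf{x}}) = \ket{u_{\mathbf{x}}}\bra{u_{\mathbf{x}}}$ is the rank-one matrix built from $\ket{u_{\mathbf{x}}} = \mathbf{x}/\|\mathbf{x}\|_1$. Since $x_i > 0$, this vector has non-negative entries, so $\lambda M(I_{\mathbf{x}})$ is a non-negative multiple of an extremal ray $\vec{b}\,\vec{b}^T$ (with $\vec{b}\geq 0$) of ${\cal CP}_d$ and is therefore CP. It then suffices to prove that the remainder $N$ is CP; because ${\cal CP}_d$ is closed under addition, $M(\rho) = \lambda M(I_{\mathbf{x}}) + N$ would then be CP as well.

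To show $N$ is CP I would verify that it is doubly non-negative and diagonally dominant, and then invoke the inclusion ${\cal DD}_d \subseteq {\cal CP}_d$ of \cite{kaykobad1987nonnegative}. Each of the three hypotheses is designed to deliver exactly one of these properties. Condition 1, rewritten using $M(I_{\mathbf{x}})_{ij} = x_i x_j/\|\mathbf{x}\|_1^2$, reads $\lambda M(I_{\mathbf{x}})_{ij} \leq (M(\rho))_{ij}$, which is precisely the statement that every entry of $N$ is non-negative. Condition 2 is the one guaranteeing $N \succeq 0$: here I would apply the standard rank-one subtraction lemma, namely that for $M \succeq 0$ one has $M - \lambda \ket{v}\bra{v} \succeq 0$ if and only if $\ket{v}\in{\cal R}(M)$ and $\lambda \bra{v}\frac{1}{M}\ket{v}\leq 1$, specialized to $\ket{v} = \ket{u_{\mathbf{x}}}$ (the range hypothesis and the pseudo-inverse bound are exactly the two clauses of Condition 2). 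Together, Conditions 1 and 2 show $N$ is DNN.

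Finally, Condition 3 is the diagonal-dominance requirement for $N$. I would expand the inequality $(N)_{ii} \geq \sum_{j\neq i}(N)_{ij}$, substitute $(N)_{ii} = (M(\rho))_{ii} - \lambda x_i^2/\|\mathbf{x}\|_1^2$ and $(N)_{ij} = (M(\rho))_{ij} - \lambda x_i x_j/\|\mathbf{x}\|_1^2$, and use $\sum_{j\neq i}x_j = \|\mathbf{x}\|_1 - x_i$ to collapse the $\mathbf{x}$-dependent terms into $-\lambda x_i(\|\mathbf{x}\|_1 - 2x_i)/\|\mathbf{x}\|_1^2$; multiplying through by $\|\mathbf{x}\|_1^2$ then reproduces Condition 3 verbatim. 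With $N$ shown to be symmetric, entry-wise non-negative, and diagonally dominant, \cite{kaykobad1987nonnegative} gives $N\in{\cal DD}_d\subseteq{\cal CP}_d$, completing the decomposition of $M(\rho)$ into CP summands. The role of $\lambda \in [0,1)$ is to keep $\tilde\rho := N/(1-\lambda)$ a bona fide normalized state, so the construction can be read as the convex decomposition $\rho = \lambda I_{\mathbf{x}} + (1-\lambda)\tilde\rho$; for the CP argument itself only $\lambda \geq 0$ is actually needed.

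I expect the sole genuinely non-routine ingredient to be the rank-one subtraction lemma behind Condition 2: establishing that subtracting $\lambda\ket{u_{\mathbf{x}}}\bra{u_{\mathbf{x}}}$ preserves positive semi-definiteness exactly when $\ket{u_{\mathbf{x}}}$ lies in ${\cal R}(M(\rho))$ and $\lambda \leq 1/\bra{u_{\mathbf{x}}}\frac{1}{M(\rho)}\ket{u_{\mathbf{x}}}$. This one proves by restricting to ${\cal R}(M(\rho))$, where $M(\rho)$ is invertible, and applying a Schur-complement (equivalently, an $M(\rho)^{-1/2}$-conjugation) argument. Everything else reduces to the bookkeeping of matching each hypothesis to one defining property of a doubly non-negative, diagonally dominant matrix, mirroring the proof of the special case $\mathbf{x} = (1,\dots,1)$ in Theorem \ref{thm:Anna}.
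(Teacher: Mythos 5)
Your proposal is correct and follows essentially the same route as the paper's proof: the paper also writes $\rho = (1-\lambda)\tilde\rho + \lambda I_{\mathbf{x}}$ and shows that $M(\tilde\rho) = \frac{1}{1-\lambda}\bigl(M(\rho) - \lambda M(I_{\mathbf{x}})\bigr)$ is entry-wise non-negative (Condition 1), positive semi-definite (Condition 2, via the same Cauchy--Schwarz/pseudo-inverse argument you package as the rank-one subtraction lemma), and diagonally dominant (Condition 3), then invokes \cite{kaykobad1987nonnegative}. Your only departures --- working with the unnormalized $N = (1-\lambda)M(\tilde\rho)$ and phrasing the conclusion as a sum of two CP matrices rather than a convex combination of separable states --- are cosmetic.
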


See the proof in Appendix \ref{proof:JordisGen} (\textit{i.e.,} write $\rho=(1-\lambda)\tilde{\rho}+\lambda I_\textbf{x}$ and ensure that the associated $M(\tilde{\rho})$ is completely positive).\\
Notice that Theorem \ref{thm:JordisGen} provides an advantage over Theorem \ref{thm:Anna} since the parameters of $I_\textbf{x}$ are not fixed which allows to consider a bigger family of decompositions $M(\rho)\in\mathcal{CP}_d$ parametrized by $\textbf{x}$. In Example \ref{ex:Albert2} we attempt to apply both Theorems in order to guarantee separability and illustrate such advantage.

\paragraph{Example.} --
\label{ex:Albert2}
In this example we provide a PPTDS state with associated $M(\rho)\in \mathcal{CP}_d\setminus \mathcal{DD}_d$ and we show how to apply Theorem \ref{thm:JordisGen} to guarantee separability. Furthermore, we also apply Theorem \ref{thm:Anna} to illustrate the advantage of Theorem \ref{thm:JordisGen}.\\
Take the following DS state $\rho\in \mathbbm{C}^3$ with associated
\begin{equation}\label{eq:example2}
M(\rho)=\left(
\begin{array}{ccc}
\alpha&\beta&\gamma\\
\beta&\delta&\beta\\
\gamma&\beta&\epsilon\\
\end{array}
\right)=\frac{1}{100}\left(
\begin{array}{ccc}
19&8&11.5\\
8&6.4&8\\
11.5&8&19.6\\
\end{array}
\right),
\end{equation}
where it can be checked that the state $\rho$ is normalized.
A priori we do not know if this state is separable, the goal is to apply Theorems \ref{thm:Anna} and \ref{thm:JordisGen} in order to see if separability can be guaranteed. For both Theorems the more restrictive between conditions 1 and 2 provides an upper bound for the corresponding decomposition and condition 3 a lower bound but, as mentioned, Theorem \ref{thm:JordisGen} offers more flexibility since such bound can be varied by fitting $I_\textbf{x}$. This example illustrates this fact since we will see that Theorem \ref{thm:JordisGen} guarantees separability but Theorem \ref{thm:Anna} does not.  \\
Lets start with Theorem \ref{thm:JordisGen}. For the given case \eqref{eq:example2} we want to find if a convex decomposition $M(\rho)=(1-\lambda)M(\tilde{\rho})+\lambda M(I_\textbf{x})$ exists while fulfilling the conditions of the theorem. For instance, for illustrative purposes we fix $\lambda=0.8$ and by numerical means we obtain that a possible convex combination would be with an $M(I_\textbf{x})=\ket{u_\textbf{x}}\bra{u_\textbf{x}}$ given by $\ket{u_\textbf{x}} = 1/100(37.46\ket{0} + 25.16\ket{1} + 
37.38\ket{2})$. Lets proceed to show that the given $M(\rho)$ and $M(I_\textbf{x})$ meet the conditions of Theorem \ref{thm:JordisGen}. Condition 1 provides the more restrictive upper bound given by
\begin{equation}
\lambda \leq \frac{\gamma ||\textbf{x}||_1^2}{x_1x_3}=0.8213,
\end{equation}
while the lower bound will be given by the following case of Condition 3
\begin{equation}
\lambda \geq \frac{||\textbf{x}||_1^2[2\beta-\delta]}{x_2(||\textbf{x}||_1-\delta)}=0.7681.
\end{equation}
Therefore, there exists a range of values $\lambda \in [0.7681,0.8213]$ that satisfy the conditions of Theorem \ref{thm:JordisGen} and certifies that the state $\rho$ is separable. Notice that, for illustrative purposes, once we found an $M(I_\textbf{x})$ fulfilling the conditions we fixed it to find a range of values for $\lambda$ but we could have allowed for more freedom and find a bigger range of possible decompositions.\\

Now lets see what happens with Theorem \ref{thm:Anna}. In this case the most restrictive upper bound is given by Condition 2 
\begin{equation}
\varepsilon \leq (\bra{u} \frac{1}{M(\rho)}\ket{u})^{-1}=0.06,
\end{equation}
while the most restrictive lower bound will be given by the following case of Condition 3 
\begin{equation}
\varepsilon \geq 2\beta-\delta=0.096.
\end{equation} 
Thus, for this case there does not exist an $\varepsilon$ satisfying the conditions for Theorem \ref{thm:Anna}, while there exists a range of $\lambda$ satisfying conditions for Theorem \ref{thm:JordisGen} and therefore illustrating its advantage by being able to certify separability.

\section{A class of PPT-entangled quasi-DS states}
\label{sec:Class}
In this Section, we introduce a uni-parametric class of $N$-qubit PPTESS, for an odd number of qubits. As it has been shown in \cite{KoreanGuy, QuesadaRanaSanpera}, $N$-qubit PPTDS states are fully separable. The class we introduce can be seen as a $N$-qubit PPTDS state with slight GHZ coherences. Surprisingly, in the family of states we provide, an arbitrarily small weight on the non-diagonal elements (in the Dicke basis) allows the state to be genuinely multipartite entangled while maintaining the PPT property.

The procedure we have chosen to derive this class of states is based on the iterative algorithm for finding extremal PPT symmetric states \cite{TuraPRA12, AugusiakPRA12} (see also \cite{Leinaas}), which we briefly recall here in the interest of completeness. One starts with an initial symmetric state $\rho_0$ that is fully separable; for instance, the symmetric completely mixed state. Then, one picks a random direction $\sigma_0$ in the set of quantum states and subtracts it from the initial state while preserving the PPT property, therefore obtaining $\rho_0 - x_0 \sigma_0$, $x_0 > 0$. One necessarily finds a critical $x_0^*$ such that one arrives at the boundary of the PPT set, where the rank of $\rho_0 - x_0^* \sigma_0$ or one of its partial transpositions must have decreased. Hence, at least one new vector appears in the kernel of the state or in the kernel of some of its partial transpositions. This state with lower ranks is set as the initial state for the next iteration $\rho_1 = \rho_0 - x_0^* \sigma_0$. The new direction $\sigma_1$ is chosen such that it preserves all the vectors present in the kernels of both the state and its partial transpositions. This process is repeated until no new improving direction can be found, yielding an extremal state $\rho_k$ in the PPT set. As the PPT set contains all separable states, we note that if the rank of such extremal PPT state is greater than one, then it cannot be extremal in the set of separable states (because these are pure product vectors, which have rank one), therefore it must be entangled. The study carried out in \cite{TuraPRA12, AugusiakPRA12} looked for typical extremal PPT states by exploring random directions every time. However, by carefully picking these directions, one can look for classes of states of different forms, such as the ones presented in Theorem \ref{thm:class}.

In Example \ref{ex:4qubits} we present a $4$-qubit PPT-entangled symmetric state whose density matrix is sparse with real entries when represented in the computational basis and has a closed analytical form.

The class of states we are going to present is furthermore symmetric with respect to the $\ket{0} \leftrightarrow \ket{1}$ exchange and, to simplify our proof and take advantage of this symmetry, we shall consider only an odd number of qubits $N=2K +1$, with $K>1$.
\begin{theorem}
\label{thm:class}
Let $N=2K+1$ with $K\in \mathbbm{N}$ and $K>1$.  Let $Z\in (0, \infty)$ and $\sigma = \pm 1$. We define the sequence $\{f_k(Z)\}_{k \in \mathbbm{Z}}$ through the recurrence relation $f_{k+2}(Z) = (2+Z) f_{k+1}(Z) - f_k(Z)$ and the initial conditions $f_0(Z)=1$ and $f_1(Z) = 1+Z$. We also define $\lambda_k(Z):=f_{K-k}(Z)$.
The diagonal part of the state is defined as
\begin{equation}
d(Z):=\sum_{k=0}^N {N \choose k} \lambda_k(Z) \ket{D_N^k}\bra{D_N^k},
\end{equation}
and the off-diagonal part as
\begin{equation}
o(\sigma):=\sigma(\ket{D_N^0}\bra{D_N^N} + \ket{D_N^N}\bra{D_N^0}).
\end{equation}
Then, the $N$-qubit symmetric state
 \begin{equation}
  \rho(Z):=\frac{d(Z)+o(\sigma)}{2(4+Z)^K},
 \end{equation}is PPT with respect to every bipartition, has ranks $(N+1, 2N, 2N, \ldots, 2N, 2N-1)$ and is extreme in the PPT set (therefore, it is entangled).
 \end{theorem}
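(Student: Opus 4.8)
The plan is to verify each of the four claimed properties of $\rho(Z)$ in turn: (i) that it is a valid (positive semi-definite, normalizable) symmetric state, (ii) that it is PPT across every bipartition, (iii) that its partial transpositions have exactly the stated ranks $(N+1, 2N, \ldots, 2N, 2N-1)$, and (iv) that it is extremal in the PPT set. Since the state is permutationally symmetric, every bipartition $A|B$ with $|A|=m$ is characterized by $m$ alone, so there are only $K$ genuinely distinct partial transpositions to analyze rather than $2^{\lfloor N/2 \rfloor}$. First I would fix a bipartition $(m, N-m)$ and write $\rho(Z)^{T_B}$ explicitly in the Dicke basis. The diagonal part $d(Z)$ contributes entries indexed by the Dicke weights, while the off-diagonal GHZ term $o(\sigma) = \sigma(\ket{D_N^0}\bra{D_N^N} + \text{h.c.})$ gets redistributed under $T_B$ into off-diagonal blocks connecting sectors of fixed total excitation. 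The key structural observation I would exploit is that the transposed matrix decomposes into small blocks (at most $2\times 2$ or $3 \times 3$) labelled by the total number of excitations, because the GHZ coherence $\ket{0^N}\leftrightarrow\ket{1^N}$ only couples a sparse set of computational-basis entries after partial transposition.

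The heart of the proof is showing positivity of these blocks, and this is where the recurrence $f_{k+2} = (2+Z)f_{k+1} - f_k$ does the work. The multinomial weights ${N \choose k}\lambda_k(Z)$ are precisely tuned so that the relevant $2\times2$ (or $3\times3$) minors controlling the coupling between the $\ket{D_N^0}$/$\ket{D_N^N}$ sectors and the bulk vanish or stay non-negative. Concretely, I would compute the determinants of each block and show, using the defining recurrence and the initial conditions $f_0=1$, $f_1=1+Z$, that they are exactly zero on the critical blocks (forcing rank drops, hence the ranks $N+1$ and $2N-1$ on the two extreme sectors) and strictly positive on the intermediate blocks (giving full rank $2N$). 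The factor $(2+Z)$ in the recurrence is the eigenvalue that makes the transfer-matrix structure of $f_k$ align with the combinatorial coefficients coming from the $T_B$ action on Dicke states; I expect the determinant of a generic block to telescope into a product of $f$-differences that collapse by the recurrence. The normalization $2(4+Z)^K$ is the trace of $d(Z)$, which I would confirm by summing ${N \choose k}\lambda_k(Z)$ using a generating-function or telescoping identity for the $f_k$.

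For extremality in the PPT set, I would invoke the standard edge/extremality criterion used in the iterative algorithm of \cite{TuraPRA12, AugusiakPRA12}: a PPT state $\rho$ is extremal if the only Hermitian $\delta$ for which $\rho \pm t\delta$ remains PPT for small $t$ (equivalently, the only $\delta$ whose range and the ranges of whose partial transpositions are contained in those of $\rho$ and its partial transpositions) is $\delta \propto \rho$. This reduces to a kernel-dimension count: I would assemble the kernels of $\rho(Z)$ and of each $\rho(Z)^{T_B}$ from the rank data in step (iii), write the linear system expressing that $\delta$ preserves all these kernels, and show the solution space is one-dimensional. Because the state is symmetric, $\delta$ may be taken symmetric, which drastically cuts the number of free parameters; the rank signature $(N+1, 2N, \ldots, 2N, 2N-1)$ is engineered so that the accumulated kernel constraints exactly saturate the dimension of the symmetric Hermitian operator space.

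The main obstacle I anticipate is the bookkeeping in step (iii): correctly tracking how $T_B$ acting at cut position $m$ scrambles the GHZ coherence into the excitation-graded blocks, and then evaluating the resulting block determinants as functions of $Z$ uniformly in both $m$ and the excitation sector. The recurrence relation strongly suggests a transfer-matrix/continuant-determinant structure, so I would try to recognize each block determinant as a continuant evaluated on the sequence $\{f_k\}$ and close it via the characteristic equation $x^2 - (2+Z)x + 1 = 0$ of the recurrence. Once the rank signature is pinned down rigorously, positivity (step ii) and extremality (step iv) follow comparatively mechanically, so I would invest the bulk of the effort in a clean combinatorial description of $\rho(Z)^{T_B}$ before computing anything.
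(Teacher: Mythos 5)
There is a genuine gap in the central step. You assert that after partial transposition across the cut $(m,N-m)$ the state decomposes into blocks of size at most $3\times 3$, and you plan to establish positivity by computing $2\times 2$ and $3\times 3$ determinants. This is false for $N\geq 7$: writing $\rho^{\Gamma_m}$ on $\mathbbm{C}^{m+1}\otimes\mathbbm{C}^{N-m+1}$ and grading by the excitation difference $n=j-i$, the block at level $n$ has size $\min\{m,N-m+n\}-\max\{0,n\}+1$, so e.g.\ for $N=7$, $m=3$, $n=0$ one gets a $4\times 4$ block, and the sizes grow linearly with $K$. The paper's Lemma \ref{lem:blocks} identifies these blocks as $D_n^{(m)}H_n^{(m)}D_n^{(m)}$ with $H_n^{(m)}$ a Hankel matrix built from the $\lambda$'s, and the crux of the whole proof is showing that these arbitrarily large Hankel blocks $B_n^{(m)}=(f_{i+k-n})$ are positive semidefinite of rank exactly $2$. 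Vanishing of the $3\times 3$ minors (which is how the paper \emph{derives} the recurrence $f_{k+2}=(2+Z)f_{k+1}-f_k$) only forces rank $\leq 2$; it does not by itself give positive semidefiniteness of a $q\times q$ block, nor the lower bound rank $=2$ needed for the claimed rank signature $(N+1,2N,\ldots,2N,2N-1)$. The paper closes this by an explicit two-row Gram/Cholesky factorization of each $B_n^{(m)}$, whose validity rests on the identities $I_{a,b}:=f_mf_{m+a+b}-f_{m+a}f_{m+b}\geq 0$ and $I_{a,b}=\sqrt{I_{a,a}I_{b,b}}$, which in turn hinge on the characteristic roots satisfying $\alpha\beta=1$ (so that $I_{a,b}$ is independent of $m$). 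Your "continuant/transfer-matrix" instinct points in the right direction, but as written your plan would stall exactly here: you would verify that small minors vanish and still not have positivity or the rank count for the large blocks.

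The remaining parts of your plan are essentially the paper's: the extremality argument via range containment of all partial transpositions, reduced to the $(N+2)$-dimensional subspace spanned by the diagonal Dicke projectors and the GHZ coherence and then pinned down by the block kernels, matches Lemma \ref{lem:extremal}; and the trace computation $\mathrm{Tr}\,d(Z)=2(4+Z)^K$ via a binomial/generating-function identity matches Lemma \ref{lem:trace} (the paper uses $f_p=f_{-p-1}$ and Newton's binomial). But both of these consume the rank and kernel data from the block analysis, so the proposal cannot be completed until the large-block positivity issue above is repaired.
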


We split the proof into several lemmas, for better readability and intuition on the above definitions.

\paragraph{Proof.} --

Let us start with some general considerations on the structure of $\rho(Z)$. In order to efficiently apply the partial transposition operation with respect to $m$ subsystems, we need to express $\rho(Z)$ acting on $\mathbbm{C}^{m+1}\otimes \mathbbm{C}^{N-m+1}$.
\begin{lemma}
\label{lem:blocks}
Let $\lambda_k' = {N \choose k}\lambda_k$. Let $\rho$ be an $N$-qubit PPTDS state with diagonal elements $\{\lambda_k'\}_{k=0}^N$ and GHZ coherences $o(\sigma)$. Its partial transposition with respect to $m$ subsystems $\rho^{\Gamma_m}$, acting on $\mathbbm{C}^{m+1}\otimes \mathbbm{C}^{N-m+1}$, block-diagonalizes as
 \begin{equation}
  \rho^{\Gamma_m}=\left(
  \begin{array}{cc}
    \lambda_m & \sigma\\
    \sigma & \lambda_{N-m}
  \end{array}
  \right)\bigoplus \left(
    \bigoplus_{n=-m+1}^{N-m-1} A_{n}^{(m)}
    \right),
    \label{eq:BlocksPT}
 \end{equation}
 where $A_n^{(m)} := D_{n}^{(m)} H_{n}^{(m)} D_{n}^{(m)}$. $A_n^{(m)}$ is a square matrix of size $\min \{m,N-m+n\}-\max\{0,n\}+1$ elements, which decomposes as a product of the diagonal matrix $D_n^{(m)}$, with diagonal elements
 \begin{equation}
  \left\{\sqrt{{m \choose k}{N-m \choose k-n}}\right\}_{k=\max\{0,n\}}^{\min\{m,N-m+n\}},
 \end{equation}
 and the Hankel matrix
 \begin{equation}
  H_n^{(m)} = (\lambda_{i+k-n})_{\max\{0,n\} \leq i,k \leq \min\{m,N-m+n\}}.
 \end{equation}
\end{lemma}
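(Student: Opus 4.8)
The plan is to compute the partial transposition $\rho^{\Gamma_m}$ directly in the basis of Dicke-like vectors adapted to the bipartition $\mathbbm{C}^{m+1}\otimes\mathbbm{C}^{N-m+1}$, and then read off the block structure by tracking which matrix elements can couple. First I would fix notation: writing $N$ qubits as the union of an $m$-qubit group (Alice) and an $(N-m)$-qubit group (Bob), each symmetric Dicke state $\ket{D_N^k}$ of the whole system decomposes, upon grouping, as a weighted superposition $\ket{D_N^k}\propto\sum_{a+b=k}\sqrt{\binom{m}{a}\binom{N-m}{b}}\,\ket{D_m^a}\otimes\ket{D_{N-m}^b}$, where the weights are exactly the square roots of binomials appearing in $D_n^{(m)}$. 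This is the standard symmetric coupling and I would state it as the first computational fact. The diagonal part $d(Z)=\sum_k \lambda_k'\ket{D_N^k}\bra{D_N^k}$ and the off-diagonal GHZ part $o(\sigma)=\sigma(\ket{D_N^0}\bra{D_N^N}+\ket{D_N^N}\bra{D_N^0})$ are then both expressed in the product basis $\{\ket{D_m^a}\otimes\ket{D_{N-m}^b}\}$.

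Next I would apply $\Gamma_m$, the transposition on Alice's factor. On a rank-one term $\ket{D_m^a}\bra{D_m^{a'}}\otimes\ket{D_{N-m}^b}\bra{D_{N-m}^{b'}}$, the partial transposition swaps the Alice kets and bras, sending it to $\ket{D_m^{a'}}\bra{D_m^{a}}\otimes\ket{D_{N-m}^b}\bra{D_{N-m}^{b'}}$. The key bookkeeping observation is that partial transposition \emph{preserves the quantity} $n:=a-b'=a'-b$ on each surviving matrix element (equivalently, the combination ``Alice-row minus Bob-column'' is conserved), so the transposed operator is block-diagonal with respect to the sectors labelled by this integer $n$. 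For the diagonal part, a term $\ket{D_N^k}\bra{D_N^k}$ contributes off-diagonal product terms $\ket{D_m^a}\otimes\ket{D_{N-m}^{k-a}}$ paired with $\bra{D_m^{a'}}\otimes\bra{D_{N-m}^{k-a'}}$; after $\Gamma_m$ the index $k$ itself becomes a summation variable and the resulting matrix entry between Alice-row $i$ and Alice-column $k$ (in the notation of the statement) is $\lambda_{i+k-n}$ dressed by the binomial factors, which is precisely the Hankel structure $H_n^{(m)}=(\lambda_{i+k-n})$ sandwiched by the diagonal $D_n^{(m)}$. Enumerating the admissible values of $a$ (namely $\max\{0,n\}\le a\le\min\{m,N-m+n\}$) gives both the size of each block and the range $-m+1\le n\le N-m-1$ of genuinely matrix-valued sectors.

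Finally I would isolate the GHZ contribution. The term $o(\sigma)$ lives entirely in the span of $\ket{D_N^0}=\ket{0}^{\otimes N}$ and $\ket{D_N^N}=\ket{1}^{\otimes N}$, which in the product picture are $\ket{D_m^0}\otimes\ket{D_{N-m}^0}$ and $\ket{D_m^m}\otimes\ket{D_{N-m}^{N-m}}$. Under $\Gamma_m$ the cross term $\ket{D_N^0}\bra{D_N^N}=\ket{D_m^0}\bra{D_m^m}\otimes\ket{D_{N-m}^0}\bra{D_{N-m}^{N-m}}$ maps to $\ket{D_m^m}\bra{D_m^0}\otimes\ket{D_{N-m}^0}\bra{D_{N-m}^{N-m}}$, which couples the Alice index $a=m$ with $b'=N-m$, i.e. it sits in the sector $n=a-b'=m-(N-m)$; together with its conjugate and the two diagonal weights $\lambda_m,\lambda_{N-m}$ (coming from the $k=0$ and $k=N$ diagonal terms, for which the binomial dressing is $1$), these four entries assemble into the claimed $2\times2$ block $\left(\begin{smallmatrix}\lambda_m&\sigma\\\sigma&\lambda_{N-m}\end{smallmatrix}\right)$. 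I would check that this $2\times2$ block is disjoint from the generic Hankel sectors by verifying that the only product basis vectors occurring in it, namely $\ket{D_m^0}\otimes\ket{D_{N-m}^0}$ and $\ket{D_m^m}\otimes\ket{D_{N-m}^{N-m}}$, are the \emph{unique} size-one representatives of their respective extreme $n$-sectors, so that the GHZ coherence merges exactly these two one-dimensional pieces and nothing else. The main obstacle I anticipate is purely combinatorial rather than conceptual: getting the index ranges and the binomial normalization factors $\sqrt{\binom{m}{k}\binom{N-m}{k-n}}$ consistent so that the Hankel entries come out as $\lambda_{i+k-n}$ with no stray factors, and confirming that the extremal sectors of the Hankel family do not already contain the basis vectors appropriated by the $2\times2$ GHZ block, so that the direct-sum decomposition of Eq.~(\ref{eq:BlocksPT}) is genuine.
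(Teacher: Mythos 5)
Your overall strategy coincides with the paper's: expand each $\ket{D_N^k}$ in the coupled basis $\ket{D_m^a}\otimes\ket{D_{N-m}^b}$ with weights $\sqrt{\binom{m}{a}\binom{N-m}{b}}$, note that partial transposition on Alice's factor converts conservation of the total excitation number into conservation of the difference of excitation numbers, and read off the diagonal--Hankel--diagonal structure of each sector together with the admissible range $\max\{0,n\}\le a\le\min\{m,N-m+n\}$. That part is sound and is essentially the paper's computation of $(\rho^{\Gamma_m})^{i,j}_{k,l}=\lambda_{i+l}\,s_{N,m}(i,l;k,j)\,\delta_{j-i=l-k}$.

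The final paragraph, however, contains a concrete error that would make your closing verification fail. Your own computation shows that the transposed GHZ cross term is $\ket{D_m^m}\bra{D_m^0}\otimes\ket{D_{N-m}^0}\bra{D_{N-m}^{N-m}}$, i.e.\ it couples the product vector $\ket{D_m^m}\otimes\ket{D_{N-m}^0}$ to $\ket{D_m^0}\otimes\ket{D_{N-m}^{N-m}}$. It is these two \emph{mixed} vectors that span the $2\times2$ block, not $\ket{D_m^0}\otimes\ket{D_{N-m}^0}$ and $\ket{D_m^m}\otimes\ket{D_{N-m}^{N-m}}$ as you assert. The all-zeros and all-ones product vectors you name live at excitation-difference $0$ and $N-2m$ respectively, i.e.\ \emph{inside} generic Hankel sectors of dimension larger than one, so they are not ``the unique size-one representatives of their respective extreme $n$-sectors'' and the disjointness check you propose does not go through as stated. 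Relatedly, the diagonal entries $\lambda_m$ and $\lambda_{N-m}$ of the $2\times2$ block come from the populations $\lambda'_m\ket{D_N^m}\bra{D_N^m}$ and $\lambda'_{N-m}\ket{D_N^{N-m}}\bra{D_N^{N-m}}$ (the overlap $|\braket{D_m^m,D_{N-m}^0}{D_N^m}|^2=1/\binom{N}{m}$ cancels the normalization $\lambda'_m=\binom{N}{m}\lambda_m$), not ``from the $k=0$ and $k=N$ diagonal terms'', which instead deposit $\lambda_0$ and $\lambda_N$ at positions inside the Hankel blocks. Finally, the cross term does not ``sit in the sector $n=m-(N-m)$'': its row and column indices belong to the two \emph{different} extreme one-dimensional sectors ($n=-m$ and $n=N-m$ in the paper's Bob-minus-Alice labelling), which is exactly why those two $1\times1$ pieces merge into a single $2\times2$ block. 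The repair is purely local -- substitute the correct pair of vectors and the correct parent Dicke populations -- but as written the identification of the block and the proposed check are wrong.
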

See the Proof in Appendix \ref{proof:blocks}.

In what follows we are going to argue the construction of our class of states. Having a state with ranks as low as possible tremendously simplifies the analysis of PPT entanglement \cite{Horodecki3x3-2x4}. It is the main idea we are going to follow in defining all the elements of our class. Therefore, we first study the condition for which the block of $\rho^{\Gamma_m}$ that contains $\sigma$ has zero determinant. This gives the condition $\lambda_m \lambda_{N-m} = \sigma^2 = 1$. In particular, if we impose this condition for $m=K$, we obtain $\lambda_K^2=1$, which means (by definition) that $f_0 = 1$.

 Now we focus on the block $n=-m+1$ with $m=K$. The determinant of the $n$-th block is
 \begin{equation}
  |A_{-K+1}^{(K)}| = K(N-K) |B_{-K+1}^{(K)}| = K(N-K) \left|
  \begin{array}{cc}
   \lambda_{K-1} & \lambda_{K}\\ \lambda_{K} & \lambda_{K+1}
  \end{array}
\right| = K(N-K) (f_1 f_0 - f_0^2) = K(N-K)(f_1-1).
 \end{equation}
By imposing the determinant of $B_{-K+1}^{(K)}$ to be $Z>0$ we obtain the condition $f_1=1+Z$ with $Z>0$. This choice is arbitrary and is the one that characterizes our class.

Finally, we move to the $3\times 3$ block determinants, which we shall make $0$. These are
\begin{equation}
|B_{-m+2}^{(m)}|=
\left|
\begin{array}{ccc}
 \lambda_{m-2} & \lambda_{m-1} & \lambda_{m}\\
 \lambda_{m-1} & \lambda_{m} & \lambda_{m+1}\\
 \lambda_{m} & \lambda_{m+1} & \lambda_{m+2}
\end{array}
\right|=0.
\end{equation}
This condition reads
\begin{equation}
 \lambda_{m+2}(\lambda_{m}\lambda_{m-2}-\lambda_{m-1}^2) = \lambda_m^3 - 2 \lambda_{m-1}\lambda_m\lambda_{m+1} + \lambda_{m-2}\lambda_{m+1}^2.
 \label{eq:recurrence}
\end{equation}
We want to determine a recursive form for the $\lambda_m$ that satisfies Eq. (\ref{eq:recurrence}). This is an equation in differences which is nonlinear. It is in general extremely hard to solve these kind of equations.
Nevertheless, despite the appearance of Eq. (\ref{eq:recurrence}), one can exploit some properties: for instance, it is immediate to check that it is homogeneous of degree $3$, its coefficients sum zero on each side of the equation and the indices of each monomial sum $3m$ for all the terms. This suggests that the equation admits a solution of the form $\lambda_{m+1} \propto \lambda_m$. Indeed, any sequence of the form $\lambda_{m+1} = c \lambda_m$ with $c\in \mathbbm{R}$ is a solution. More generally, one can show that \textit{any} sequence $\lambda_{m+2} + c_1 \lambda_{m+1} + c_0 \lambda_m = 0$ is a solution, for all $c_0, c_1 \in \mathbbm{R}$. We are going to find a solution of this form, so we have to determine $c_0$ and $c_1$.

Thanks to the symmetry $\lambda_m = \lambda_{N-m}$ we can find the coefficients $c_0$ and $c_1$: Indeed, by taking $m=K$ and $m=K-1$ we have the equations
\begin{equation}
 \left\{
 \begin{array}{ccc}
  \lambda_{K+2} + c_0 \lambda_{K+1} + c_0 \lambda_{K} =  f_1 + c_0 f_0 + c_1 f_0 = 0\\
  \lambda_{K+1} + c_0 \lambda_{K} + c_0 \lambda_{K-1} =  f_0 + c_0 f_0 + c_1 f_1 = 0
 \end{array}
 \right.,
\end{equation}
which give $c_0 = 1$ and $c_1 =-2(1+Z)$ as the unique solutions.

Let us note that one can find the expression for $f_m$ in a non-recursive form, with the aid of the Z-transform:
\begin{equation}
 f_{m+2}+c_1 f_{m+1}+c_0 f_{m} = 0 \leftrightarrow F(z) = \frac{z(1+Z+c_1+z)}{z^2+z c_1 + c_0}.
\end{equation}
By undoing the Z-transform, we obtain the explicit expression for $f_m$:
\begin{equation}
\label{eq:fm}
 f_m = \frac{\alpha - 1}{\alpha - \beta} \alpha^m - \frac{\beta - 1}{\alpha - \beta} \beta^m,
\end{equation}
where $\alpha := (2 + Z + \sqrt{Z(4+Z)})/2$ and $\beta := (2 + Z - \sqrt{Z(4+Z)})/2$.

\begin{lemma}
\label{lem:Bnm}
Let $\lambda_m$ be defined as in \eqref{eq:fm}. The blocks $B_n^{(m)}$ are positive semidefinite and their rank is 2. Therefore, $\rho$ is PPT and its ranks are $(N+1, 2N....2N, 2N-1)$.
\end{lemma}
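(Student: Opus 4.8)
The plan is to exploit the closed form \eqref{eq:fm} to show that every Hankel block is a nonnegative combination of only two rank-one outer products. First I would rewrite the diagonal sequence in a manifestly two-term geometric form. Since $\alpha$ and $\beta$ are the roots of $z^2-(2+Z)z+1$, we have $\alpha\beta=1$ and $\alpha>1>\beta>0$ for $Z>0$, so $\alpha^{-1}=\beta$. Substituting $\lambda_m=f_{K-m}$ into \eqref{eq:fm} and using $\alpha^{K-m}=\alpha^K\beta^m$ and $\beta^{K-m}=\beta^K\alpha^m$ gives $\lambda_m=p\,\alpha^m+q\,\beta^m$ with $p=c_\beta\beta^K$ and $q=c_\alpha\alpha^K$, where $c_\alpha=(\alpha-1)/(\alpha-\beta)$ and $c_\beta=(1-\beta)/(\alpha-\beta)$. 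The signs are the crucial point: because $\alpha>1>\beta>0$ one has $c_\alpha,c_\beta>0$, hence $p,q>0$.

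With this, the central step is almost immediate. The Hankel block $H_n^{(m)}$ has entries $\lambda_{i+k-n}=p\,\alpha^{-n}\alpha^i\alpha^k+q\,\beta^{-n}\beta^i\beta^k$, so it equals $p\alpha^{-n}\,\vec{a}\,\vec{a}^{\,T}+q\beta^{-n}\,\vec{b}\,\vec{b}^{\,T}$ with $\vec{a}=(\alpha^i)_i$ and $\vec{b}=(\beta^i)_i$. This is a sum of two positive-semidefinite rank-one matrices with strictly positive weights, hence positive semidefinite; and since $\vec{a},\vec{b}$ are linearly independent Vandermonde vectors (as $\alpha\neq\beta$) as soon as the block has at least two rows, its rank is exactly $2$ for every block of size $\geq 2$ and $1$ for a size-one block. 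As $A_n^{(m)}=D_n^{(m)}H_n^{(m)}D_n^{(m)}$ is a congruence by an invertible positive diagonal matrix, the same positivity and rank transfer to $A_n^{(m)}$. For the remaining $2\times 2$ corner block I would use the symmetry $f_j=f_{-1-j}$ (checked from the recurrence, which gives $f_{-1}=f_0=1$), so that $\lambda_{N-m}=\lambda_m$; together with convexity of $f_j=c_\alpha\alpha^j+c_\beta\beta^j$ about its minimum at $j=-1/2$ this yields $f_j\geq 1$, hence $\lambda_m=f_{K-m}\geq 1$ with equality only at $K-m\in\{0,-1\}$. The corner determinant $\lambda_m\lambda_{N-m}-\sigma^2=\lambda_m^2-1\geq 0$ is therefore nonnegative, so the corner block is positive semidefinite, of rank $2$ except at $m=K$ where it drops to rank $1$.

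Positivity of every block of $\rho^{\Gamma_m}$ for all $1\leq m\leq K$ gives $\rho^{\Gamma_m}\succeq 0$, and the $\ket{0}\leftrightarrow\ket{1}$ and subsystem-exchange symmetries extend this to every bipartition, establishing PPT. For the ranks I would use the clean bookkeeping that, by Lemma \ref{lem:blocks}, $\rho^{\Gamma_m}$ consists of one corner block and $N-1$ Hankel blocks, each of rank $\min\{\text{size},2\}$. The block sizes $s_n$ are unimodal (trapezoidal) in $n$ with minima at the two ends $n=-m+1$ and $n=N-m-1$; evaluating the size formula there gives $s\geq 2$ for all $1\leq m\leq K$, so no Hankel block has rank $1$ and each contributes $2$. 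Thus the generic rank is $2+2(N-1)=2N$, attained for $1\leq m\leq K-1$ (corner also rank $2$); at $m=K$ the corner drops to rank $1$, giving $2N-1$; and the untransposed $\rho$ has rank $N+1$ since its diagonal Dicke populations $\binom{N}{k}\lambda_k$ are all positive and the only coherence couples $\ket{D_N^0}$ and $\ket{D_N^N}$ through a nonsingular $2\times 2$ corner ($\lambda_0\lambda_N-1=f_K^2-1>0$). I expect the main obstacle to be precisely this last counting step: confirming that the size formula of Lemma \ref{lem:blocks} never produces an unexpected size-one Hankel block and that the dimension tally $2+\sum_n s_n=(m+1)(N-m+1)$ is consistent. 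The positive-semidefinite, rank-two structure of the blocks is the easy and conceptual part, while the rank sequence is pure combinatorial bookkeeping.
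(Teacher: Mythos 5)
Your proof is correct, and it reaches the key fact by a cleaner route than the paper. The paper first bounds the rank from above by exhibiting the kernel vectors $(0,\ldots,0,1,c_1,c_0,0,\ldots,0)^T$ coming from the recurrence, and then proves positive semidefiniteness by writing down an explicit two-column Cholesky factor $L_n^{(m)}$ built from the invariant $I_{a,b}=f_m f_{m+a+b}-f_{m+a}f_{m+b}$, whose $m$-independence, non-negativity and multiplicativity are verified from the closed form of $f$. Your observation that $\lambda_m=p\,\alpha^m+q\,\beta^m$ with $p,q>0$ (which uses $\alpha\beta=1$ and $\alpha>1>\beta>0$) makes the same structure manifest at once: each Hankel block is $p\alpha^{-n}\vec a\vec a^{\,T}+q\beta^{-n}\vec b\vec b^{\,T}$, so positivity and the exact rank $2$ (via Vandermonde independence) come for free, and the congruence by the invertible $D_n^{(m)}$ transfers both to $A_n^{(m)}$. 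You also supply two pieces the paper's proof of this lemma leaves implicit: the positivity of the $2\times 2$ corner block for \emph{all} $1\le m\le K$ (via $f_j=f_{-1-j}$ and convexity giving $f_j\ge 1$, with equality exactly at $j\in\{0,-1\}$, hence rank drop only at $m=K$), and the explicit rank tally. On your one flagged worry: with the correct combinatorial count, the block of $\rho^{\Gamma_m}$ at displacement $n$ has size $\min\{m,N-m-n\}-\max\{0,-n\}+1$, which equals $2$ at both ends $n=-m+1$ and $n=N-m-1$ and is larger in between, so no interior Hankel block is $1\times 1$ and the tally $2+2(N-1)=2N$ (dropping to $2N-1$ at $m=K$) goes through; the size formula as printed in Lemma \ref{lem:blocks} has an indexing quirk ($n$ versus $-n$) that would spuriously suggest a size-one block, but the actual block structure is the trapezoid you describe.
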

See the proof in Appendix \ref{proof:Bnm}.

\begin{lemma}
\label{lem:trace}
The trace of $\rho$ is $2(4+Z)^K$.
\end{lemma}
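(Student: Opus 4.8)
The plan is to split $\rho = d(Z) + o(\sigma)$ (the unnormalized numerator appearing in Theorem \ref{thm:class}, which is the $\rho$ referred to here) into its diagonal and off-diagonal parts and note that only the diagonal part contributes to the trace. The GHZ coherences are traceless because $o(\sigma) = \sigma(\ket{D_N^0}\bra{D_N^N} + \ket{D_N^N}\bra{D_N^0})$ and $\braket{D_N^N}{D_N^0} = 0$. Since the Dicke states $\ket{D_N^k}$ are normalized, the trace reduces to a single binomial sum:
\begin{equation}
\mathrm{Tr}(\rho) = \mathrm{Tr}(d(Z)) = \sum_{k=0}^N \binom{N}{k}\lambda_k(Z) = \sum_{k=0}^{2K+1}\binom{2K+1}{k} f_{K-k}(Z),
\end{equation}
using $\lambda_k(Z) = f_{K-k}(Z)$ and $N = 2K+1$. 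The whole lemma thus reduces to evaluating this sum in closed form.

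Next I would substitute the explicit expression \eqref{eq:fm} for $f_m$ and interchange the finite sums. Writing $\Delta := \sqrt{Z(4+Z)} = \alpha - \beta$ and using the fact that $\alpha\beta = 1$ (so $\alpha^{-1} = \beta$), each geometric piece collapses by the binomial theorem:
\begin{equation}
\sum_{k=0}^{2K+1}\binom{2K+1}{k}\alpha^{K-k} = \alpha^K (1+\beta)^{2K+1}, \qquad \sum_{k=0}^{2K+1}\binom{2K+1}{k}\beta^{K-k} = \beta^K(1+\alpha)^{2K+1}.
\end{equation}
This turns the trace into $\mathrm{Tr}(\rho) = \frac{\alpha-1}{\Delta}\alpha^K(1+\beta)^{2K+1} - \frac{\beta-1}{\Delta}\beta^K(1+\alpha)^{2K+1}$.

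The computation is then closed by elementary identities for the roots of $t^2 - (2+Z)t + 1 = 0$, for which $\alpha\beta = 1$ and $\alpha+\beta = 2+Z$. From $\alpha\beta = 1$ one gets $\alpha(1+\beta) = 1+\alpha$ and $\beta(1+\alpha) = 1+\beta$, which let me rewrite $\alpha^K(1+\beta)^{2K+1} = (1+\alpha)^K(1+\beta)^{K+1}$ and $\beta^K(1+\alpha)^{2K+1} = (1+\beta)^K(1+\alpha)^{K+1}$. Pulling out the common factor $[(1+\alpha)(1+\beta)]^K = (4+Z)^K$, where $(1+\alpha)(1+\beta) = 1 + (\alpha+\beta) + \alpha\beta = 4+Z$, reduces the trace to
\begin{equation}
\mathrm{Tr}(\rho) = \frac{(4+Z)^K}{\Delta}\bigl[(\alpha-1)(1+\beta) - (\beta-1)(1+\alpha)\bigr].
\end{equation}
Expanding the bracket and again invoking $\alpha\beta = 1$ gives $(\alpha-1)(1+\beta) - (\beta-1)(1+\alpha) = 2(\alpha-\beta) = 2\Delta$, so the factors of $\Delta$ cancel and $\mathrm{Tr}(\rho) = 2(4+Z)^K$, as claimed.

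I expect the main obstacle to be pure bookkeeping rather than any conceptual difficulty: keeping the index shift $\lambda_k = f_{K-k}$ consistent with the binomial reindexing, and carefully tracking the prefactors $(\alpha-1)/\Delta$ and $(\beta-1)/\Delta$ through the rewriting. No analytic or convergence subtleties arise, since everything is a finite algebraic manipulation. It is worth emphasizing that the identity $(1+\alpha)(1+\beta) = 4+Z$ is precisely what conspires to reproduce the normalization constant of Theorem \ref{thm:class}, so that this lemma confirms $\rho(Z) = \rho / (2(4+Z)^K)$ is a legitimate trace-one quantum state.
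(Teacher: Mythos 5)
Your proof is correct and follows essentially the same route as the paper: reduce the trace to $\sum_{k}\binom{2K+1}{k}f_{K-k}$, insert the closed form \eqref{eq:fm}, collapse each piece via $\sum_k\binom{2K+1}{k}\alpha^{K-k}=\alpha^K(1+\beta)^{2K+1}$, and simplify using $\alpha\beta=1$ and $(1+\alpha)(1+\beta)=4+Z$. The only differences are cosmetic — you make the tracelessness of $o(\sigma)$ explicit and finish with the symmetric identities $\alpha(1+\beta)=1+\alpha$, $\beta(1+\alpha)=1+\beta$, where the paper instead rewrites everything in terms of $\alpha$ and $\alpha^{-1}$ — so the argument matches the paper's proof.
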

See the proof in Appendix \ref{proof:trace}.

\begin{lemma}
\label{lem:extremal}
The state $\rho$ is extremal in the PPT set.
\end{lemma}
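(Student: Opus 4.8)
The plan is to establish extremality in the PPT set by using the standard characterization: a PPT state $\rho$ is extremal in the PPT set if, and only if, the only Hermitian operator $\delta$ satisfying $\mathcal{R}(\delta) \subseteq \mathcal{R}(\rho)$ and $\mathcal{R}(\delta^{\Gamma_m}) \subseteq \mathcal{R}(\rho^{\Gamma_m})$ for every bipartition $m$ is $\delta \propto \rho$ (equivalently, the only admissible direction one could subtract while staying PPT is $\rho$ itself). This is precisely the stopping criterion of the iterative algorithm recalled before Theorem \ref{thm:class}. So the task reduces to a linear-algebra uniqueness statement: the range constraints coming from all the blocks force $\delta$ to be a scalar multiple of $\rho$.

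First I would set up the problem blockwise. By Lemma \ref{lem:blocks}, both $\rho$ and any candidate $\delta$ (which must be symmetric and of the same GHZ-coherent form to preserve the relevant supports) block-diagonalize in the same way under each $\Gamma_m$, into the $2\times 2$ corner block containing $\sigma$ and the Hankel-type blocks $A_n^{(m)} = D_n^{(m)} H_n^{(m)} D_n^{(m)}$. Since the $D_n^{(m)}$ are positive diagonal and hence invertible, the range condition on $A_n^{(m)}$ is equivalent to the analogous condition on the bare Hankel block $H_n^{(m)} = (\lambda_{i+k-n})$. By Lemma \ref{lem:Bnm} each such Hankel block has rank exactly $2$, so its kernel is codimension $2$; the constraint is that $\delta$'s corresponding Hankel block must have range inside this fixed rank-$2$ range, for every $n$ and every $m$. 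The next step is to translate each of these range inclusions into explicit linear equations on the diagonal entries $\{\mu_k\}$ of $\delta$ (the analogues of $\{\lambda_k\}$), together with the coherence parameter $\tau$ of $\delta$.

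The heart of the argument is then to show that this large system of linear constraints has a one-dimensional solution space spanned by $\{\lambda_k\}$ itself. I would proceed exactly as in the construction: the $2\times 2$ corner blocks with zero determinant force, via $\mathcal{R}(\delta)\subseteq\mathcal{R}(\rho)$ in that block, a proportionality $\mu_m \tau = \lambda_m \sigma$ type relation tying the $\mu$'s to $\tau$; the rank-$2$ condition on the $3\times 3$ minors $B_{-m+2}^{(m)}$ forces $\{\mu_k\}$ to satisfy the \emph{same} linear recurrence $\mu_{m+2} - 2(1+Z)\mu_{m+1} + \mu_m = 0$ that defines the $\lambda_k$ (because vanishing of those determinants is what pinned down $c_0=1$, $c_1 = -2(1+Z)$ in the first place). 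A second-order linear recurrence has a two-dimensional solution space, so I must show the boundary/symmetry conditions cut this down to one dimension: the symmetry $\mu_m = \mu_{N-m}$ (inherited from the $\ket{0}\leftrightarrow\ket{1}$ invariance, enforced by the corner-block and edge-block range constraints at $m=K$ and $m=K-1$) together with the normalization of $\tau$ selects the unique ray. Concluding that $\mu_k \propto \lambda_k$ and $\tau \propto \sigma$ with the same constant gives $\delta \propto \rho$.

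The main obstacle I anticipate is bookkeeping the range inclusions consistently across all bipartitions $m$ and all blocks $n$ simultaneously, and in particular verifying that the rank-$2$ Hankel structure does not accidentally leave extra freedom that the recurrence alone fails to kill. Concretely, the delicate point is showing that the two-dimensional recurrence solution space is genuinely reduced to one dimension by the combined symmetry and corner-block constraints, rather than two — one must check that the homogeneous solution not proportional to $\{\lambda_k\}$ is incompatible with the $2\times 2$ block range condition for at least one $m$. Once that incompatibility is verified (it should follow from the explicit closed form \eqref{eq:fm}, since $\alpha\beta=1$ and the two characteristic roots give independent geometric sequences whose symmetrization behaves differently), extremality follows immediately from the general criterion, and since $\rho$ has rank $N+1 > 1$ it cannot be a pure product state, hence it is entangled.
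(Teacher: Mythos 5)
Your proposal follows essentially the same route as the paper: the range-inclusion extremality criterion for the PPT set, reduction of a candidate Hermitian $H$ to the Dicke-diagonal-plus-GHZ-coherence subspace, translation of the kernel conditions on the Hankel blocks $B_n^{(m)}$ into the recurrence $h_{m+2}-(2+Z)h_{m+1}+h_m=0$, and the $2\times 2$ corner block at $m=K$ to pin down the coherence parameter. The ``delicate point'' you rightly flag --- that the recurrence alone leaves a two-dimensional solution space --- is resolved exactly as you anticipate: the kernel vector $(\sigma,-1)^T$ of the corner block at $m=K$ yields not only $h=\sigma h_K$ but also $h_{K+1}=h_K$, and this single symmetry condition selects the ray $h_k\propto f_{K-k}=\lambda_k$ out of the span of $\alpha^{K-k}$ and $\beta^{K-k}$ (the paper states this step somewhat tersely, so your explicit attention to it is warranted).
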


\paragraph{Proof of Lemma \ref{lem:extremal}.} --
To prove extremality, we use the following theorem from \cite{AugusiakOpticsComm}: $\rho$ is extremal in the PPT set if, and only if, every Hermitian matrix $H$ satisfying ${\cal R}(H^{\Gamma_m})\subseteq {\cal R}(\rho^{\Gamma_m})$ is proportional to $\rho$.
Note that by taking $H\propto \rho$ we always find a solution satisfying the above conditions, but we have to show that no other exists.
Let us consider the subspace $E$ of the $(N+1)\times (N+1)$ Hermitian matrices spanned by the following matrices:
\begin{equation}
E=\left[\ket{D_N^0}\bra{D_N^0}, \ldots, \ket{D_N^N}\bra{D_N^N}, \frac{1}{\sqrt{2}}(\ket{D_N^N}\bra{D_N^0}+\ket{D_N^0}\bra{D_N^N})\right].
\end{equation}

Let us argue that we can assume that $H$ has to live in the same subspace as $\rho$. Since ${\cal R}(\rho)\subseteq E$, $H$ has to be of the form
\begin{equation}
H=\sum_{k=0}^{N} h_k \ket{D_{N}^k}\bra{D_{N}^k} + h\frac{1}{\sqrt{2}}(\ket{D_N^N}\bra{D_N^0}+\ket{D_N^0}\bra{D_N^N}),
\end{equation}
where $h_k$ and $h$ are real parameters.

The condition ${\cal R}(H^{\Gamma_m})\subseteq {\cal R}(\rho^{\Gamma_m})$ means that the vectors spanning $H^{\Gamma_m}$ must be orthogonal to (at least) the vectors in the kernel of $\rho^{\Gamma_m}$.
Fortunately, we have calculated the block-decomposition of $\rho^{\Gamma_m}$:
 \begin{equation}
  \rho^{\Gamma_m}=\left(
  \begin{array}{cc}
    \lambda_m & \sigma\\
    \sigma & \lambda_{N-m}
  \end{array}
  \right)\bigoplus \left(
    \bigoplus_{n=-m+1}^{N-m-1} A_{n}^{(m)}
    \right),
 \end{equation}

 As a side-comment let us observe that $(D_n^{(m)})^{-1}\ket{v}\in {\ker A_n^{(m)}}\iff \ket{v} \in {\ker B_n^{(m)}}$.
 Anyway, being orthogonal to $\ker{B_n}^{(m)}$ implies that the coefficients $h_m$ must satisfy a recurrence relation of the form
 \begin{equation}
 h_{m+2} - (2+Z) h_{m+1} + h_m = 0,
 \end{equation}
 which fixes $h_m \propto f_m$. Finally, we fix the value of $h$ by looking at the kernel of the block that goes alone in $\rho^{\Gamma_K}$, which is spanned by $(\sigma, -1)^T$. Hence, we have that $h_K \sigma - h = 0$, which implies that $h=\sigma h_K$. Hence, $H = h_K \rho \propto \rho$ is the only solution to ${\cal R}(H^{\Gamma_m})\subseteq {\cal R}(\rho^{\Gamma_m})$, proving that $\rho$ is extremal.
\qed

 Since all the states in the PPT set which are separable and extremal have ranks $r(\rho^{\Gamma_m})=1$, an extremal PPT state with a rank $r(\rho^{\Gamma_m})>1$ cannot be separable. Hence, $\rho(Z)$ is a uni-parametric family of PPT-entangled states for all $Z \in (0, \infty)$.

\section{Conclusions and Outlook}
\label{sec:Conclusions}
In this work we have studied the separability problem for diagonal symmetric states that are positive under partial transpositions. In the bipartite case, we have explored its connection to quadratic conic optimization problems, which naturally appear in a plethora of situations. Via this equivalence, we have been able to translate results from quantum information to optimization and vice-versa. For instance, we have characterized the extremal states of the set of separable DS states, defined entanglement witnesses for PPTDS states in terms of copositive matrices and we have rediscovered that the separability problem is NP-hard even in this highly symmetric and simplified case. We have shown that PPT is equivalent to separability in this context only for states of physical dimension not greater than $4$. We have complemented our findings with a series of analytical examples and counterexamples.
Furthermore, the state of the art in quadratic conic optimization allows us to see which are going to be the forthcoming challenges, in which insights developed within the quantum information community might contribute in advancing the field.

Second, we have provided a set of tools to certify separability of a bipartite PPTDS state in arbitrary dimensions, by decomposing it as a combination of an extremal DS state and a diagonal dominant DS state.. A natural further research direction is to study whether more sophisticated decompositions are possible, in terms of various extremal elements in ${\cal CP}_d$ and by understanding how the facial structure of ${\cal CP}_d$ plays a role in this problem.

Third, we have shown that, although $N$-qubit DS states are separable if, and only if, they are PPT with respect to every bipartition, just adding a new GHZ-like coherence can entangle the state while mantaining the PPT property for every bipartition. We have characterized analytically this class and we have shown that its ranks are much lower than those typically found in previous numerical studies \cite{AugusiakPRA12}.
In this search, we have also found an analytical example of a $4$-qubit PPT-entangled symmetric state, whose density matrix is sparse with real entries, when expressed in the computational basis, contrary to previous numerical examples \cite{TuraPRA12}. A natural following research direction is to connect the recently developed concept of coherence \cite{Coherence} to the Dicke basis, and to explore further its connection to PPT-entangled symmetric states.
Furthremore, it seems plausible that one can find further connections between the properties of $M(\rho)$ and those of MUBs since it has been shown that one can construct EWs based on MUBs that are capable of detecting bound entangled states (See \cite{EWfromMUBs} for a recent construction or \cite{BoundEntanglementMUBs} for an application to magic simplex states in an experimentally feasible way). Whether there is a clear connection between EWs from MUBs and EWs for bound entangled DS states in terms of the properties of $M(\rho)$ remains an open research direction.

\acknowledgements
We acknowledge financial support from the Spanish MINECO (SEVERO OCHOA grant SEV-2015-0522, FISICATEAMO FIS2016-79508-P, FIS2013-40627-P and FIS2016-86681-P AEI/FEDER EU), ERC AdG OSYRIS (ERC-2013-ADG No. 339106),  Generalitat de Catalunya (2014-SGR-874, 2014-SGR-966 and the CERCA Programme), and Fundació Privada Cellex. This project has received funding from the European Union’s Horizon 2020 research and innovation programme under the Marie Sk{\l}odowska-Curie grant agreement No 748549. R. Q. acknowledges the Spanish MECD for the FPU Fellowship (No. FPU12/03323). We thank S. Gharibian and two anonymous reviewers for insightful comments on the manuscript.

\nocite{apsrev41Control}
\bibliographystyle{apsrev4-1}
\bibliography{bibliography}

\appendix
\section{Proof of Theorem \ref{thm:Winterfell}}
\label{app:ProofOfWinterfell}

\begin{proof}
 Let us assume that $\rho$ is separable. Since it is symmetric, it admits a convex decomposition into product vectors of the following form:
 \begin{equation}
  \rho = \sum_{i} \lambda_{i} \ket{e_i}\ket{e_i}\bra{e_i}\bra{e_i},
  \label{eq:TheImp}
 \end{equation}
 where $\lambda_i$ form a convex combination and $\ket{e_i}:=\sum_{j=0}^{d-1} e_{ij}\ket{j}$, $e_{ij} \in \mathbbm{C}$.
 It follows that we have the identity
 \begin{equation}
  \rho=\sum_{i,x_1,x_2,y_1,y_2} \lambda_i e_{i,x_1} e_{i,x_2} e_{j,y_1}^* e_{j,y_2}^* \ket{x_1}\ket{x_2}\bra{y_1}\bra{y_2}=\sum_{0\leq a\leq b <d}p_{ab} \ket{D_{ab}}\bra{D_{ab}}.
  \label{eq:NedStark}
 \end{equation}
By projecting Eq. (\ref{eq:NedStark}) onto the Dicke basis we obtain the following conditions\footnote{There are, of course, more conditions that follow from Eq. (\ref{eq:NedStark}), such as $\sum_{i}\lambda_i (e_{ir})^2 (e_{is}^*)^2 = 0$, but we do not need them for this implication.}:
\begin{eqnarray}
 \bra{rr} \rho \ket{rr} &=& p_{rr} = \sum_{i}\lambda_i |e_{ir}|^4\nonumber\\
 \bra{D_{rs}} \rho \ket{D_{rs}} &=& p_{rs} = \sum_i \lambda_i 2 |e_{ir}|^2 |e_{is}|^2.
\end{eqnarray}
We can now construct $M(\rho)$, which has the form
\begin{equation}
 M(\rho)=\sum_{i} \lambda_i \left(
 \begin{array}{cccc}
  |e_{i0}|^4 & |e_{i0}|^2|e_{i1}|^2 & \cdots & |e_{i,0}|^2|e_{i,{d-1}}|^2\\
  |e_{i0}|^2|e_{i1}|^2 & |e_{i1}|^4 & \cdots & |e_{i,1}|^2|e_{i,{d-1}}|^2\\
  \vdots & \vdots & \ddots & \vdots\\
  |e_{i0}|^2|e_{i,{d-1}}|^2 & |e_{i1}|^2|e_{i,{d-1}}|^2 & \cdots & |e_{i,{d-1}}|^4
 \end{array}
 \right).
 \label{eq:SansaStark}
\end{equation}
It is clear from Eq. (\ref{eq:SansaStark}) that $M(\rho)$ is CP, as it admits a factorization $M(\rho)= \sum_i \vec{b_i} \cdot \vec{b_i}^T$, where $\vec{b_i}$ is a vector with components
\begin{equation}
 \vec{b}_i := \lambda_i^{1/2}\left(
 \begin{array}{cccc}
  |e_{i0}|^2 & |e_{i1}|^2 & \cdots & |e_{i,d-1}|^2
 \end{array}
 \right)^T.
 \label{eq:CerseiLannister}
\end{equation}
Clearly, we see from Eq. (\ref{eq:CerseiLannister}) that $M(\rho)$ is a convex combination of CP matrices, as $b_{ij} \geq 0$. Since CP matrices form a convex cone, $M(\rho)$ is CP. Actually, we can write $M(\rho)=B\cdot B^T$, where $\vec{b_i}$ are the columns of $B$.

Conversely, let us assume that $M(\rho)$ is CP. Note that, as $\rho$ is DS, $M(\rho)$ is in one-to-one correspondence with $\rho$. Since $M(\rho)$ is CP, we can write $M(\rho) = B \cdot B^T$, with $B$ being a $d\times k$ matrix fulfilling $B_{ij} \geq 0$. We have to give a separable convex combination of the form of Eq. (\ref{eq:TheImp}) that produces the DS $\rho$ matching the given $M(\rho)$. As we shall see, this separable decomposition is by no means unique. We begin by writing
\begin{equation}
 M(\rho)= \sum_{i=1}^k \vec{b_i} \cdot \vec{b_i}^T,
\end{equation}
where $\vec{b_i}$ are the columns of $B$, so all the coordinates of $\vec{b_i}$ are non-negative. Let $\{z_{ij}\}_{ij}$ be a set of complex numbers satisfying $|z_{ij}|^2= (\vec{b_i})_{j} \geq 0$ and let us define
\begin{equation}
 \ket{z_i} := \sum_{0\leq j< d}z_{ij} \ket{j}.
 \label{eq:TheMountain}
\end{equation}
Note that if we naively make the convex combination Eq. (\ref{eq:TheImp}) with the vectors introduced in Eq. (\ref{eq:TheMountain}), we shall produce a state with the corresponding $M(\rho)$, but it will not be DS in general. In order to ensure that the $\rho$ we are going to construct is DS we have to build it in a way that we eliminate all unwanted coherences. To this end, let us consider the more general family of vectors
\begin{equation}
 \ket{\zeta_{i,j,\mathbf{k}}}:= \sum_{0\leq l < d} (-1)^{k_l} \omega^{jl} z_{il} \ket{l}, \qquad 0 \leq j < d, \quad 0 \leq \mathbf{k} < 2^d,
\end{equation}
where $k_l$ is the $l$-th digit of $\mathbf{k}$ in base $2$ and $\omega$ is a primitive $2d-$th root of the unity (for instance, $\omega = \exp(2\pi \mathbbm{i}/2d)$). Let us now construct the (unnormalized) quantum state
\begin{equation}
 \rho_i = \sum_{0 \leq j < d} \sum_{0 \leq \mathbf{k} < 2^d} \ket{\zeta_{i,j,\mathbf{k}}} \ket{\zeta_{i,j,\mathbf{k}}}\bra{\zeta_{i,j,\mathbf{k}}} \bra{\zeta_{i,j,\mathbf{k}}}.
 \label{eq:TheHound}
\end{equation}
By expanding Eq. (\ref{eq:TheHound}) we obtain
\begin{equation}
 \rho_i = \sum_{j,\mathbf{k},l_1,l_2,l_3,l_4} (-1)^{k_{l_1}+ k_{l_2} + k_{l_3} + k_{l_4}} \omega^{j(l_1+l_2-l_3-l_4)}z_{i,l_1}z_{i,l_2} z_{i,l_3}^* z_{i,l_4}^* \ket{l_1, l_2} \bra{l_3, l_4},
\end{equation}
which we can rewrite as
\begin{equation}
 \rho_i = \sum_{0 \leq l_1,l_2,l_3,l_4<d} z_{i,l_1}z_{i,l_2} z_{i,l_3}^* z_{i,l_4}^* \ket{l_1, l_2} \bra{l_3, l_4} \left(\sum_{0 \leq \mathbf{k} < 2^d}(-1)^{k_{l_1}+ k_{l_2} + k_{l_3} + k_{l_4}}\right)\left(\sum_{0\leq j < d} \omega^{j(l_1+l_2-l_3-l_4)}\right).
 \label{eq:JonSnow}
\end{equation}
Let us inspect the possible values for the sums in parenthesis in Eq. (\ref{eq:JonSnow}). The expression involving $\mathbf{k}$ will be zero whenever $l_1$, $l_2$, $l_3$, $l_4$ are all different (half of the sum will come with a plus sign and the other half with a minus sign). If only two of them are equal, the expression is still zero by the same argument. If two of the indices are equal to the other two, then the value of the expression is $2^d$. Hence, we have that
\begin{equation}
\sum_{0 \leq \mathbf{k} < 2^d} (-1)^{k_{l_1} + k_{l_2} + k_{l_3} + k_{l_4}} = 2^d \left(\delta_{l_1-l_2}\delta_{l_3-l_4} + \delta_{l_1-l_3}\delta_{l_2-l_4} + \delta_{l_1-l_4}\delta_{l_2-l_3} - 2 \delta_{l_1-l_2}\delta_{l_2-l_3}\delta_{l_3-l_4}\right),
\label{eq:Pixie}
\end{equation}
where $\delta_x$ is the Kronecker Delta function (note that the last term prevents that the case $l_1=l_2=l_3=l_4$ is counted more than once). The second parenthesis in Eq. (\ref{eq:JonSnow}) is a geometrical series, so we have that it is $d$ if, and only if, $l_1 + l_2 \equiv l_3 + l_4 \mod 2d$ (because $\omega$ is taken to be primitive); otherwise it is $0$. As $0 \leq l_1+l_2, l_3+l_4 < 2d$, this can only happen if $l_1+l_2 = l_3+l_4$. Thus,
\begin{equation}
 \sum_{0 \leq j < d} \omega^{(l_1+l_2-l_3-l_4)j} = d \delta_{l_1+l_2-(l_3+l_4)}.
 \label{eq:Dixie}
\end{equation}
By inserting Eqs. (\ref{eq:Pixie}, \ref{eq:Dixie}) into Eq. (\ref{eq:JonSnow}), we have that the only possible values for $(l_1,l_2,l_3,l_4)$ are $(l_1,l_1,l_1,l_1)$, $(l_1,l_2,l_1,l_2)$ and $(l_1,l_2,l_2,l_1)$ (with $l_1 \neq l_2$). Note that Eq. (\ref{eq:Dixie}) forbids the combination $(l_1,l_1,l_2,l_2)$ if $l_1 \neq l_2$. This leads to
\begin{equation}
 \rho_i = d2^d\sum_{0\leq l_1 \neq l_2 < d} |z_{i,l_1}|^2 |z_{i,l_2}|^2 (\ket{l_1,l_2} \bra{l_1,l_2} +\ket{l_1,l_2} \bra{l_2,l_1}) + d2^d \sum_{0 \leq l_1 < d} |z_{i,l_1}|^4 \ket{l_1,l_1}\bra{l_1,l_1}.
 \label{eq:Kingslayer}
\end{equation}
By expressing Eq. (\ref{eq:Kingslayer}) in the Dicke basis, we have that $\rho_i$ is DS:
\begin{equation}
 \rho_i = d 2^d \sum_{0 \leq x < y < d} 2 |z_{i,x}| ^2 |z_{i,y}| ^2 \ket{D_{xy}}\bra{D_{xy}} + d2^d \sum_{0 \leq x < d} |z_{i,x}|^4 \ket{D_{xx}}\bra{D_{xx}}.
 \label{eq:KingTommen}
\end{equation}
Eq. (\ref{eq:KingTommen}) implies that $M(\rho_i)$ is
\begin{equation}
 M(\rho_i) = d2^d \vec{b_i} \cdot \vec{b_i}^T.
 \label{eq:HighSparrow}
\end{equation}
Therefore, the convex combination that we seek is
\begin{equation}
 \rho = \frac{1}{d2^d ||M(\rho)||_1}\sum_{0 \leq j < d} \sum_{0 \leq \mathbf{k} < 2^d} \ket{\zeta_{i,j,\mathbf{k}}}^{\otimes 2} \bra{\zeta_{i,j,\mathbf{k}}}^{\otimes 2},
 \label{eq:MadKing}
\end{equation}
where $||.||_1$ is the entry-wise $1$-norm (the sum of the absolute values of all the matrix entries). If $M(\rho)$ comes from a quantum state, then $||M(\rho)||_1=1$. Eq. (\ref{eq:MadKing}) proves that the state $\rho$ corresponding to $M(\rho)$ is separable.

\end{proof}

\section{Exposedness}
\label{app:Exposedness}
Convex sets are completely determined by their extremal elements (those that cannot be written as a proper convex combination of other elements in the set). An important step in characterizing the extremal elements of convex sets is understanding their facial structure. 
\begin{defn}
Given a convex cone $\cal K$, a face of $\cal K$ is a subset ${\cal F}\subseteq {\cal K}$ such that every line segment in the cone with an interior point in $\cal F$ must have both endpoints in $\cal F$.
\end{defn}
Note that every extreme ray of $\cal K$ is a one-dimensional face. To understand the facial structure of cones, one is interested in learning whether $\cal K$ is facially exposed. Facial exposedness is an important property that is exploited in optimization, allowing to design facial reduction algorithms \cite{FacialReduction}.
\begin{defn}
Let $\cal K$ be a cone in the space of real, symmetric matrices and let ${\cal F} \subseteq {\cal K}$ be a non-empty face. ${\cal F}$ is defined as an exposed face of $\cal K$ if, and only if, there exists a non-zero real symmetric matrix $A$ such that
\begin{equation}
{\cal K} \subseteq \{X \text{ s. t. } X\in M_{\mathbbm{R}}(d,d),\ X = X^T,\ \langle A, X \rangle \geq 0\}
\end{equation}
and
\begin{equation}
{\cal F} = \{X \in {\cal K} \text{ s. t. } \langle A, X \rangle = 0\}.
\end{equation}
\end{defn}
Hence, a face is exposed if it is the intersection of the cone with a non-trivial supporting hyperplane.

A cone is facially exposed if all of its faces are exposed. Although every extreme ray of ${\cal CP}_d$ is exposed \cite{dickinson2011geometry}, it remains unknown whether ${\cal CP}_d$ is facially exposed. In the case of ${\cal COP}_d$, the extreme rays corresponding to $\ket{ii}\bra{ii}$ are not exposed \cite{dickinson2011geometry}, implying that ${\cal PSD}_d + {\cal N}_d$ (the set of DEWs for PPTDS states) is not facially exposed. However, the set ${\cal DNN}_d$ of PPTDS states is facially exposed, because both ${\cal PSD}_d$ and ${\cal N}_d$ are facially exposed \cite{pataki2000geometry} and the intersection of facially exposed cones is facially exposed.

\section{Examples and counterexamples}
\label{sec:Examples}
\subsection{Every PPTDS state acting on ${\mathbbm C}^3\otimes {\mathbbm C}^3$ is separable}
\label{ex:PPT3}
In this example, we prove that every PPTDS state $\rho$ acting on ${\mathbbm C}^3\otimes {\mathbbm C}^3$ is separable. This follows from Theorem \ref{thm:TheEyre}, which is usually proven \cite{KoreanGuy} invoking results from quadratic non-convex optimization \cite{berman2015open}. We prove it here using quantum information tools solely: by building a convex separable decomposition of $\rho$ of the form of Eq. (\ref{eq:sep}). We do this in two steps. First, we provide a three-parameter class of PPTDS states that are separable. Then, by performing a Cholesky decomposition of $\rho^\Gamma$ we see that $\rho$ can be expressed as a convex combination of the family we introduced, for some parameters. The PPT conditions directly relate to the existence of such a Cholesky decomposition.

Recall that $\rho$ is written as
\begin{equation}
 \rho = \sum_{0\leq i \leq j <3} p_{ij} \ket{D_{ij}}\bra{D_{ij}},
 \label{eq:defRhoExample3}
\end{equation}
where
$\ket{D_{ii}} = \ket{ii}$ and $\ket{D_{ij}} = (\ket{ij} + \ket{ji})/\sqrt{2}$ if $i < j$.
Short algebra shows that $\rho$ and its partial transpose $\rho^\Gamma$ have the form
\begin{equation}
 \rho = \bigoplus_{0\leq i \leq j < 3}\left( p_{ij}\right),
\end{equation}
and
\begin{equation}
 \rho^\Gamma = \left(\frac{p_{01}}{2}\right)\oplus\left(\frac{p_{01}}{2}\right)\oplus\left(\frac{p_{02}}{2}\right)\oplus\left(\frac{p_{02}}{2}\right)\oplus\left(\frac{p_{12}}{2}\right)\oplus\left(\frac{p_{12}}{2}\right)\oplus M,
\end{equation}
where
\begin{equation}
 M = \left(
 \begin{array}{ccc}
  p_{00} & p_{01}/2 & p_{02}/2\\
  p_{01}/2 & p_{11} & p_{12}/2\\
  p_{02}/2 & p_{12}/2 & p_{22}
 \end{array}
 \right).
\end{equation}

\begin{lemma}
\label{lem:ex:1}
Let $x,y,z \in {\mathbbm{C}}$. Let $\omega$ be a primitive third root of the unity: $\omega^3=1$. Let us denote $P_{x,y,z} := \ket{\psi_{x,y,z}}\bra{\psi_{x,y,z}}$, where $\ket{\psi_{x,y,z}}:=x\ket{0}+y\ket{1}+z\ket{2}$ (we do not normalize $\ket{\psi_{x,y,z}}$). Let us further define
\begin{equation}
Q_{x,y,z} := P_{x,y,z}^{\otimes 2} + P_{x,\omega y, \omega^2 z}^{\otimes 2} + P_{x,\omega^2 y, \omega z}^{\otimes 2}.
\end{equation}
 Then, the unnormalized quantum state
\begin{equation}
\sigma_{x,y,z}:=\frac{1}{12}\left(Q_{x,y,z} + Q_{-x,y,z} + Q_{x,-y,z} + Q_{x,y,-z}\right)
\end{equation}
is diagonal symmetric. (Obviously it is PPT, as it is separable). Furthermore, its expression in Eq. (\ref{eq:defRhoExample3}) corresponds to
\begin{equation}
\left\{
\begin{array}{lll}
p_{00} &=& |x|^4\\
p_{01} &=& 2|x|^2|y|^2\\
p_{02} &=& 2|x|^2|z|^2\\
p_{11} &=& |y|^4\\
p_{12} &=& 2|y|^2|z|^2\\
p_{22} &=& |z|^4
\end{array},
\right.
\end{equation}
where $|\cdot|$ denotes the complex modulus.
\end{lemma}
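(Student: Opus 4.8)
The plan is to verify the two claims directly---(i) that $\sigma_{x,y,z}$ is diagonal symmetric and (ii) that its coefficients in the Dicke basis are the stated $p_{ij}$---since separability (and hence the PPT property) is immediate: $\sigma_{x,y,z}$ is a sum, with the non-negative weight $1/12$, of projectors $P^{\otimes 2}$ onto product vectors $\ket{\psi}\otimes\ket{\psi}$, so it is manifestly separable. Writing $(c_0,c_1,c_2):=(x,y,z)$, I would introduce for each sign vector $\vec{s}=(s_0,s_1,s_2)$ and each $j\in\{0,1,2\}$ the vector $\ket{\phi_{\vec{s},j}}:=\sum_{l=0}^{2} s_l\,\omega^{jl} c_l \ket{l}$, so that $\sigma_{x,y,z}=\frac{1}{12}\sum_{\vec{s}\in S}\sum_{j=0}^{2}\ket{\phi_{\vec{s},j}}^{\otimes 2}\bra{\phi_{\vec{s},j}}^{\otimes 2}$, where $S=\{(+,+,+),(-,+,+),(+,-,+),(+,+,-)\}$ collects the four sign patterns appearing in the definition. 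Expanding a single term in the computational basis gives
\begin{equation}
\ket{\phi_{\vec{s},j}}^{\otimes 2}\bra{\phi_{\vec{s},j}}^{\otimes 2}=\sum_{l_1,l_2,l_3,l_4} s_{l_1}s_{l_2}s_{l_3}s_{l_4}\,\omega^{j(l_1+l_2-l_3-l_4)}\, c_{l_1}c_{l_2}c_{l_3}^*c_{l_4}^*\ket{l_1 l_2}\bra{l_3 l_4},
\end{equation}
and the whole proof reduces to evaluating the two independent averages, over $j$ and over $\vec{s}$, that multiply each monomial $\ket{l_1 l_2}\bra{l_3 l_4}$.

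Next I would carry out the two averages separately. The sum over $j$ is the geometric series $\sum_{j=0}^{2}\omega^{j(l_1+l_2-l_3-l_4)}$, which---because $\omega$ is a primitive third root of unity---equals $3$ if $l_1+l_2\equiv l_3+l_4 \pmod{3}$ and vanishes otherwise. The sum over $\vec{s}\in S$ of $s_{l_1}s_{l_2}s_{l_3}s_{l_4}$ equals $1+(-1)^{n_0}+(-1)^{n_1}+(-1)^{n_2}$, where $n_l$ is the number of times the value $l$ occurs in $(l_1,l_2,l_3,l_4)$; since $n_0+n_1+n_2=4$ is even, the number of odd $n_l$ is $0$ or $2$, so this sum equals $4$ when all $n_l$ are even and $0$ when exactly two are odd. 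Hence a monomial survives only if each computational index appears an even number of times \emph{and} $l_1+l_2\equiv l_3+l_4\pmod{3}$, with a surviving prefactor $3\cdot 4=12$ that cancels the $1/12$.

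Finally I would enumerate the surviving monomials. The even-multiplicity constraint forces $(n_0,n_1,n_2)$ to be a permutation of $(4,0,0)$ or of $(2,2,0)$. The former yields the diagonal terms $\ket{ll}\bra{ll}$ with coefficient $|c_l|^4$, matching $p_{00}=|x|^4$, $p_{11}=|y|^4$, $p_{22}=|z|^4$. For the latter, say indices $\{p,q\}$ with $p\neq q$, the six arrangements of two $p$'s and two $q$'s split according to the $\pmod{3}$ condition: the coherence arrangements $\ket{pp}\bra{qq}$ and $\ket{qq}\bra{pp}$ have $l_1+l_2-l_3-l_4=\pm 2(p-q)\not\equiv 0 \pmod{3}$ for distinct $p,q\in\{0,1,2\}$ and are therefore killed, whereas the four arrangements $\ket{pq}\bra{pq},\ket{pq}\bra{qp},\ket{qp}\bra{pq},\ket{qp}\bra{qp}$ satisfy the congruence and survive, each with coefficient $|c_p|^2|c_q|^2$; together they reconstruct $2|c_p|^2|c_q|^2\ket{D_{pq}}\bra{D_{pq}}$. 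Collecting everything gives $\sigma_{x,y,z}=\sum_l |c_l|^4\ket{ll}\bra{ll}+\sum_{p<q}2|c_p|^2|c_q|^2\ket{D_{pq}}\bra{D_{pq}}$, which is diagonal symmetric with exactly the stated $p_{ij}$. The one step deserving care---and the crux of why the construction works---is precisely this last elimination of the GHZ-type coherences $\ket{pp}\bra{qq}$: neither averaging alone removes every unwanted term, so the main obstacle is checking that the sign average and the mod-$3$ phase average are \emph{jointly} exhaustive, i.e. that the only monomials passing both filters are the diagonal and Dicke terms above.
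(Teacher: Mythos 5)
Your proof is correct and follows the same route as the paper, which simply states that the result ``follows from expressing $\sigma_{x,y,z}$ in the computational basis'' after ``some elementary algebra''; you have supplied precisely that algebra, organized as a phase average over $j$ and a sign average over $\vec{s}$ in the spirit of the paper's general construction in Appendix~\ref{app:ProofOfWinterfell}. The one step the paper leaves implicit and you rightly verify explicitly is that $2(p-q)\not\equiv 0 \pmod 3$ for distinct $p,q\in\{0,1,2\}$, which is what kills the $\ket{pp}\bra{qq}$ coherences even though only a primitive third (rather than sixth) root of unity is used here.
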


\begin{proof} The proof follows from expressing $\sigma_{x,y,z}$ in the computational basis. After some elementary algebra, one arrives at the form of Eq. (\ref{eq:defRhoExample3}).

\end{proof}

The following Lemma allows us to find a decomposition of a positive semi-definite matrix $A$ of the form $A = B \cdot B^T$. To this end, we apply Cholesky's decomposition.
\begin{lemma}
\label{lem:ex:2}
Let $A$ be a real, symmetric, positive-semidefinite $3\times 3$ matrix given by
\begin{equation}
A=\left(
\begin{array}{ccc}
a&b&c\\
b&d&e\\
c&e&f
\end{array}
\right).
\end{equation}
Then, $A$'s Cholesky decomposition can be written as
\begin{eqnarray}
A &=& \frac{1}{a}\left(a,b,c\right)^T\left(a,b,c\right) + \frac{1}{a\left|
\begin{array}{cc}
a&b\\b&d
\end{array}
\right|} \left(0,\left|
\begin{array}{cc}
a&b\\b&d
\end{array}
\right|, \left|
\begin{array}{cc}
a&c\\b&e
\end{array}
\right|\right)^T\left(0,\left|
\begin{array}{cc}
a&b\\b&d
\end{array}
\right|, \left|
\begin{array}{cc}
a&c\\b&e
\end{array}
\right|\right)\nonumber\\
&+& \frac{1}{\left|
\begin{array}{cc}
a&b\\b&d
\end{array}
\right|\det A} (0,0,\det A)^T (0,0, \det A).
\label{eq:Cholesky}
\end{eqnarray}
\end{lemma}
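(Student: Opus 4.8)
The plan is to verify the Cholesky (actually a rank-revealing $LDL^T$-type) decomposition of $A$ by direct computation, exploiting the fact that the three rank-one terms on the right-hand side are manifestly supported on nested subspaces. The structure to notice first is that the three vectors appearing in Eq.~(\ref{eq:Cholesky}) have a staircase pattern: the first has all three components nonzero, the second vanishes in its first component, and the third vanishes in its first two components. This means that if I order the computation by entries, the $(1,1)$ entry receives a contribution only from the first term, the $(1,2)$ and $(2,2)$ entries only from the first two terms, and so on. So the verification naturally proceeds column by column, each new term correcting exactly the entries that the previous terms have not yet matched.

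First I would compute the first rank-one term $\tfrac{1}{a}(a,b,c)^T(a,b,c)$, whose entries are $\tfrac{1}{a}$ times the outer product; this already gives $a$, $b$, $c$ in the first row and column exactly, and leaves residuals in the lower-right $2\times 2$ block equal to $d - b^2/a$, $e - bc/a$, $f - c^2/a$. Next I would observe that the $2\times 2$ minor $\left|\begin{smallmatrix} a&b\\ b&d\end{smallmatrix}\right| = ad-b^2$, so the residual $d-b^2/a$ is exactly $(ad-b^2)/a$, matching the $(2,2)$ entry produced by the second term $\tfrac{1}{a(ad-b^2)}(0,\,ad-b^2,\,ae-bc)^T(0,\,ad-b^2,\,ae-bc)$; the cross term $e-bc/a$ should match $(ae-bc)/a$, which it does since the minor $\left|\begin{smallmatrix} a&c\\ b&e\end{smallmatrix}\right| = ae-bc$. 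Finally the $(3,3)$ residual must equal $\det A / (ad-b^2)$ times the appropriate normalization; here I would use the cofactor expansion of $\det A$ to confirm that $f - c^2/a - (ae-bc)^2/(a(ad-b^2))$ simplifies to $\det A/(ad-b^2)$, which is the coefficient structure of the third term.

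The main obstacle, such as it is, will be the bookkeeping in the final $(3,3)$ entry: one must show that $f$ minus the two partial contributions collapses to $\det A/(ad-b^2)$, and this requires expanding $\det A = a(df-e^2) - b(bf-ce) + c(be-cd)$ and clearing the common denominator $a(ad-b^2)$ carefully. This is a routine algebraic identity rather than a conceptual difficulty, and it is precisely the standard Schur-complement/Cholesky recursion specialized to $3\times 3$; I would present it as a direct substitution and let the reader confirm the cancellation. I should also note the implicit positivity hypotheses that make the decomposition well-defined as stated: the denominators $a$ and $ad-b^2$ must be nonzero, which is guaranteed by strict positivity of the leading principal minors when $A \succ 0$. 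For the degenerate cases (when $A$ is positive semidefinite but some leading minor vanishes) the same decomposition holds by a limiting argument or by noting that the corresponding rank-one term simply drops out, so the conclusion that $A = B\cdot B^T$ with the rows of $B$ nonnegative-supported remains valid throughout.
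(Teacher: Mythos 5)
Your proposal is correct and follows essentially the same route as the paper: both verify the identity entry by entry, using the staircase support of the three rank-one terms so that the first term fixes the first row and column, the second fixes the second row and column, and the third fixes the bottom-right entry. Your additional remarks on the $(3,3)$ cancellation and the degenerate cases where a leading minor vanishes are sound but not substantively different from the paper's argument.
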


\begin{proof}
The idea of the proof is to use the rank-1 matrix $A_1:= (a,b,c)^T(a,b,c)/a$ to fix the elements of $A$ that lie on the first column and first row. Then, the second summand will adjust the elements of the second row, second column of $A$ and the last summand will fix the bottom-right element of $A$. Therefore, we have
\begin{equation}
A_1=\left(
\begin{array}{ccc}
a&b&c\\
b&\cdot&\cdot\\
c&\cdot&\cdot
\end{array}
\right),
\end{equation}
where the $\cdot$ are terms that are not yet fixed.
When we add the second term to $A_1$ we have
\begin{equation}
A_2=\left(
\begin{array}{ccc}
a&b&c\\
b&d&e\\
c&e&\cdot
\end{array}
\right),
\end{equation}
and adding the last term to $A_2$ we recover $A$.

\end{proof}

Now we have the tools to prove that every DNN $3\times 3$ matrix is CP:
\begin{lemma}
\label{lem:ex:3}
If $A$ is a $3\times 3$ positive-semidefinite matrix, and it is entry-wise non-negative, then there exists a Cholesky decomposition of $A$ with non-negative vectors (i.e., the vectors' coordinates are non-negative).
\end{lemma}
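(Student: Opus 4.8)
The plan is to prove Lemma \ref{lem:ex:3} by giving an explicit algorithm that produces a nonnegative Cholesky-type decomposition whenever $A$ is $3\times 3$, entry-wise nonnegative, and positive-semidefinite (i.e., DNN). The starting point is the Cholesky decomposition written in Lemma \ref{lem:ex:2}, which expresses $A = \vec{v}_1 \vec{v}_1^T + \vec{v}_2 \vec{v}_2^T + \vec{v}_3 \vec{v}_3^T$ with $\vec{v}_1 = (a,b,c)^T/\sqrt{a}$, $\vec{v}_2 = (0, |{\scriptstyle\begin{smallmatrix}a&b\\b&d\end{smallmatrix}}|, |{\scriptstyle\begin{smallmatrix}a&c\\b&e\end{smallmatrix}}|)^T$ suitably normalized, and $\vec{v}_3 = (0,0,\det A)^T$ normalized. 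The first vector already has nonnegative entries because $a,b,c \geq 0$; the obstruction to nonnegativity lies entirely in $\vec{v}_2$ and $\vec{v}_3$, whose middle/last coordinates involve $2\times 2$ minors that need not be sign-definite.

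First I would observe that if all the relevant minors happen to be nonnegative, the decomposition from Lemma \ref{lem:ex:2} is already the desired nonnegative Cholesky decomposition and we are done. The core of the argument is therefore to handle the case where some minor is negative, e.g.\ the off-diagonal minor $ae - bc$ (which is the quantity governing the sign of the third coordinate of $\vec{v}_2$) is negative. The strategy I would use is a \emph{pivoting/reordering} argument: since the hypotheses are invariant under simultaneous permutation of rows and columns of $A$ (this corresponds to relabeling the three levels $0,1,2$, and both DNN-ness and CP-ness are preserved under such permutations), I would argue that one can always choose an ordering of the indices for which the Cholesky process yields nonnegative vectors. Concretely, one wants to pivot on the index that makes the emerging minors have the right sign. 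The key inequality to exploit is that, for a PSD matrix with nonnegative entries, the $2\times 2$ principal minors are nonnegative ($ad \geq b^2$, etc.), and one leverages these together with $\det A \geq 0$ to show a good pivot order exists.

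The main obstacle I anticipate is precisely showing that \emph{some} permutation of $\{0,1,2\}$ makes all the Cholesky vectors nonnegative — i.e., ruling out the possibility that every ordering produces a negative coordinate. I would attack this by a case analysis on the signs of the three off-diagonal "cross" minors $ae-bc$, $bf-ce$, $df-e^2$ (and their permuted analogues), using the constraint $\det A = adf + 2bce - ae^2 - db c^{2}\!\cdot\! \text{(appropriate terms)} \geq 0$ to show that they cannot all be simultaneously "bad." The intuition is geometric, matching Theorem \ref{thm:PPTDSsep}: a $3\times 3$ DNN matrix is the Gram matrix of three vectors in $\mathbbm{R}^{\leq 3}$ with pairwise nonobtuse angles, and such a configuration always isometrically embeds into the nonnegative orthant. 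I would therefore complement the algebraic pivoting argument with this geometric embedding picture to confirm that the nonnegative decomposition always exists, thereby establishing ${\cal DNN}_3 = {\cal CP}_3$ constructively and completing the proof that every PPTDS state on $\mathbbm{C}^3 \otimes \mathbbm{C}^3$ is separable.
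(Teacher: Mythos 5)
Your strategy coincides with the paper's: both exploit the freedom to reorder the pivot in the Cholesky decomposition of Lemma \ref{lem:ex:2} and argue that some ordering of the indices makes every Cholesky vector nonnegative. Two remarks. First, the decisive step --- showing that the problematic cross minors cannot all be strictly negative simultaneously --- is only announced in your plan, not carried out, and the paper completes it without invoking $\det A \geq 0$ at all: assuming $ae<bc$ and $cd<be$ with all entries of $A$ nonnegative, one first shows $d>0$ and $b>0$ (if $d=0$ then PSD forces $b=e=0$ and the minors vanish; if $b=0$ then $cd<0$ or $ae<0$ contradicts entrywise nonnegativity), then chains $aed < bcd < b^2 e$ to conclude $e>0$ and hence $ad<b^2$, contradicting the principal-minor inequality $ad\geq b^2$ forced by $A\succeq 0$. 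So only the $2\times 2$ principal minors and entrywise nonnegativity are needed, not the full determinant. Second, your list of the obstructing minors contains a slip: $df-e^2$ is a principal minor, hence automatically nonnegative, and is not one of the obstructions; the three minors that can go negative under reordering are $ae-bc$, $cd-be$ and $bf-ce$, and the contradiction above already follows from assuming just two of them negative.
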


\begin{proof}
The only problematic term in Eq (\ref{eq:Cholesky}) is $\left|
\begin{array}{cc}
a&c\\b&e
\end{array}
\right|$ as all the other expressions are either principal minors of $A$ or entries of $A$, so they are non-negative.
Recall that the Cholesky decomposition of $A$ picks an order of rows, but this is arbitrary; it could be done in any order. Just to illustrate it, if we reorder the columns and rows of $A$ then 
we have that all the possibilities are

\begin{equation}
\left\{
\left(
\begin{array}{ccc}
a&b&c\\
b&d&e\\
c&e&f
\end{array}
\right),
\left(
\begin{array}{ccc}
a&c&b\\
c&f&e\\
b&e&d
\end{array}
\right),
\left(
\begin{array}{ccc}
d&b&e\\
b&a&c\\
e&c&f\\
\end{array}
\right),
\left(
\begin{array}{ccc}
d&e&b\\
e&f&c\\
b&c&a\\
\end{array}
\right),
\left(
\begin{array}{ccc}
f&c&e\\
c&a&b\\
e&b&d
\end{array}
\right),
\left(
\begin{array}{ccc}
f&e&c\\
e&d&b\\
c&b&a
\end{array}
\right)
\right\}.
\end{equation}

Hence, the corresponding minors that could be negative are
\begin{equation}
\left\{\left|
\begin{array}{cc}
a&c\\b&e
\end{array}
\right|,\left|
\begin{array}{cc}
d&e\\b&c
\end{array}
\right|,\left|
\begin{array}{cc}
f&e\\c&b
\end{array}
\right|\right\}.
\label{eq:ex:setofnonnegative}
\end{equation}

If any of the numbers in  (\ref{eq:ex:setofnonnegative}) is non-negative, then we can pick the Cholesky decomposition for that particular order and we obtain the result. The alternative is that all of them are strictly negative. We are going to see now that this would contradict the fact that $A \succeq 0$.

Note that all the numbers in (\ref{eq:ex:setofnonnegative}) being strictly negative imply that $d>0$. Otherwise, if $d=0$, since $A\succeq 0$, this would imply that $b=e=0$. Then, all the numbers in (\ref{eq:ex:setofnonnegative}) would be zero. Therefore, $d$ must be strictly positive. Similarly, $b>0$. Otherwise, if $b=0$, then we would have $c d < 0$ and $ae < 0$, contradicting the fact that $A$ is entry-wise non-negative. Therefore, $b>0$.

It is sufficient to find a contradiction just with a subset of the conditions given by (\ref{eq:ex:setofnonnegative}):
Let us assume that $ae<bc$ and $cd <be$. Then, we have that
\begin{equation}
aed < bcd < b^2e,
\end{equation}
where we used $ae<bc$ and $d>0$ in the first inequality and $cd < be$ and $b>0$ in the second. Therefore, $aed < b^2e$. Hence, it must be that $e>0$ (otherwise we would have $0<0$). Then, we deduce that $ad < b^2$, but this directly contradicts $A \succeq 0$, as the latter implies $ad \geq b^2$.

\end{proof}

Now we have the necessary tools to prove the result claimed in the example.

Consider $\rho$ to be a PPTDS state. Then, we have that $p_{ij}\geq 0$ and $M \succeq 0$. We want to write $\rho$ as a convex combination of some elements $\sigma_{x,y,z}$ introduced in Lemma \ref{lem:ex:1} by appropriately picking $x,y,z$ as functions of $p_{ij}$. Observe that for all $x,y,z \in \mathbbm{C}$, the entries of $\sigma_{x,y,z}$ will be non-negative. Moreover, the matrix $M$ associated to $\sigma_{x,y,z}$ is
\begin{equation}
M_{\sigma}=\left(
\begin{array}{ccc}
|x|^4&|x|^2|y|^2&|x|^2|z|^2\\
|x|^2|y|^2&|y|^4&|y|^2|z|^2\\
|x|^2|z|^2&|y|^2|z|^2&|z|^4
\end{array}
\right),
\end{equation}
which has rank $1$, and it is generated as $ M_{\sigma} = (|x|^2,|y|^2,|z|^2)^T(|x|^2,|y|^2,|z|^2)$.
Hence, the idea is to relate $M_\sigma$ to each element of the Cholesky decomposition Eq. (\ref{eq:Cholesky}) so that their sum gives the original $M$. If we recover the given $M$, we automatically recover $\rho$ and we have a separable convex decomposition of it. This can be done if, and only if, the components of the vectors appearing in Eq. (\ref{eq:Cholesky}) are non-negative, because we can always find numbers $x,y,z \in \mathbbm{C}$ realizing them.

Let us then apply Lemma \ref{lem:ex:2} to $M_\sigma$: We want to generate $\rho= \lambda_0\sigma_{x_0, y_0,z_0} + \lambda_1\sigma_{x_1, y_1, z_1} + \lambda_2\sigma_{x_2, y_2, z_2}$.
We begin by picking
$$\lambda_0 = \frac{1}{p_{00}}, \qquad (x_0,y_0,z_0)=(\sqrt{p_{00}},\sqrt{p_{01}/2},\sqrt{p_{02}/2}).$$ All the components are non-negative by hypothesis. Let us now move to $(x_1,y_1,z_1)$. We now pick $$\lambda_{1}= \frac{1}{p_{00}(p_{00} p_{11} -p_{01}^2/4)}, \qquad (x_1,y_1,z_1)=\left(0,\sqrt{p_{00} p_{11} -p_{01}^2/4}, \sqrt{p_{00}p_{12}/2-p_{01}p_{02}/4}\right).$$
In this case $p_{00} p_{11} -p_{01}^2/4\geq 0$ because it is a principal minor of $M$. So, $\lambda_1 \geq 0$ and $y_1 \geq 0$. However, $z_1$ might need to be negative. We deal with this case at the end.

Finally, we consider $(x_2,y_2,z_2)$. Now we have
$$\lambda_{2}= \frac{1}{(p_{00} p_{11} -p_{01}^2/4)\det M}, \qquad (x_2,y_2,z_2)=\left(0,0, \sqrt{\det M}\right).$$
Here it is easy to see that $\lambda_2 \geq 0$ and $z_2 \geq 0$.

The proof is finished if we can argue that $z_1$ can be taken to be a positive number. This is guaranteed by Lemma \ref{lem:ex:3}, which tells us that there exists always a relabeling of the computational basis elements $\ket{0}$, $\ket{1}$ and $\ket{2}$ such that the Cholesky decomposition of $M$ is done with non-negative vectors.

\subsection{An example of a PPTDS entangled state acting on $\mathbbm{C}^6 \otimes \mathbbm{C}^6$.}
\label{ex:d6}
We now present an example for $d=6$ that constitutes an unnormalized PPTDS entangled state. This is based on a counterexample that appeared in the context of financial engineering \cite{Sonneveld}.

Let $p_{ii} = 2, p_{i,i+1}=3, p_{i,i+2}=1, p_{i,i+3}=0, p_{i,i+4}=1, p_{i,i+5}=3$. This means that the matrix $M$ takes the form
\begin{equation}
M=\left(
\begin{array}{cccccc}
2&3/2&1/2&0&1/2&3/2\\
3/2&2&3/2&1/2&0&1/2\\
1/2&3/2&2&3/2&1/2&0\\
0&1/2&3/2&2&3/2&1/2\\
1/2&0&1/2&3/2&2&3/2\\
3/2&1/2&0&1/2&3/2&2
\end{array}
\right).
\end{equation}
Observe that $M$ is a circulant matrix. More importantly, $M$ factorizes as $M=Z^T \cdot Z$, where
\begin{equation}
Z=\left(
\begin{array}{cccccc}
1&1&1&1&1&1\\
0&\sqrt{3}/2&\sqrt{3}/2&0&-\sqrt{3}/2&-\sqrt{3}/2\\
1&1/2&-1/2&-1&-1/2&1/2
\end{array}
\right).
\end{equation}
Note that this proves that $\rho$ is PPT. The matrix $M$ does not admit a non-negative matrix factorization \cite{Sonneveld}, so we cannot apply the separable decomposition of the $3\otimes 3$ case.

The matrix $M$ has rank $3$ and its kernel is given by three vectors orthogonal to $Z$, which are
\begin{equation}
\left(
\begin{array}{cccccc}
1&0&0&-1&2&-2\\
0&1&0&-2&3&-2\\
0&0&1&-2&2&-1
\end{array}
\right).
\end{equation}

Now we can apply the range criterion to $\rho$. So, if $\rho$ is separable, there will exist $\ket{\psi}=\ket{\zeta}_A\ket{\zeta}_B$ in the range of $\rho$ such that $\ket{\psi^c} = \ket{\zeta}_A\ket{\zeta^*}_B$ is in the range of $\rho^\Gamma$ (I assume the same vector $\ket{\zeta}$ on $A$ and $B$ as $\rho$ acts on the symmetric space). As a vector belongs to the range iff it is orthogonal to the kernel, the range criterion implies that, if $\rho$ is separable, then the system of equations imposed by $\ket{\psi}\bot \ker \rho$ and $\ket{\psi^c} \bot \ker{\rho^\Gamma}$ has a non-trivial solution.

Let us parametrize $\ket{\zeta} = \sum_{0\leq i < 6} z_i \ket{i}$. Then we have that
\begin{equation}
\ket{\psi}= \sum_{0\leq i , j < 6} z_i z_j \ket{ij}
\end{equation}
and
\begin{equation}
\ket{\psi^c}= \sum_{0\leq i , j < 6} z_i z_j^* \ket{ij}.
\end{equation}

The kernel of $\rho$ is spanned by $\ket{D_{03}}, \ket{D_{14}}$ and $\ket{D_{25}}$, because $p_{i,i+3}=0$.
This gives the equations
\begin{equation}
z_0z_3 = z_1 z_4 = z_2 z_5 = 0.
\end{equation}
The kernel of $\rho^\Gamma$ is given by the vectors $\ket{i,i+3}$ and $\ket{i+3,3}$, and the vectors in the kernel of $M$, in the appropriate basis. The first ones introduce redundant equations
\begin{equation}
z_0z_3^* = z_1 z_4^* = z_2 z_5^* = 0.
\end{equation}

So all the important information comes from the kernel of $M$. This means that
\begin{eqnarray}
(\bra{00} - \bra{33} +2 \bra{44} -2 \bra{55})\ket{\psi^c}=0\nonumber\\
(\bra{11} - 2\bra{33} +3 \bra{44} -2 \bra{55})\ket{\psi^c}=0\nonumber\\
(\bra{22} - 2\bra{33} +2 \bra{44} - \bra{55})\ket{\psi^c}=0\nonumber\\
\end{eqnarray}

It follows that the above system can be compacted as
\begin{equation}
\left(
\begin{array}{c}
|z_0|^2\\|z_1|^2\\|z_2|^2
\end{array}
\right) = \left(
\begin{array}{ccc}
1&-2&2\\2&-3&2\\2&-2&1
\end{array}
\right) \left(
\begin{array}{c}
|z_3|^2\\|z_4|^2\\|z_5|^2
\end{array}\right).
\label{eq:3x3}
\end{equation}

There are a few cases to consider now, to include the conditions $z_0z_3=z_1z_4=z_2z_5=0$:
\begin{itemize}
\item If $z_3=z_4=z_5=0$, then the above system implies $z_0=z_1=z_2=0$.
\item Conversely, as the $3\times 3$ matrix in Eq. (\ref{eq:3x3}) is invertible (actually, it is its own inverse), if $z_0=z_1=z_2=0$, then the above system implies  $z_3=z_4=z_5=0$.
\item If two of the numbers in $\{z_3,z_4,z_5\}$ are zero (then one number in $\{z_0,z_1,z_2\}$ is also zero) we have that the system becomes of the following form (for instance, for the case $z_0=z_4=z_5=0$)
\begin{equation}
\left(
\begin{array}{c}
0\\|z_1|^2\\|z_2|^2
\end{array}
\right) = |z_3|^2\left(
\begin{array}{c}
1\\2\\2
\end{array}
\right).
\end{equation}
This also implies that $z_3=0$, which implies that $z_1=z_2=0$.
\item The remaining case is that one of the numbers in $\{z_3,z_4,z_5\}$ are zero and two of the numbers of $\{z_0,z_1,z_2\}$ are zero. By inverting Eq. (\ref{eq:3x3}), this reduces to the previous case.
\end{itemize}
Hence, the only solution to the above system of equations is that
$z_0=z_1=z_2=z_3=z_4=z_5=0$. This does not give a valid quantum state. Hence, there does not exist a quantum state $\psi$ with the properties required by the range criterion. Consequently, $\rho$ is entangled.

\section{Proofs and examples for Section \ref{sec:sufficient}}

\subsection{Example of an entangled PPTDS state for $d=5$.}
\label{ex:5x5second}
Example for $d=5$ would be \cite{berman2003completely}
\begin{equation}
\hat{M}(\rho)=\left(
\begin{array}{ccccc}
1&1&0&0&1\\
1&2&1&0&0\\
0&1&2&1&0\\
0&0&1&2&1\\
1&0&0&1&6
\end{array}
\right).
\end{equation}
In this case, $\ker {\hat{M}(\rho)} = \{\vec{0}\}$, so to apply the Range criterion one needs to subtract some rank-$1$ projectors, which are $3/16 \vec{v_1}\vec{v_1}^T + 1/16 \vec{v_2}\vec{v_2}^T$, where $\vec{v_1}^T = (1,0,0,0,1)$ and $\vec{v_2}=(1,0,0,0,9)$.\\
Equivalently, we can use the Horn matrix as an EW to certify entanglement $\mathrm{Tr}(H (\hat{M}(\rho)-3/16 \vec{v_1}\vec{v_1}^T - 1/16 \vec{v_2}\vec{v_2}^T))=-1< 0$.

\subsection{Proof of Theorem \ref{thm:Anna}}
\label{proof:Annathm}
\begin{proof}
Let us rewrite $\rho = \tilde{\rho} + \varepsilon I$, where $\tilde{\rho} = \rho - \varepsilon I$.
Let us make the following observations:
\begin{enumerate}
\item $\tilde{\rho}$ is a legitimate DS matrix: $\tilde{\rho}_{ij} \geq 0$. This comes from the fact that $\tilde{\rho}_{ij} = \rho_{ij} - \varepsilon$ and the first hypothesis is precisely $\rho_{ij} - \varepsilon \geq 0$.
\item $\tilde{\rho}$ is PPT. Since $\tilde{\rho}_{ij} \geq 0$, the only remaining condition to prove is that $\tilde{M}(\rho) \succeq 0$. Note that $\tilde{M}(\rho) = M(\rho) - d \varepsilon \ket{u} \bra{u}$. We want to prove that, for any vector $\ket{v}$, we have that $\bra{v}(M - \varepsilon d \ket{u}\bra{u})\ket{v} \geq 0$. Note that, since $\ket{u} \in {\cal R}(M(\rho))$, there exists $\ket{\Psi}$ such that $\ket{u} = M(\rho)\ket{\Psi}$. Therefore, we can write $\braket{v}{u} \braket{u}{v} = |\bra{v} \sqrt{M(\rho)} \frac{1}{\sqrt{M(\rho)}}\ket{u}|^2$ and, by virtue of Cauchy-Schwarz inequality, $\braket{v}{u} \braket{u}{v} \leq \bra{v}M(\rho)\ket{v}\bra{u}\frac{1}{M(\rho)}\ket{u}$. Note that the positive semi-definiteness of $M(\rho)$ allows us to pick a square root such that $\sqrt{M(\rho)}\succeq 0$. Thus, we have that
\begin{equation}
\bra{v}M(\rho)\ket{v} \geq \braket{v}{u} \braket{u}{v} (\bra{u}\frac{1}{M(\rho)}\ket{u})^{-1} \geq \braket{v}{u} \braket{u}{v} d \varepsilon,
\end{equation}
which yields the result $\bra{v} \tilde{M}(\rho)\ket{v} \geq 0$ for all $\ket{v}$.

\item A sufficient condition for a real symmetric non-negative matrix to be completely positive is that it is diagonally dominant \cite{kaykobad1987nonnegative}. Therefore, if we prove that $\tilde{M}(\rho)$ is diagonally dominant, the associated $\tilde{\rho}$ will be a DS separable state. This is guaranteed by the third hypothesis:
$$\tilde{\rho}_{ii} = \rho_{ii} - \varepsilon \geq \sum_{j \neq i} \rho_{ji} - (d-1) \varepsilon = \sum_{j \neq i} \tilde{\rho}_{ji}.$$
Hence, $\rho$ is separable.
\end{enumerate}
\end{proof}

\subsection{An example for Theorem \ref{thm:Anna}}
\label{ex:Albert1}
Let us construct an example to show that ${\cal CP}_d \setminus {\cal DD}_d \neq \emptyset$ and then show how to guarantee separability using Theorem \ref{thm:Anna}.

Consider the case of a DS state $\rho$ that takes the form
\begin{equation}
	\rho = \sum_{i=0}^{d-1} \alpha \ket{ii}\bra{ii} + \sum_{0 \leq i < j < d} 2 \beta \ket{D_{ij}}\bra{D_{ij}}
	\label{eq:96}
\end{equation}
and coefficients $\alpha, \beta \in \mathbb{R}_{\geq 0}$. Normalization imposes the constraint $d \alpha + d(d-1)\beta=1$.\\
Let us choose $\alpha$ and $\beta$ in such a way that $M(\rho)$ lies in the line segment between $M(I)$ and $M(\tilde{\rho})$; \textit{i.e.}, it is a convex combination of the following form:
\begin{equation}\label{eq:example}
M(\rho)=\lambda M(\tilde{\rho})+(1-\lambda)M(I) \text{ for } 0\leq \lambda\leq 1,
\end{equation}
where $M(I)$ is an extremal of $\mathcal{CP}_d$, with the rank-1 state $I$ defined as in Lemma \ref{lem:I}, and $M(\tilde{\rho})$ has the same form as $M(\rho)$ but with coefficients $\tilde{\alpha},\tilde{\beta}$ chosen as $\tilde{\alpha}=(d-1)\tilde{\beta}$ which corresponds to the limit where $M(\rho)$ becomes $\mathcal{DD}_d$. Together with the normalization constraint, one obtains $\tilde{\alpha}=(2d)^{-1}$ and $\tilde{\beta}=(2d(d-1))^{-1}$. \\ 

Therefore, by construction, any choice of $\lambda\in \left[0,1 \right)$ will yield a state with associated $M(\rho)$ being  $\mathcal{CP}_d\setminus\mathcal{DD}_d$. \\

Now let's proceed to show how to guarantee separability of $\rho$ by virtue of Theorem \ref{thm:Anna} given a two-qudit PPTDS state $\rho$. Take, for instance, \eqref{eq:example} with $\lambda=1/2$ resulting in $M(\rho)=1/2(M(\tilde{\rho})+M(I))$ (which is in $\mathcal{CP}_d\setminus\mathcal{DD}_d$ by construction. To guarantee separability using Theorem \ref{thm:Anna}, we are going to find a decomposition $\rho=\tilde{\rho}+\epsilon I$ showing that there exists an $\varepsilon$ that fulfils the conditions of the theorem (we are going to assume $d\geq 5$).\\
Condition 1 gives an upper bound given by $\varepsilon \leq \min\{\alpha, \beta\}=\alpha=\frac{d+2}{4d^2}$.
Condition 2 gives a more restrictive upper bound
\begin{equation}
\varepsilon \leq \frac{1}{d}(\bra{u} \frac{1}{M(\rho)}\ket{u})^{-1}=\frac{\alpha+(d-1)\beta}{d}=1/d^2,
\end{equation}
where the pseudoinverse can be found via the Sherman-Morrison formula (because of the particular form of Eq. (\ref{eq:96})). Finally condition 3 gives the lower bound
\begin{equation}
\varepsilon\geq\frac{\beta(d-1)-\alpha}{(d-2)}=\frac{d-1}{4d^2(d-2)}.
\end{equation}
Therefore, we have certified the separability of $\rho$ since we can find the desired decomposition $\rho=\tilde{\rho}+\epsilon I$ for all $\varepsilon \in [\frac{d-1}{4d^2(d-2)}, \frac{1}{d^2}]$.

\subsection{Proof of Lemma \ref{lem:X}}
\label{proof:lemmaIx}

\begin{proof}
Let $\ket{e(\vec \varphi)} = \sum_{i=0}^{d-1} \sqrt{x_i/||\mathbf{x}||_1} e^{\mathbbm{i}\varphi_i} \ket{i}$. A separable decomposition of $I_{\mathbf x}$ is given by
\begin{equation}
I_{\mathbf{x}} = \int_{[0, 2\pi]^d} \frac{\mathrm{d}\vec{\varphi}}{(2\pi)^d}(\ket{e(\vec\varphi)}\bra{e(\vec{\varphi})})^{\otimes 2}.
\end{equation}
Indeed, note that
\begin{eqnarray}
 I_{\mathbf x} &=& \sum_{ijkl} \ket{ij}\bra{kl}\int_{[0, 2\pi]^d} \frac{\mathrm{d}\vec{\varphi}}{(2\pi)^d} \frac{\sqrt{x_i x_j x_k x_l}}{||\mathbf{x}||_1^2} e^{{\mathbbm i}(\varphi_i + \varphi_j - \varphi_k - \varphi_l)}\nonumber\\
 &=& \sum_{ijkl}\ket{ij}\bra{kl} \frac{\sqrt{x_i x_j x_k x_l}}{||\mathbf{x}||_1^2} (\delta_{i,k}\delta_{j,l}+ \delta_{i,l}\delta_{j,k} - \delta_{i,j,k,l}),
\end{eqnarray}
where $\delta$ is the Kronecker delta function.
\end{proof}

\subsection{Proof of Theorem \ref{thm:JordisGen}}
\label{proof:JordisGen}

\begin{proof}
We write $\rho = (1-\lambda)\tilde{\rho} + \lambda I_{\mathbf x}$. Therefore, we have that
\begin{equation}
 M(\tilde \rho) = \frac{1}{1-\lambda}(M(\rho) - \lambda M(I_{\mathbf{x}})).
\end{equation}
Our goal is to prove that $M(\tilde \rho)$ is completely positive. Therefore, $M(\rho)$ will also be completely positive and $\rho$ will be a separable quantum state.
\begin{enumerate}
 \item We start by showing that $M(\tilde \rho)$ is non-negative component-wise. Indeed, we have that
 \begin{equation}
  (M(\tilde \rho))_{ij} = \frac{1}{1-\lambda}((M(\rho))_{ij} - \lambda (M(I_{\mathbf x}))_{ij}) \geq \frac{1}{1-\lambda}((M(\rho))_{ij} - \frac{(M(\rho))_{ij}||\mathbf{x}||_1^2}{x_ix_j}(M(I_{\mathbf x}))_{ij}) = 0,
 \end{equation}
because $(M(I_{\mathbf x}))_{ij} = \frac{x_i x_j}{||\mathbf{x}||_1^2}$.
 \item Now we show that $M(\tilde \rho)$ is positive semi-definite. This means that $\bra{v}M(\tilde \rho)\ket{v} \geq 0$ for every $\ket{v}$.
 Since we assume that $1 - \lambda > 0$, it suffices to check that $\bra{v} (M(\rho) - \lambda M(I_{\mathbf{x}})) \ket{v} \geq 0$ holds.
 Recall that $\bra{v}M(I_{\mathbf{x}})\ket{v} = |\braket{v}{u_{\mathbf x}}|^2$. Since $\ket{u_{\mathbf x}} \in {\cal R}(M(\rho))$, it means that $\bra{u_{\mathbf x}} M(\rho) \ket{u_{\mathbf x}} > 0$ and therefore $\bra{u_{\mathbf x}} \frac{1}{M(\rho)} \ket{u_{\mathbf x}} > 0$. Therefore, we can apply the Cauchy-Schwarz inequality to $\bra{v}M(I_{\mathbf{x}})\ket{v}$ and obtain
 \begin{equation}
  \braket{v}{u_{\mathbf x}}\braket{u_{\mathbf x}}{v} = |\bra{v}\sqrt{M(\rho)} \frac{1}{\sqrt{M(\rho)}} \ket{u_{\mathbf{x}}}|^2 \leq \bra{v} M(\rho) \ket{v} \bra{u_{\mathbf x}}\frac{1}{M(\rho)} \ket{u_{\mathbf{x}}}.
 \end{equation}
Note that the positive semi-definiteness of $M(\rho)$ enables us to choose a square root branch of $M(\rho)$ such that $\sqrt{M(\rho)} \succeq 0$.
Therefore, we have that 
\begin{equation}
 \bra{v} (M(\rho) - \lambda M(I_{\mathbf{x}})) \ket{v} \geq \bra{v} M(\rho) \ket{v} (1 - \lambda \bra{u_{\mathbf x}}\frac{1}{M(\rho)} \ket{u_{\mathbf x}}) \geq \bra{v} M(\rho) \ket{v}(1 - 1) = 0.
\end{equation}
  \item At this point we have proved that conditions $1$ and $2$ of the theorem guarantee that $M(\tilde \rho)$ is doubly non-negative. The third condition will guarantee that it is diagonal dominant.
  In order to show
  \begin{equation}
   (M(\tilde \rho))_{ii} - \sum_{j \neq i} (M(\tilde \rho))_{ij} \geq 0
  \end{equation}
  for all $i$, we note that this can be rewritten as
  \begin{equation}
   (M(\rho))_{ii} - \sum_{j\neq i}(M(\rho))_{ij} - \lambda \left(\frac{x_i ^2}{||\mathbf{x}||_1^2} - \sum_{j \neq i}\frac{x_i x_j}{||\mathbf{x}||_1^2}\right) \geq 0.
  \end{equation}
By adding and subtracting $x_i^2$ to the parenthesis, we can rearrange the condition we want to prove as
  \begin{equation}
   (M(\rho))_{ii} - \sum_{j\neq i}(M(\rho))_{ij} - \frac{\lambda x_i}{||\mathbf{x}||_1^2} (2 x_i - ||\mathbf{x}||_1) \geq 0,
  \end{equation}
  which from this form, the result follows immediately from the fact that our assumption is $\lambda x_i (||\mathbf{x}||_1 - 2 x_i) \geq ||\mathbf{x}||_1^2 \left[ \sum_{j \neq i} (M(\rho))_{ij} - (M(\rho))_{ii}\right]$ for all $i$.
\end{enumerate}
Therefore, the conditions of the theorem guarantee that we have a diagonal dominant $M(\tilde \rho)$ that is also doubly non-negative. Since every real symmetric non-negative matrix that is diagonally dominant is completely positive \cite{kaykobad1987nonnegative}, the associated $\tilde \rho$ is a separable DS state. Therefore, $\rho$ can be written as a convex combination of separable states; therefore $\rho$ is separable.

\end{proof}

\section{Proofs and examples for Section \ref{sec:Class}}

\subsection{A four-qubit PPT-entangled symmetric state with a closed analytical form}
\label{ex:4qubits}

Following the procedure described in the beginning of Section \ref{sec:Class}, we have been able to find an example of a four-qubit PPT-entangled symmetric state.
\begin{equation}
\rho = \frac{1}{50\sqrt{7}}\left(
\begin{array}{ccccc}
7\sqrt{7}&&&&\\
&12\sqrt{7}&&&-2\sqrt{15}\\
&&12\sqrt{7}&&\\
&&&12\sqrt{7}&\\
&-2\sqrt{15}&&&7\sqrt{7}
\end{array}
\right).
\end{equation}
This state has the same typical ranks as the ones found numerically in \cite{TuraPRA12, AugusiakPRA12}; \textit{i.e.}, the rank of $\rho$ is $5$, the rank of $\rho^{T_{A}}$ is $7$ (there is one vector in its kernel) and the rank of $\rho^{T_{AB}}$ is $8$ (there is another vector in its kernel).

\subsection{Proof Lemma \ref{lem:blocks}}
\label{proof:blocks}
\begin{proof}
Let us start by recalling that a symmetric diagonal state $\rho$ with diagonal elements $\{\lambda_k'\}_{k=0}^N$ and GHZ coherences $o(\sigma)$ is expressed as follows, when acting on ${\mathbbm C}^{m+1}\otimes {\mathbbm C}^{n-m+1}$:
 \begin{equation}
  \rho_{k,l}^{i,j} = \lambda_{i+j} s_{n,m}(i,j;k,l) \delta_{i+j=k+l} + \sigma(\delta_{(i,j)=(m,n-m)}\delta_{(k,l)=(0,0)}+\delta_{(i,j)=(0,0)}\delta_{(k,l)=(m,n-m)}),
 \end{equation}
 where $\rho^{i,j}_{k,l} = \bra{i,j}\rho \ket{k,l}$ and
 \begin{equation}
 s_{n,m}(i,j;k,l)=\sqrt{{m \choose i}{n-m \choose j}{m \choose k}{n-m \choose l}}.
 \end{equation}
Hence, the partial transposition on $m$ qubits is
 \begin{equation}
(\rho^{\Gamma_m})^{i,j}_{k,l} = \lambda_{i+l} s_{n,m}(i,l;k,j)\delta_{j-i=l-k} + \sigma(\delta_{(i,j)=(0,n-m)}\delta_{(k,l)=(m,0)}+\delta_{(i,j)=(m,0)}\delta_{(k,l)=(0,n-m)}).
 \end{equation}
 The partially transposed state $\rho^{\Gamma_m}$ decomposes into a direct sum of different blocks, which we shall index by the parameter $n$. Note that there are $N+1$ solutions to the equation $n=j-i$ with $0 \leq i \leq m$ and $0 \leq j \leq N-m$. Hence, $n$ ranges from $-m$ to $N-m$. The effect of the coherences with strength $\sigma$ only interferes the $-m$-th and $(N-m)$-th blocks (which would be $1 \times 1$ if $\sigma = 0$) joining them into a $2\times 2$ block. This proves Eq. (\ref{eq:BlocksPT}).
 \end{proof}
 
\subsection{Proof of Lemma \ref{lem:Bnm}}
\label{proof:Bnm}
\begin{proof}
With the assumption that the $\lambda_m$ form a sequence satisfying the above recurrence, we can now easily span the kernel of $B_n^{(m)}$ for $-m+2\leq n \leq N-m-2$:
\begin{equation}
 (\begin{array}{ccccccccc}0 & \ldots & 0 & 1 & c_1& c_0 & 0 & \ldots & 0\end{array})^T \in \ker B_n^{(m)}.
\end{equation}

Hence, $B_n^{(m)}$ has rank at most $2$. Now we prove its rank is $2$ and $\rho$ is PPT. We do this by constructing a Cholesky decomposition (or a Gram decomposition) of $B_n^{(m)}$:

Let us define $I_{a,b}:=f_m\cdot f_{m+a+b} - f_{m+a} \cdot f_{m+b}$. Let us note that $I_{a,b}$ is independent of $m$:
$$I_{a,b} = \frac{(\alpha-1)(\beta-1)}{(\alpha-\beta)^2}\alpha^m \beta^m (\alpha^a - \beta^a)(\alpha^b - \beta^b),$$
but let us observe that $\alpha \beta = 1$. Hence,
$$I_{a,b} = \frac{(\alpha-1)(\beta-1)}{(\alpha-\beta)^2}(\alpha^a - \beta^a)(\alpha^b - \beta^b),$$
Since $Z\geq 0$ by definition, we have that $\alpha \geq \beta > 0$. Hence, $I_{a,b}\geq 0$. Furthermore, $I_{a,b}=\sqrt{I_{a,a} I_{b,b}}$.

It now follows that $B_n^{(m)}$ has rank $2$ whenever $m>0$ and it is positive semidefinite, admitting the following Cholesky decomposition $B_n^{(m)} = L_n^{(m)} \cdot (L_n^{(m)})^T$, where
\begin{equation}
(L_n^{(m)})^T = \frac{1}{\sqrt{f_p}}\left(
\begin{array}{cccccc}
f_{p} & f_{p+1} & f_{p+2} & f_{p+3} & \cdots & f_{p+q}\\
\sqrt{I_{0,0}} & \sqrt{I_{1,1}} & \sqrt{I_{2,2}} & \sqrt{I_{3,3}} & \cdots & \sqrt{I_{q,q}}
\end{array}
\right),
\end{equation}
with $p=2\max\{0,n\}-n$ and $q=2(\min\{m,N-m+n\}-\max\{0,n\})-n$.

We observe that the element on the $a$-th row, $b$-th column of $B_n^{(m)}$ corresponds to
\begin{eqnarray}
(B_n^{(m)})^a_b &=& \frac{1}{f_p}(f_{p+a}f_{p+b} - \sqrt{I_{a,a} I_{b,b}})=\frac{1}{f_p}(f_{p+a} f_{p+b} - I_{a,b})\nonumber\\
&=& \frac{1}{f_p}(f_{p+a}f_{p+k} + f_p f_{p+a+b} - f_{p+a} f_{p+b}) = f_{p+a+b}=f_{a+b-n}.
\end{eqnarray}
Observe that the property that $I_{a,b}$ is independent of $p$ becomes crucial.
\end{proof}

\subsection{Proof of Lemma \ref{lem:trace}}
\label{proof:trace}
\begin{proof}
To calculate the trace of $\rho$, it will be very useful to note the following identity: $f_{p} = f_{-p-1}$. Indeed, one can show that
$$f_p-f_{-p-1} = \frac{(1-\alpha)\alpha^{-1-p}+(\alpha-1)\alpha^p + (\beta-1) \beta^{-1-p} + (1-\beta)\beta^p}{\alpha-\beta}=0,$$
where the last identity easily follows from the property that $\alpha \beta = 1$.
Hence, the trace of $\rho$ reduces to the sum
\begin{equation}
\mathrm{Tr}(\rho) = \sum_{k=0}^{2 K + 1}{2K+1 \choose k} \lambda_k = \sum_{k=0}^{2 K + 1}{2K+1 \choose k} f_{K-k} 
\end{equation}
Let us note the following identity (it easily follows from Newton's binomial and $\beta=\alpha^{-1}$):
\begin{equation}
\sum_{k=0}^{2K+1}{2K+1 \choose k} \alpha^{K-k} = \alpha^K (1+\beta)^{2K+1}.
\end{equation}
This allows us to calculate
\begin{equation}
\mathrm{Tr}(\rho)=\sum_{k=0}^{2K+1}{2K+1 \choose k}\left(\frac{\alpha-1}{\alpha-\beta}\alpha^{K-k} - \frac{\beta-1}{\alpha-\beta}\beta^{K-k}\right) = \frac{\alpha-1}{\alpha-\beta}\alpha^K(1+\beta)^{2K+1} - \frac{\beta-1}{\alpha-\beta}\beta^K(1+\alpha)^{2K+1}.
\end{equation}
Since $\alpha \beta = 1$, we can express the trace of $\rho$ in terms of $\alpha$ and $\alpha^{-1}$:
\begin{equation}
\mathrm{Tr}(\rho) = (1+\alpha^{-1})^{2K}\alpha^K + (\alpha^{-1})^K(1+\alpha)^{2K}= 2\alpha^{-K}(1+\alpha)^{2K} = 2[(1+\alpha)(1+\alpha^{-1})]^K = 2[2+\alpha+\alpha^{-1}]^K.
\end{equation}
Hence, we arrive at the expression we wanted to prove:
\begin{equation}
\mathrm{Tr}(\rho) = 2(2+\alpha+\beta)^K = 2(4+Z)^K.
\end{equation}
\end{proof}

\end{document}